\documentclass[11pt]{article}
\pdfoutput=1
\usepackage[sort&compress,numbers]{natbib}
\usepackage{amsmath,amssymb,keyval,listingsmod}
\usepackage{xspace,url}
\usepackage{prologuemod,ifpdf} 

\usepackage{epsfig}
\usepackage{subfigure}
\usepackage{calc}
\usepackage{amssymb}
\usepackage{amstext}
\usepackage{amsmath}
\usepackage{amsthm}
\usepackage{multicol}
\usepackage{pslatex}

\usepackage{wrapfig}

\newcommand{\op}[1]{{\textsf{#1}}}

\newcommand{\Section}[1]{\section{#1}}
\newcommand{\Subsection}[1]{\noindent {\bf #1}~~~}
\newcommand{\Subsubsection}[1]{\noindent {\em #1}~~~}

\newcommand{\Subsubsubsection}[1]{ \noindent $\bullet$ {\bf #1}~~~~~~}

\newcommand{\Ref}[1]{}

\newtheorem{theorem}{Theorem}[section]
\newtheorem{lemma}[theorem]{Lemma}
\newtheorem{proposition}[theorem]{Proposition}
\newtheorem{definition}[theorem]{Definition}



\sloppy
\raggedbottom

\newcommand{\MAC}{\textmd{MAC}\xspace}

\newcommand{\timeslot}{timeslot\xspace}
\newcommand{\Timeslots}{Timeslots\xspace}
\newcommand{\timeslots}{timeslots\xspace}

\newcommand{\manet}{DynWANs\xspace}

\newcommand{\Manets}{DynWANs\xspace}

\newcommand{\Figure}{Fig.\xspace}
\newcommand{\Aloha}{Slotted ALOHA\xspace}

\newcommand{\dstclr}{\MAC algorithm in \Figure~\ref{f:smp}\xspace}
\newcommand{\dst}{\MAC algorithm\xspace}

\newcommand{\prp}{Self-stabilizing TDMA-based \MAC algorithm\xspace}

\newcommand{\RSONE}{\op{Ready}\xspace}
\newcommand{\RSTWO}{\op{Obtaining}\xspace}
\newcommand{\RSTHREE}{\op{Allocated}\xspace}

\subfigtopskip=0pt
\subfigcapskip=0pt
\subfigbottomskip=0pt

\begin{document}
\begin{titlepage}
\title{Self-Stabilizing TDMA Algorithms\\ for Dynamic Wireless Ad-hoc Networks~\footnote{This work was partially supported by the EC, through project FP7-STREP-288195, KARYON (Kernel-based ARchitecture for safetY-critical cONtrol).}}

\author{Pierre Leone~\footnote{Computer Science Department, University of Geneva, Centre Universitaire d'Informatique, Battelle b\^atiment A, route de Drize 7, 1227 Carouge, Geneva, Switzerland. Email: pierre.leone@unige.ch} \and Elad M.\ Schiller~\footnote{Chalmers University of Technology, R\"{a}nnv\"{a}gen 6B, S-412 96 G\"{o}teborg Sweden. Email: elad.schiller@chalmers.se}
}

\maketitle


\abstract{In dynamic wireless ad-hoc networks (\manet), autonomous computing devices set up a network for the communication needs of the moment. These networks require the implementation of a medium access control (MAC) layer. We consider MAC protocols for \manet that need to be autonomous and robust as well as have high bandwidth utilization, high predictability degree of bandwidth allocation, and low communication delay in the presence of frequent topological changes to the communication network. Recent studies have shown that existing implementations cannot guarantee the necessary satisfaction of these timing requirements. We propose a self-stabilizing MAC algorithm for \manet that guarantees a short convergence period, and by that, it can facilitate the satisfaction of severe timing requirements, such as the above. Besides the contribution in the algorithmic front of research, we expect that our proposal can enable quicker adoption by practitioners and faster deployment of \manet that are subject changes in the network topology.}


\end{titlepage}

\Section{Introduction}
\label{s:int}
Dynamic wireless ad-hoc networks (\manet) are autonomous and self-organizing systems where computing devices require networking applications when a fixed network infrastructure is not available or not preferred to be used. In these cases, computing devices may set up a short-lived network for the communication needs of the moment, also known as, an ad-hoc network. Ad-hoc networks are based on wireless communications that require implementation of a {\em Medium Access Control} (\MAC) layer. We consider MAC protocols for \manet that need to be autonomous, robust, and have high bandwidth utilization, a high predictability degree of bandwidth allocation, and low communication delay~\cite{hartenstein2010vanet} in the presence of frequent changes to  the communication network topology. Existing implementations cannot guarantee the necessary satisfaction of timing requirements~\cite{DBLP:conf/vtc/BilstrupUSB08,katrin2009ability}. This work proposes an algorithmic design for self-stabilizing \MAC protocols that guarantees a short convergence period, and by that, can facilitate the satisfaction of severe timing requirements. The proposed algorithm possesses a greater degree of predictability, while maintaining low communication delays and high throughput.

The dynamic and difficult-to-predict nature of wireless ad-hoc networks gives rise to many fault-tolerance issues and requires efficient solutions. \Manets, for example, are subject to transient faults due to hardware/software temporal malfunctions or short-lived  violations of the assumed settings for modeling the location of the mobile nodes.
Fault tolerant systems that are {\em self-stabilizing}~\cite{D00} can recover after the occurrence of transient faults, which can cause an arbitrary corruption of the system state (so long as the program's code is still intact),
%
or the model of dynamic networks in which communication links and nodes may fail and recover during normal operation~\cite{DBLP:journals/cjtcs/DolevH97}.
The proof of self-stabilization requires convergence from an arbitrary starting system state. Moreover, once the system has converged and followed its specifications, it is required to do so forever.
The self-stabilization design criteria liberate the application designer from dealing with low-level complications, such as bandwidth allocation in the presence of topology changes, and provide an important level of abstraction. Consequently, the application design can easily focus on its task -- and knowledge-driven aspects.

The {\tt IEEE 802.11} standard is widely used for wireless communications. Nonetheless, the research field of \MAC protocols is very active and requires further investigation. In fact, the {\tt IEEE 802.11} amendment, {\tt IEEE 802.11p}, for wireless access in vehicular environments (WAVE), has just being published. It was shown that the standard's existing implementations cannot guarantee channel access before a finite deadline~\cite{DBLP:conf/vtc/BilstrupUSB08,katrin2009ability}. Therefore, applications with severe timing requirements cannot predictably meet their deadlines, e.g., safety-critical applications for vehicular systems.



%
ALOHAnet and its synchronized version \Aloha~\cite{abramson1985da} are pioneering wireless systems that employ a strategy of ``random access''. Time division multiple access (TDMA)~\cite{schmidt1974std} is another early approach, where nodes transmit one after the other, each using its own \timeslot, say, according to a defined schedule. Radio transmission analysis in ad-hoc networks~\cite{Haenggi09} and relocation analysis of mobile nodes~\cite{DBLP:conf/algosensors/LeonePS09} show that there are scenarios in which \MAC algorithms that employ a scheduled access strategy have lower throughput than algorithms that follow the random access strategy. However, the scheduled approach offers greater predictability of bandwidth allocation and communication delay, which can facilitate fairness~\cite{DBLP:conf/algosensors/HermanT04} and energy conservation~\cite{DBLP:conf/infocom/YeHE02}.
%

Our design choices have basic radio technology in mind, whilst aiming at satisfying applications that have severe timing requirements. We consider TDMA frames with fixed number of fixed length \timeslots. The design choice of TDMA frames with fixed-length radio time fits well applications that have severe delay requirements. By avoiding the division of fixed length frames into \timeslots of non-equal length, as in~\cite{DBLP:conf/algosensors/HermanT04,DBLP:conf/wdag/CornejoK10}, we take into consideration the specifications of basic radio technology.

In the context of the above design choices, there are two well-known approaches for dealing with contention (\timeslot exhaustion): (1) employing policies for administering message priority (for meeting timing requirements while maintaining high bandwidth utilization, such as~\cite{DBLP:journals/cn/RomT81}), or (2) adjusting the nodes' individual transmission signal strength or carrier sense threshold~\cite{DBLP:conf/vtc/ScopignoC09}. The former approach is widely accepted and adopted by the IEEE {\tt 802.11p} standard, whereas the latter has only been evaluated via computer simulations. The proposed algorithm facilitates the implementation of both of the above approaches. We consider implementation details of the standard approach in Section~\ref{s:imp}.

For the sake of presentation simplicity, we start by considering a single priority MAC protocol and base the \timeslot allocation on vertex-coloring, before considering multi-priority implementation in Section~\ref{s:imp}. The proposed algorithm allocates \timeslots to a number of nearby transmitters, i.e., a number that is bounded by the TDMA frame size, whereas non-allocated transmitters receive busy channel indications. The analysis considers saturated situations in which the node degree in the message collision graph is smaller than the TDMA frame size. As explained above, this analysis assumption does not restrict the number of concurrent transmitters when implementing the proposed MAC algorithm.

\Subsection{Related work}
%
%
We are not the first to propose a MAC algorithm for \manet that follows the TDMA's scheduled approach.
%
%
%
STDMA~\cite{YB07} and Viqar and Welch~\cite{DBLP:conf/algosensors/ViqarW09} consider GNSS-based scheduling (Global Navigation Satellite System~\cite{DBLP:conf/ntms/ScopignoC09}) according to the nodes' geographical position and their trajectories.
Autonomous systems cannot depend on GNSS services, because they are not always available, or preferred not to be used, due to their cost. Arbitrarily long failure of signal loss can occur in underground parking lots and road tunnels.
%
%
We propose a self-stabilizing TDMA algorithm that does not require GNSS accessibility or knowledge about the node trajectories. Rather it considers an underlying self-stabilizing local pulse synchronization, such as~\cite{DBLP:conf/sss/DaliotDP03,MPSTT12}, which can be used for TDMA alignment, details appear in~\cite{MPSTT12}.

When using collision-detection at the receiving-side~\cite{DBLP:conf/vtc/ScopignoC09,CS09,YB07,TII08,lenoble2009header}, it is up to the receiving-side to notify the sender about collisions via another round of collision-prone transmissions, say, by using FI (frame information) payload fields that includes $T$ entries, where $T$ is the TDMA frame size.
Thus far, the study of FI-based protocols has considered stochastic resolution of message collision via computer network simulation~\cite{YB07,DBLP:conf/vtc/AbrateVS11,DBLP:conf/vtc/ScopignoC10,DBLP:conf/apscc/CozzettiSCB09,TII08,lenoble2009header}.
Simulations are also used for evaluating the heuristics of MS-ALOHA~\cite{DBLP:conf/vtc/ScopignoC09} for dealing with contention (timeslot exhaustion) by adjusting the nodes' individual transmission signal strength and / or carrier sense threshold.
%
We do not consider lengthy frame information (FI) fields, which significantly increase the control information overhead, and yet we provide provable guarantee regarding the convergence time. Further analysis validation of the proposed algorithm via simulations and test-bed implementation can be found in Section~\ref{s:dis}, and respectively, in~\cite{MPSTT12}.

The proposed algorithm does {\em not} consider collision-detection mechanisms that are based on signal processing or hardware support, as in~\cite{DBLP:conf/broadnets/DemirbasH06}. Rather, it employs a variation on a well-known strategy for eventually avoiding concurrent transmissions among neighbors. This strategy allows the sending-side to eventually observe the existence of interfering transmissions. Before sending, the sender waits for a random duration while performing a clear channel assessment using basic radio technology. We assume that, in the presence of a nearby transmission, the radio unit can detect that the energy level has reached a threshold in which the radio unit is expected to succeed in carrier sense locking (details appear in Section~\ref{s:alg}).

Another MAC algorithm that bases its clear channel assessment on carrier sensing of message transmission appears in~\cite{DBLP:conf/wdag/CornejoK10}. The authors focus on fair bandwidth allocation for single-hop-distance broadcasting, but do not consider dynamic networks or self-stabilization.
Both the algorithm in~\cite{DBLP:conf/wdag/CornejoK10} and the proposed one can facilitate unicast functionality with no significant overhead.
By basing the clear channel assessment on carrier sensing, we mitigate the effect of transmission pathologies, such as hidden terminal phenomena, and reduce the need for additional mitigation efforts, such as self-stabilizing two-hop-distance vertex
coloring~\cite{DBLP:conf/sirocco/BlairM09}, equalizing transmission power, coding-based methods~\cite{DBLP:conf/sigcomm/GollakotaK08}, to name a few.


An abstract \MAC layer was specified for \manet in~\cite{DBLP:conf/wdag/KuhnLN09}. The authors mention algorithms that can satisfy their specifications. However, they do not consider predictability.

{\em Local algorithms}~\cite{DBLP:conf/icalp/HalldorssonW09,DBLP:conf/infocom/GoussevskaiaWHW09} considers both theoretical and practical aspects of \MAC algorithms~\cite[][and references therein]{Wattenhofer2010Theory} and the related problem of clock synchronization, see~\cite{Lenzen2010Clock} and references therein. For example, the first partly-asynchronous self-organizing local algorithm for vertex-coloring in wireless ad-hoc networks is presented in~\cite{DBLP:conf/podc/SchneiderW09}. However, this line currently does not consider dynamic networks and predictable bandwidth allocation.

%
%

%
Two examples of self-stabilizing TDMA algorithms are presented in~\cite{DBLP:conf/algosensors/HermanT04,DBLP:conf/icdcit/JhumkaK07}. The algorithms are based on vertex-coloring and the authors consider (non-dynamic) ad-hoc networks. Recomputation and floating output techniques (\cite{D00}, Section 2.8) are used for converting deterministic local algorithms to self-stabilization in~\cite{DBLP:conf/sss/LenzenSW09}. The authors focus on problems that are related to \MAC algorithms. However, deterministic \MAC algorithms are known to be inefficient in their bandwidth allocation when the topology of the communication network can change frequently~\cite{DBLP:conf/algosensors/LeonePS09}. There are several other proposals related to self-stabilizing \MAC algorithms for sensor networks, e.g.,~\cite{kulkarni-ss,DBLP:conf/icdcit/ArumugamK05,arumugam2006self,Telematik_LNWT_2009_SelfWISE}; however, none of them consider dynamic networks and their frame control information is quite extensive.



The MAC algorithms in~\cite{DBLP:conf/algosensors/LeonePS09,DBLP:conf/sss/LeonePSZ10,LPS09APMAC,MPSTT12} {\em have no proof} that they are self-stabilizing. The authors of~\cite{DBLP:conf/algosensors/LeonePS09} present a MAC algorithm that uses convergence from a random starting state (inspired by self-stabilization).
In~\cite{DBLP:conf/sss/LeonePSZ10,LPS09APMAC,MPSTT12}, the authors use computer network simulators for evaluating self-$\star$ MAC algorithms. 

\Subsection{Our contribution}
This work proposes a self-stabilizing MAC algorithm that demonstrates rapid convergence without the extensive use of frame control information. Our analysis shows that the algorithm facilitates the satisfaction of severe timing requirements for \manet.

We start by considering transient faults and topological changes to the communication network, i.e., demonstrating self-stabilization in Theorem~\ref{th:main}. We then turn to focus on bounding the algorithm's convergence time after an arbitrary and unbounded  finite sequence of transient faults and changes to the network topology. Theorem~\ref{th:expS} shows that the expected local convergence time is brief, and bounds it in equation~$(\ref{boundht})$. Theorem~\ref{th:thbounds} formulates the expected global convergence time in equation~$(\ref{boundconv})$. Moreover, for a given probability, the global convergence time is calculated in equation~$(\ref{boundk})$.

For discussion (Section~\ref{s:dis}), we point out the algorithm's ability to facilitate the satisfaction of severe timing requirements for \manet. Moreover, the analysis conclusions explain that when allowing merely a small fraction of the bandwidth to be spent on frame control information and when considering any given probability to converge within a bounded time, the proposed algorithm demonstrates a low dependency degree on the number of nodes in the network (as depicted by \Figure~\ref{fig:bounds} and \Figure~\ref{fig:boundk}).

We note that some of the proof details appear in the Appendix for the sake of presentation simplicity. 

\Section{Preliminaries}
\label{s:pre}
The system consists of a set, $P$, of $N$ anonymous communicating entities, which we call {\em nodes}. Denote every node $p_i \in P$ with a unique index, $i$. 

\Subsection{Synchronization}
Each node has fine-grained, real-time clock hardware.
We assume that the \MAC protocol is invoked periodically by synchronous {\em (common) pulse} that aligns the starting time of the TDMA frame. This can be based, for example, on TDMA alignment algorithms~\cite{MPSTT12}, GPS~\cite{hofmann1993global} or a distributed pulse synchronization algorithm~\cite{DBLP:conf/sss/DaliotDP03}. The term {\em (broadcasting) \timeslot} refers to the period between two consecutive common pulses, $t_x$ and $t_{x+1}$, such that $t_{x+1}=(t_x \bmod T) +1$, where $T$ is a predefined constant named the {\em frame size}. Throughout the paper, we assume that $T \ge 2$. In our pseudo-code, we use the event \op{\timeslot}$(t)$ that is triggered by the common pulse. We assume that the \timeslots are aligned as well.

\Subsection{Communications and interferences}
At any instance of time, the ability of any pair of nodes to communicate is defined by the set, $N_i \subseteq P$, of {\em (direct) neighbors} that node $p_i \in P$ can communicate with directly. Wireless transmissions are subject to interferences (collisions). We consider the potential of nodes to interfere with each other's communications. The interference model in this paper is based on discrete graphs.

The set $\mathcal{N}_i \supseteq N_i$ is the set of nodes that may interfere with $p_i$'s communications when any nonempty subset of them, $I \subseteq \mathcal{N}_i : I \neq \emptyset$, transmit concurrently with $p_i$. We call $\mathcal{N}_i$ the {\em (extended) neighborhood} of node $p_i \in P$ and ${d_i}=|\mathcal{N}_i|$ is named the (extended) degree of node $p_i$. We assume that at any time, for any pair of nodes, $p_i, p_j \in P$ it holds that $p_j \in \mathcal{N}_i$ implies that $p_i \in \mathcal{N}_j$. Given a particular instance of time, we define the {\em (interference) graph} as $G :=(P, E)$, where $E :=\cup_{i \in P} \{ (p_i, p_j) : p_j \in \mathcal{N}_i \}$ represents the interference relationships among nodes.


\Subsection{Communication schemes}
We consider (basic technology of) radio units that raise the event \op{carrier$\_$sense}$()$ when they detect that the received energy levels have reached a threshold in which the radio unit is expected to succeed in carrier sense locking, see~\cite{jamieson2005understanding}.
\Timeslots allow the transmission of \op{DATA} packets using the \op{transmit}$()$ and \op{receive}$()$ primitives after fetching  (\op{MAC}$\_$\op{fetch}$()$) a new packet from the upper layer, and respectively, before delivering (\op{MAC}$\_$\op{deliver}$()$) the packet to the upper layer. A {\em beacon} is a short packet that includes no data load, rather the timing of the event \op{carrier$\_$sense}$()$ is the delivered information~\cite{DBLP:conf/wdag/CornejoK10}. Before the transmission of the \op{DATA} packet in \timeslot $t$, our communication scheme uses beacons for signaling the node's intention to transmit a \op{DATA} packet within $t$, see \Figure~\ref{f:frm}. 


\begin{figure}[t]
\begin{flushright}
\begin{minipage}{0.95\textwidth}
\ifx\pdftexversion\undefined
\includegraphics[viewport=0 0 722 142,clip,scale=0.85]{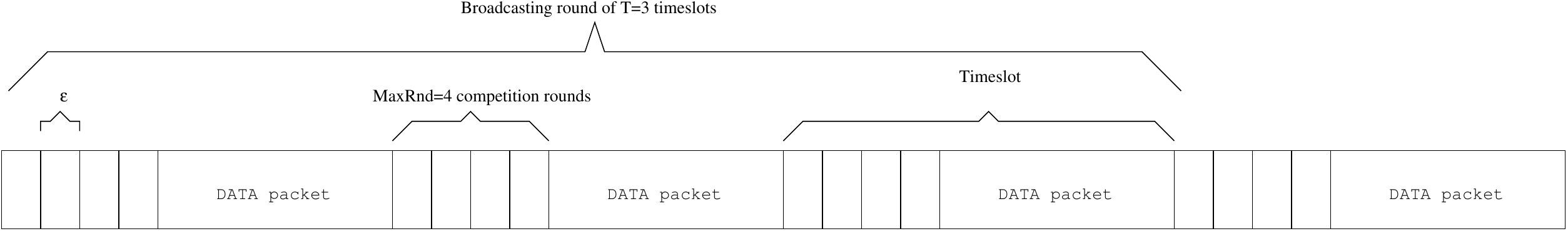}
\else
{\includegraphics[viewport=0 0 380 70,clip,scale=0.85]{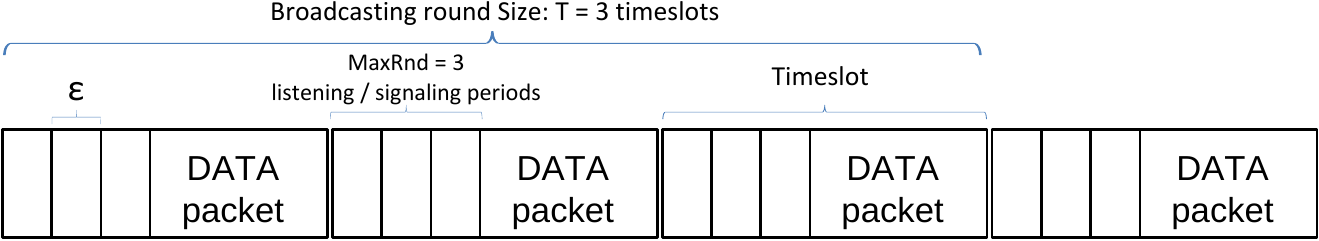}}
\fi
\end{minipage}
\end{flushright}
\caption{An example of TDMA frame, with three \timeslots and three listening/signaling periods of size $\epsilon$ (signal exposure time). Each \timeslot has a constant number, $MaxRnd=4$, of {\em listening/signaling periods} in which beacons can be transmitted. The duration of each listening/signaling period is $\epsilon$ (signal exposure time); the period during which a beacon that is sent by node $p_i \in P$ is transmitted and raises the ca received by all neighbors $p_j \in \mathcal{N}_i$. Namely, the period between $p_i$'s transmission and $p_j$'s rise of the \op{carrier$\_$sense}$()$ event.
\label{f:frm}}
\end{figure}

The interference model in this paper is based on discrete graphs and the clear channel assessment is based on carrier sensing of message transmission. By basing the clear channel assessment on carrier sensing, we mitigate the effect of transmission pathologies, such as hidden terminal phenomena, and reduce the need for additional mitigation efforts, such as self-stabilizing two-hop-distance vertex coloring~\cite{DBLP:conf/sirocco/BlairM09}, equalizing transmission power, coding-based methods~\cite{DBLP:conf/sigcomm/GollakotaK08}, to name a few. We note that we are not the first to assume that every node $p_i \in P$ that invokes the operation \op{transmit}$()$ causes the event \op{carrier$\_$sense}$()$ to be raised by its neighbors, $p_j \in \mathcal{N}_i$, within the exposure time, $\epsilon$~\cite{DBLP:conf/wdag/CornejoK10}.


\Subsection{System settings}
\label{s:sys}
%
We consider the interleaving model~\cite{D00}. Every node, $p_i \in P$, executes a program that is a sequence of {\em atomic steps}. The {\em state} $st_i$ of a node $p_i$ consists of the value of all the variables of the node (including messages in transit for $p_i$). Variables are associated with individual node states by using the subscript notation, i.e., $var_i$ is the value of variable $var$ in $p_i$'s state. The term {\em configuration} is used for a tuple of the form $(G, \{ st_i \}_{i=1}^{{N}})$, where $G$ is the (interference) graph, and $\{ st_i \}_{i=1}^{{N}}$ are the nodes' states (including the set of all incoming communications). An {\em execution} (run) $R :=(c(0),c(1),\ldots)$ is an unbounded sequence of system configurations $c(x)$, such that each configuration $c(x+1)$ (except the initial configuration $c(0)$) is obtained from the preceding configuration $c(x)$ by the execution of steps, $\{ a_i(x) \}_{p_i \in P}$, taken by all nodes.

Let $\tau$ (task) be a specification set and $LE$ a set of all executions that satisfy task $\tau$. We consider TDMA-based \MAC protocols for which the task $\tau_{_\mathrm{TDMA}}$ considers requirements in which every node has its own broadcasting \timeslot that is unique within its neighborhood. We note that $\tau_{_\mathrm{TDMA}}$'s requirements are obviously satisfiable when the ratio between the extended degree and the frame size is less than one, i.e., there is no \timeslot exhaustion when $\forall p_i \in P : 1 \lneq T/d_i$. The system deals with \timeslot exhaustion by delivering busy channel indications, $\bot$, to the nodes for which there was no \timeslot left. We define $LE_{_\mathrm{TDMA}}$ to be the set of legal executions, $R$, for which $\forall p_i \in P : (((s_i \in [0, T-1]) \wedge (p_j \in {\cal N}_i)) \Rightarrow s_i \neq s_j) \vee (s_i = \bot \Rightarrow \forall t \in [0, T-1]~ \exists p_j \in {\cal N}_i : s_j = t)$ in all of $R$'s configurations.

We say that configuration $c_{safe}$ is {\em safe} if there is an execution $R \in LE$, such that $c_{safe}$ is $R$'s starting configuration. Let $R$ be an execution and $c \in R$ its arbitrary starting configuration. We say that $R$ {\em converges} with respect to $\tau$ if within a bounded number of steps from $c$, the system reaches a {\em safe} configuration $c_{safe}$. The {\em closure} property requires that for any execution, $R$, that starts form $c_{safe}$ implies that $R \in LE$. An algorithm is said to be {\em self-stabilizing} if it satisfies both the convergence and the closure properties.

We describe execution $R$ as an unbounded number of concatenated finite sequences of configurations. The finite sequences, $R(x)=(c{^{_{\op{}}}_{^{0}}}(x),\ldots$ $c{^{_{\op{}}}_{^{T-1}}}(x))$, $x>0$, is a {\em broadcasting round} if (1) configuration $c{^{_{\op{}}}_{^{0}}}(x)$ has a clock value, $t$, of $0$ and
immediately follows a configuration in which the clock value is $T-1$, and (2) configuration $c{^{_{\op{}}}_{^{T-1}}}(x)$ has a clock value of $T-1$ and
immediately precedes a configuration in which the clock value is $0$.


\begin{figure*}[t!]
\begin{flushleft}
\begin{minipage}[b]{0.99\linewidth}
\begin{center}
\fbox{\hspace*{+10pt}{\lstinputlisting[language=ioaNums]{cns.ioa}}}
\end{center}
\end{minipage}
\end{flushleft}
\caption[]{\prp, code of node $p_i$.}\label{f:smp}
\end{figure*}

\Section{Algorithm Description}
\label{s:alg}
The proposed MAC algorithm periodically performs clear channel assessment and uses this assessment when informing each node about the nearby unused \timeslots. The nodes use this information for selecting their broadcasting \timeslots, assess the success of their  broadcasts and reselecting \timeslots when needed.

The \dstclr satisfies the $\tau_{_\mathrm{TDMA}}$ task. 
During the convergence period several nodes can be assigned to the same \timeslot. Namely, we may have $p_i \in P : p_j \in {\cal N}_i \wedge s_i = s_j$. The algorithm solves such \timeslot allocation conflicts by letting the node $p_i$ and $p_j$ to go through a (listening/signaling) competition before transmitting in its broadcasting \timeslot. The competition rules require each node to choose one out of $MaxRnd$ listening/signaling period for its broadcasting \timeslot, see \Figure~\ref{f:frm}. This implies that among all the nodes that attempt to broadcast in the same \timeslot, the ones that select the earliest listening/signaling period win this broadcasting \timeslot and access the communication media. Before the winners access their \timeslots, they signal to their neighbors that they won via {beacon} transmission. The signal is sent during their choice of listening/signaling periods, see \Figure~\ref{f:frm}. When a node receives a {beacon}, it does not transmit during that \timeslot, because it lost this (listening/signaling) competition. Instead, it randomly selects another broadcasting \timeslot and competes for it on the next broadcasting round.

In detail, the \dstclr is invoked at the start of every \timeslot, $t$. When $t$ is the first \timeslot, the algorithm tries to allocate the broadcasting \timeslot, $s_i$, to $p_i$ (line~\ref{l:slcemp}) by randomly selecting a \timeslot for which there is no indication to be used by its neighbors. Later, when the \timeslot $t$ becomes $p_i$'s broadcasting \timeslot, $s_i$, the node attempts to broadcast (by calling the function \op{send}$()$ in line~\ref{l:fetch}). We note that the start of \timeslot $t$ also requires the marking of $t$ as an unused \timeslot and the removal of stale information (line~\ref{l:outdated}). This indication is changed when the $\op{carrier$\_$sense}(t)$ event is raised (line~\ref{l:emptr}) due to a neighbor transmission. Namely, when the detected energy levels reach a threshold in which the radio unit is expected to succeed in carrier sense locking, see~\cite{jamieson2005understanding}.

When a node attempts to broadcast it uses the (listening/signaling) competition mechanism for deciding when to signal to its neighbors that it is about to transmit a \op{DATA} packet. The competition has $MaxRnd$ rounds and it stops as soon as the node transmits a {beacon} or a neighbor succeeds in signaling earlier (lines~\ref{l:for} to~\ref{l:transmit}). We note that this signaling is handle by the $\op{carrier$\_$sense}(t)$ event (line~\ref{l:emptr}). Moreover, {beacons} are not required to carry payloads or any other information that is normally stored in packet headers. They are rather used to invoke the carrier sense event in ${\cal N}_i$.

The carrier sense in \timeslot $t$ indicates to each node that it needs to defer from transmission during $t$ (line~\ref{l:remotea}). In particular, it should stop using \timeslot $t$ for broadcasting, stop competing and mark $t$ as a used \timeslot. Lastly, arriving \op{DATA} packets are delivered to the upper layer (line~\ref{l:receive}).

\Section{Correctness Proof: Outline and Notation}
\label{s:out}
%
%
%
The proof starts by considering networks that do not change their topology and for which the ratio between the extended degree of node and the frame size is less than one, i.e., $\forall p_i \in P : 1 \lneq T/d_i $. For these settings, we show that the \dstclr is self-stabilizing with respect to task $\tau_{_\mathrm{TDMA}}$ (sections~\ref{s:invarI} to~\ref{s:invarII} of the Appendix), before considering the converge time within a single neighborhood (Section~\ref{s:markov}) and the entire neighborhood (Section~\ref{s:global}). These convergence estimations facilitate the exploration of important properties, such as predictability, and dealing with changes in the network topology of \manet (Section~\ref{s:dis}).

\Subsection{Proof outline}
The exposition of the proof outline refers to Definition~\ref{def:allocated}, which delineates the different states at which a node can be in relation to its neighbors. Definition~\ref{def:allocated} groups these states into three categories of {\em relative states}: (1) \RSONE to be allocated, when the node state depicts correctly its neighbor states, (2) \RSTWO a \timeslot, when the node is competing for one, but there is no agreement with its neighbor states, and (3) \RSTHREE to a \timeslot, when the node is the only one to be allocated to a particular \timeslot in its neighborhood. The correctness proof shows that the \dstclr implements $\tau_{_\mathrm{TDMA}}$ in a self-stabilizing manner by showing that eventually all nodes are allocated with \timeslots, i.e., all nodes are in the relative state \RSTHREE, see Definition~\ref{def:allocated}.

Let $R$ be an execution of the \dstclr and $R(x)$ is the $x$-th complete broadcasting round of $R$, where $x>0$ is an integer. We simplify the presentation by using uppercase notation for the configurations, $c{^{_{\op{name}}}_{^{t}}}(x)$, where $t \in [0, T-1]$ is a \timeslot. This notation includes the $name$ of the first event to be triggered immediately after configuration $c$, i.e., $R(x)=(c{^{_{\op{timeslot}}}_{^{0}}}(x), \ldots$ $c{^{_{\op{carrier\_sense/receive}}}_{^{T-1}}}(x))$.

\begin{definition}
\label{def:allocated}
We say that node $p_i \in P$ is {\em \RSONE (to be allocated)} to a \timeslot in configuration $c{^{_{\op{timeslot}}}_{^{0}}}(x)$, if properties~$(\ref{eq:invarIa})$,~$(\ref{eq:invarIb})$ and~$(\ref{eq:invarIIai})$ hold for node $p_i$ but Property~$(\ref{eq:invarIIa})$ does not.
We say that $p_i$ is {\em \RSTWO} \timeslot $s_i$ in configuration $c{^{_{\op{timeslot}}}_{^{0}}}(x)$, if properties~$(\ref{eq:invarIa})$ to~$(\ref{eq:invarIIa})$ hold for node $p_i$, but Property~$(\ref{eq:invarIIb})$ does not.
We say that node $p_i \in P$ is in {\em \RSTHREE} state, with respect to \timeslot $s_i$ in configuration $c{^{_{\op{timeslot}}}_{^{0}}}(x)$, if properties~$(\ref{eq:invarIa})$ to~$(\ref{eq:invarIIb})$ hold for node $p_i$.

\begin{equation}
      \label{eq:invarIa}
      signal_i=\op{false}
\end{equation}
\begin{equation}
    \label{eq:invarIb}
    ( t \in unused_i \wedge t \neq s_i ) \leftrightarrow (\forall p_k \in {\cal N}_i : s_k \neq t)
\end{equation}
\begin{equation}
      \label{eq:invarIIai}
  s_i \neq \bot \vee unused\_set_i \setminus \{ s_i \} \neq  \emptyset
\end{equation}
\begin{equation}
      \label{eq:invarIIa}
  s_i  \neq  \bot
\end{equation}
\begin{equation}
      \label{eq:invarIIb}
       \forall p_j \in {\cal N}_i : ((s_i \neq s_j) \wedge (unused_j[s_i]=\op{false}))
\end{equation}
\end{definition}

Property~$(\ref{eq:invarIa})$ implies that node $p_i$ finishes any broadcast attempts within a \timeslot. Properties~$(\ref{eq:invarIb})$ to~$(\ref{eq:invarIIai})$ consider the case in which $p_i$'s internal state represents correctly the \timeslot allocation in its neighborhood. In particular, property~$(\ref{eq:invarIb})$ means that processor $p_i$ views timeslot $t$ as an unused one if, and only if, it is indeed unused. Property~$(\ref{eq:invarIIai})$ implies that when node $p_i$ is not using any \timeslot, there is an unused \timeslot at its disposal. Property~$(\ref{eq:invarIIa})$ says that node $p_i$ is using \timeslot $s_i$. Property~$(\ref{eq:invarIIb})$ refers to situations in which $p_i$'s neighbors are not using $p_i$'s \timeslot during the next broadcasting round.

Starting from an arbitrary configuration, we show that node $p_i$ becomes \RSONE within two broadcasting rounds (or one complete broadcasting round), see Section~\ref{s:invarI} of the Appendix. Then, we consider the probability, $OnlyOne_i(x)$, that a node enters the relative state \RSTHREE from either \RSONE or \RSTWO, see equation~$(\ref{equ:qi})$ (and sections~\ref{s:invarII} and~\ref{s:compact} of the Appendix). Namely, equation~$(\ref{equ:qi})$ considers the probability that node $p_i$ is the {\em only one} to use its broadcasting timeslot in its neighborhood, where ${{\rho}_k}=1/MaxRnd=1/n$ is $p_i$'s probability to selects the $k$-th listening/signaling period for transmitting its {beacon}.

\begin{equation}
\label{equ:qi}
\boxed{OnlyOne_i({x})\ge\sum_{k=1}^n {\rho}_k \left( 1- \sum^{k}_{\ell=1} {\rho}_k \right)^{\frac{d_i}{T}}}
\end{equation}

\Subsubsection{Demonstration of self-stabilization}
%


\begin{theorem}
\label{th:main}
The \dstclr is self-stabilizing with respect to the task $\tau_{_\mathrm{TDMA}}$.
\end{theorem}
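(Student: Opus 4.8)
The plan is to establish the two defining ingredients of self-stabilization --- closure and convergence --- separately, using the relative-state vocabulary of Definition~\ref{def:allocated} as the bridge to $\tau_{_\mathrm{TDMA}}$. First I would identify the safe configurations as precisely those $c{^{_{\op{timeslot}}}_{^{0}}}(x)$ in which every node $p_i \in P$ is in the \RSTHREE state. The point of this identification is that for an \RSTHREE node Property~$(\ref{eq:invarIIb})$ guarantees $s_i \neq s_j$ for every $p_j \in {\cal N}_i$; under the standing assumption $\forall p_i \in P : 1 \lneq T/d_i$ (no \timeslot exhaustion) this forces $s_i \in [0,T-1]$, so the first disjunct of the $LE_{_\mathrm{TDMA}}$ predicate holds simultaneously for all nodes. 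Hence an all-\RSTHREE configuration lies on a legal execution and is therefore safe.

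For closure I would show that the set of all-\RSTHREE configurations is invariant under the \dstclr. The content here is that in such a configuration no step of the algorithm can force a node off its \timeslot: because no two neighbors share a \timeslot, each \RSTHREE node wins its (listening/signaling) competition uncontested, raises \op{carrier$\_$sense}$()$ in its neighborhood only during its own slot, and resets $signal_i$ to \op{false} at the slot boundary, so properties~$(\ref{eq:invarIa})$ through~$(\ref{eq:invarIIb})$ are re-established at the start of the next broadcasting round. Consequently every execution issuing from a safe configuration stays in $LE_{_\mathrm{TDMA}}$ forever.

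For convergence I would argue in two phases. The deterministic cleanup phase invokes the results of Section~\ref{s:invarI}: starting from an arbitrary configuration, within at most two broadcasting rounds every node becomes \RSONE, because line~\ref{l:outdated} flushes stale \timeslot markings, the carrier-sense bookkeeping re-synchronizes $unused_i$ with the actual neighborhood allocation (restoring Property~$(\ref{eq:invarIb})$), and $signal_i$ is cleared (Property~$(\ref{eq:invarIa})$). The probabilistic progress phase then uses equation~$(\ref{equ:qi})$: once a node is \RSONE or \RSTWO and a free slot is available --- guaranteed by Property~$(\ref{eq:invarIIai})$ together with $d_i/T<1$ --- it advances to \RSTHREE in the current round with probability at least $OnlyOne_i(x)$, which is bounded strictly away from $0$ since $\rho_k=1/MaxRnd$ and the exponent $d_i/T$ is bounded. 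Combining this with a monotonicity sub-lemma --- an \RSTHREE node remains \RSTHREE --- the number of \RSTHREE nodes is non-decreasing and strictly increases with positive probability in each round until it reaches $N$; therefore an all-\RSTHREE, hence safe, configuration is reached with probability $1$ and in finite expected time.

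The main obstacle I expect is the monotonicity sub-lemma in the presence of concurrent competitions. I must rule out that a node already allocated to $s_i$ is dislodged while several of its still-competing neighbors resolve their own conflicts within the same broadcasting round. This rests on two facts drawn from the invariants and the competition rule: a newly competing neighbor selects only a genuinely unused \timeslot (Property~$(\ref{eq:invarIb})$), so it never targets an occupied $s_i$; and the ``earliest listening/signaling period wins'' rule, together with the carrier-sense signaling, ensures that an uncontested owner always signals first and is never silenced. Making this airtight when many neighbors act simultaneously, so that the relative states of several nodes change within one round, is the delicate part, and is where the detailed case analysis of the competition carried out in the later sections of the Appendix is needed.
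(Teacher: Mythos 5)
Your proposal is correct and follows essentially the same route as the paper's proof: the same closure/convergence decomposition over the relative states of Definition~\ref{def:allocated}, with the deterministic cleanup handled by Propositions~\ref{l:invarI}, \ref{l:invarI1} and~\ref{l:invarIIb} and the probabilistic progress by Propositions~\ref{l:invarI2}, \ref{l:invarII1} and~\ref{l:invarII2}. Your explicit monotonicity sub-lemma (an \RSTHREE node is never dislodged) is precisely the content of Proposition~\ref{l:invarII2} combined with Property~$(\ref{eq:invarIIb})$, which the paper uses for closure and, implicitly through the scheduler-luck game analysis it cites, for concluding that the expected convergence time is finite.
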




\begin{proof}

The proofs of the propositions of this theorem appear in sections~\ref{s:invarI} and~\ref{s:invarII} of the Appendix.

\Subsubsubsection{Convergence}
We need to show that properties~$(\ref{eq:invarIa})$ to~$(\ref{eq:invarIIb})$ eventually hold in configuration $c{^{_{\op{timeslot}}}_{^{0}}}(x+y)$ for a finite value of $y>0$. Propositions~\ref{l:invarI},~\ref{l:invarI1} and~\ref{l:invarIIb} imply that properties~$(\ref{eq:invarIa})$,~$(\ref{eq:invarIb})$, and respectively,~$(\ref{eq:invarIIai})$ within two broadcasting round.

Propositions~\ref{l:invarI2},~\ref{l:invarII1} and~\ref{l:invarII2} show that there is a nonzero probability that node $p_i$ enters the relative state \RSTHREE from either \RSONE or \RSTWO within one broadcasting round. Thus, by the analyzing the expected time of the scheduler-luck games~\cite{D00,DBLP:journals/tse/DolevIM95}, we have $y$ has a finite value. Further analysis  of $y$ appears in theorems~\ref{th:expS} and~\ref{th:thbounds}.

\Subsubsubsection{Closure}
Suppose that $c{^{_{\op{timeslot}}}_{^{0}}}(x) \in R$ is a safe configuration and let $p_i \in P$ be any node. By the assumption that $c{^{_{\op{timeslot}}}_{^{0}}}(x)$, we have that $p_i$ is in the relative state \RSTHREE, i.e., properties~$(\ref{eq:invarIa})$ to~$(\ref{eq:invarIIb})$ hold for any node $p_i$. We need to show that properties~$(\ref{eq:invarIa})$ to~$(\ref{eq:invarIIb})$ holds in configuration $c{^{_{\op{timeslot}}}_{^{0}}}(x+1)$.

Propositions~\ref{l:invarI},~\ref{l:invarI1} and~\ref{l:invarIIb} imply that properties~$(\ref{eq:invarIa})$,~$(\ref{eq:invarIb})$, and respectively,~$(\ref{eq:invarIIai})$ (within one complete broadcasting round).

Properties~$(\ref{eq:invarIIa})$ to~$(\ref{eq:invarIIb})$ are implied by Proposition~\ref{l:invarII2} and the fact that Properties~$(\ref{eq:invarIIa})$ to~$(\ref{eq:invarIIb})$ holds in $c{^{_{\op{timeslot}}}_{^{0}}}(x)$, i.e., $M(x)=\emptyset$.
\end{proof}


\Subsubsection{Bounding the convergence time}
%
%
We bound the time it takes the \dstclr to converge by considering the relative states, \RSONE, \RSTWO, and \RSTHREE, and describe a state machine of a Markovian process. This process is used for bounding the convergence time of a single node (Section~\ref{s:markov}), and the entire network (Section~\ref{s:global}).

\begin{figure*}[t!]

\begin{center}
\fbox{
\begin{minipage}[b]{0.95\textwidth}
%
\FFF
\begin{wrapfigure}{l}{0.18\textwidth}\vspace*{-0.750cm}{\hspace*{-0.0cm}\vspace*{-0.0cm}\includegraphics[scale=0.26, clip, viewport=113 9 363 511]{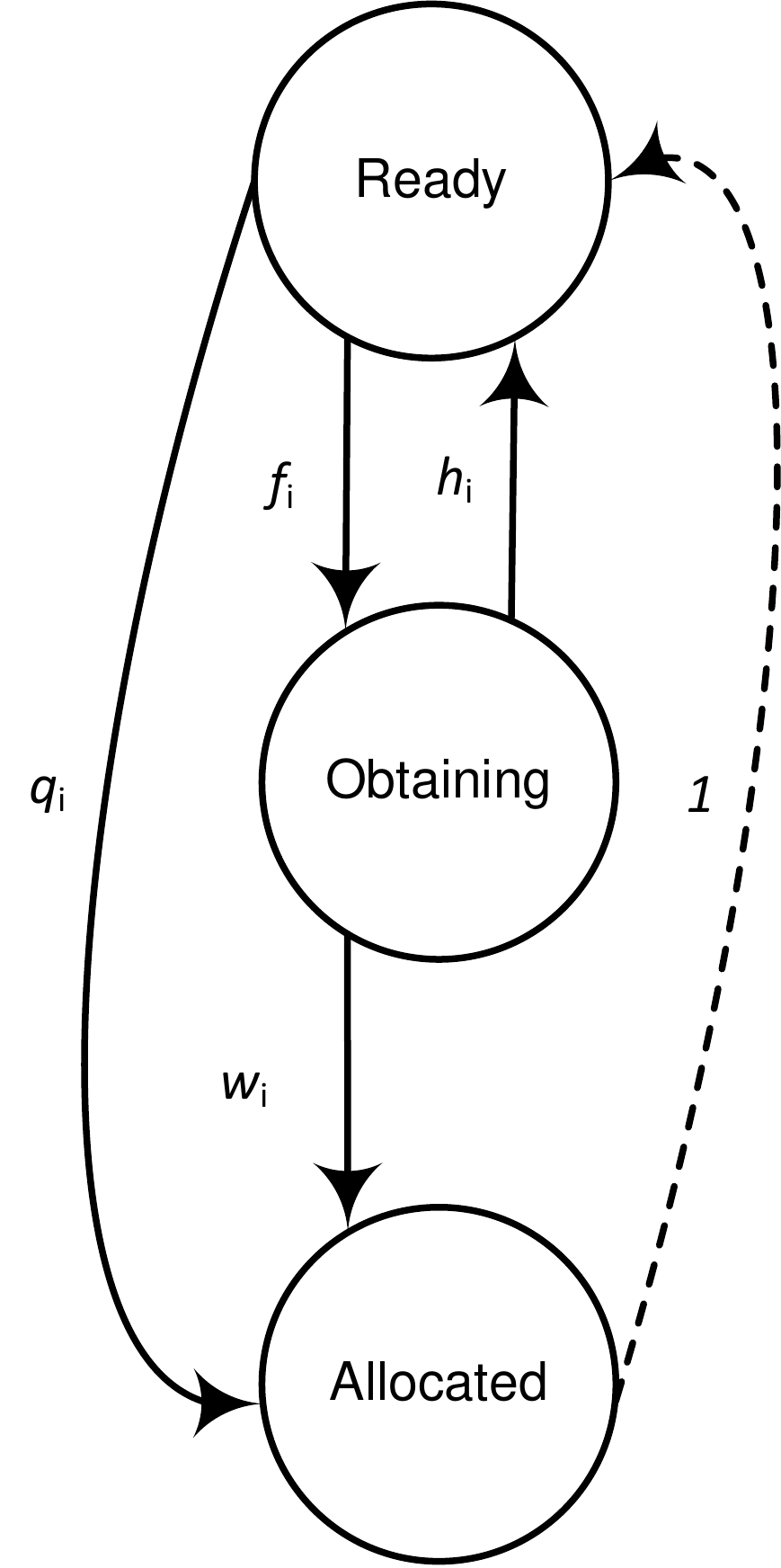}}
\end{wrapfigure}

We look at $p_i$'s state transition with relation to its neighbors, see Definition~\ref{def:allocated}. The figure on the right defines $p_i$'s relative states as a $3$-state Markov chain. The probabilities, $q_i$, $w_i$, $f_i$, and $h_i$ (solid lines arrows), that node $p_i$ change its relative state depends on its neighbor's state. For instance, $q_i$ is the probability that $p_i$ goes from the relative state \RSONE to \RSTHREE. It is environment dependent, i.e., the states of $p_i$'s neighbors are random as well. We add the dotted edge between the state \RSTHREE and the state \RSONE in order to make the Markov chain irreducible and to allow working with the invariant probability. Namely, once node $p_i$ arrives to \RSTHREE, it returns to \RSONE with probability $1$. With this convention, we can estimate the expected time to reach the final relative state \RSTHREE from relative state \RSONE by the expectation of the first hitting time of the irreducible chain~\cite{aldous}
\FFF
\end{minipage}
} 
\end{center}

\begin{center}
\caption{Markov chain describing $p_i$'s relative state transitions.}\label{fig:markov}
\end{center}
%
%
\end{figure*}

In detail, give node $p_i \in P$, its neighborhood, ${\cal N}_i$, we define a random environment of a Markov chain, see \Figure~\ref{fig:markov}. By looking at this random environment, we can focus our analysis on $p_i$'s relative states while avoiding probability dependencies and considering average probabilities~\cite{cogburn84}. Suppose that $p_i$'s environment, $e$, is known. Theorem~\ref{th:expS} estimates two bounds on the expectation of probability, $q_i \mid_e$, which is literally the probability $q_i$ given that the environment is $e$.

In order to do that, we consider a set, ${\cal R}$, of executions of the \dst, such that each execution $R \in {\cal R}$ starts in a configuration, $c \in R$, in which: (I) for any node $p_j \in P$, properties~$(\ref{eq:invarIa})$,~$(\ref{eq:invarIb})$ and~$(\ref{eq:invarIIai})$ holds, and (II) node $p_i$ is in the relative state \RSONE, which implies that (III) eventually, node $p_i$ arrives to the relative state \RSTHREE.

With this convention, we can add a probability $1$ to transit from the relative state \RSTHREE to \RSONE, see the dashed line in the state-machine diagram of \Figure~\ref{fig:markov}. This allows us to estimate the expected time to reach the final relative state \RSTHREE from relative state \RSONE by the expectation of the first hitting time of the irreducible Markov chain~\cite{aldous}.

When computing the expected time for node $p_i$ to reach state \RSTHREE within its neighborhood, we see that it is sufficient to consider the lower bound of the probability $OnlyOne_i(x)$ to obtain an upper bound on the expected time to converge, see section \ref{s:markov}. Moreover, when considering the network convergence time, i.e., the expected convergence time of all nodes in the network, we see that the most dominant parameter is the mean neighborhood size. We do that by applying the AM-GM (Arithmetic Mean vs Geometric Mean) inequality and bounding the expected network convergence time, see Section \ref{s:global}.

\Subsection{Notation}
Throughout the paper, we denote the states of the Markov chain by $\{ X_t \}_{t \ge 0}$, $T_i^+ = \min \{ t > 0 \text{ such that } X_t = i \}$ and ${E_i} \left( \cdot \right)$ is the expectation given that we start in relative state $i$, ${E_i} \left( T_i^+ \right) = E \left( T_i^+ \mid X_0=i \right)$. In this paper, the states $1$, $2$, and $3$ of the Markovian process correspond respectively to states \RSONE, \RSTWO and \RSTHREE, and the time $t=0,1,\ldots$ corresponds to configuration $c_0^{timeslot}(x+t) \in R(x+t)$, where $R(x)$ is the first complete broadcasting round in $R$ that starts in a configuration, $c_0^{timeslot}(x)$, in which all nodes are in the relative state \RSONE. For example, $E_3\left(T_3^+\right)$ is the expected time to reach the \RSTHREE state.

Let $p_i \in P$ be a node for which $s_i \neq \bot \wedge \exists p_j \in {\cal N}_i : s_j = s_i$ in configuration $c{^{_{\op{\timeslot}}}_{^{0}}}(x)$. We define ${M_i}(x) =  \{ p_j \in {\cal N}_i : s_i = s_j \}$ to be the set of $p_i$'s (broadcasting \timeslot) {\em matching} neighbors, which includes all of $p_i$'s neighbors that, during broadcasting round $R(x)$, are attempting to broadcast in $p_i$'s \timeslot. In our proofs, we use $n$ as the number of {\em listening/signaling periods}, $MaxRnd$.

\begin{figure*}[t!]
%
%
%
\begin{tabular}{ccc}
\begin{minipage}[b]{0.33\linewidth}
\subfigure{
{\includegraphics[scale=0.3, viewport=110 275 470 565]{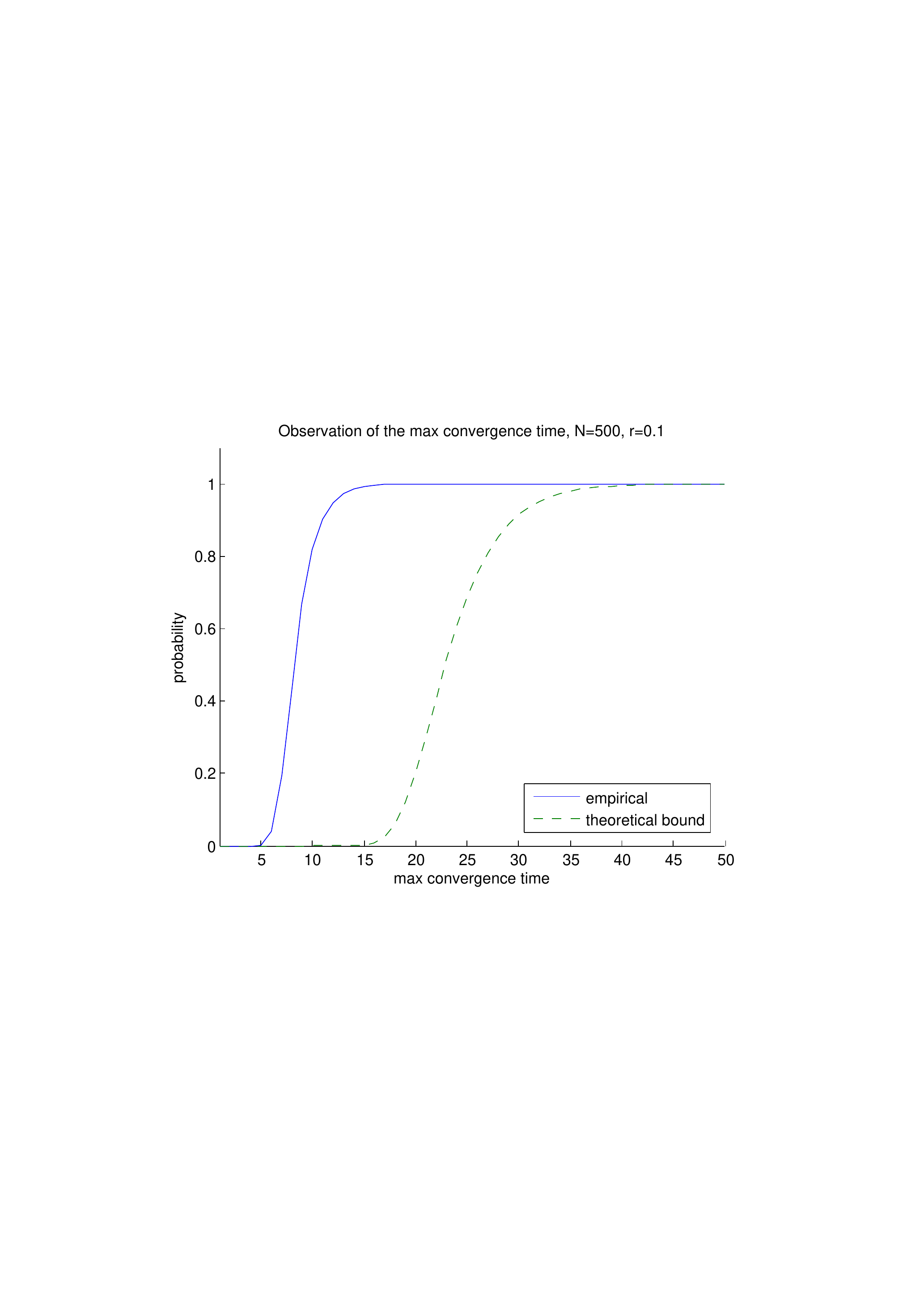}}
\label{fig:boundssubfig1}
}
\end{minipage}
&
\hskip -0.50cm
\begin{minipage}[b]{0.33\linewidth}
\subfigure{
{\includegraphics[scale=0.3, viewport=110 275 470 565]{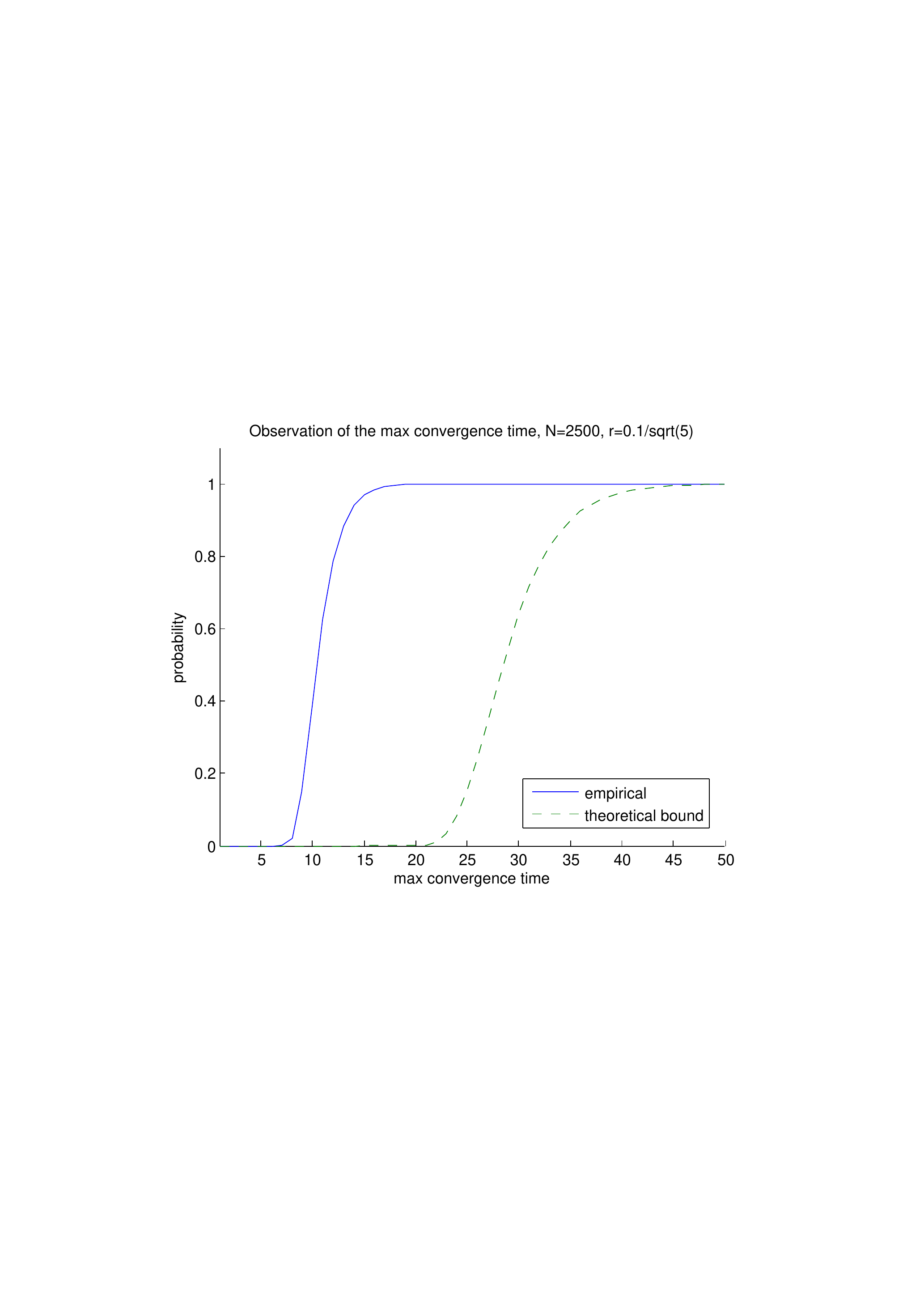}}
\label{fig:boundssubfig2}
}
\end{minipage}
&
\hskip -0.50cm
\begin{minipage}[b]{0.33\linewidth}
\subfigure{
{\includegraphics[scale=0.3, viewport=110 275 470 565]{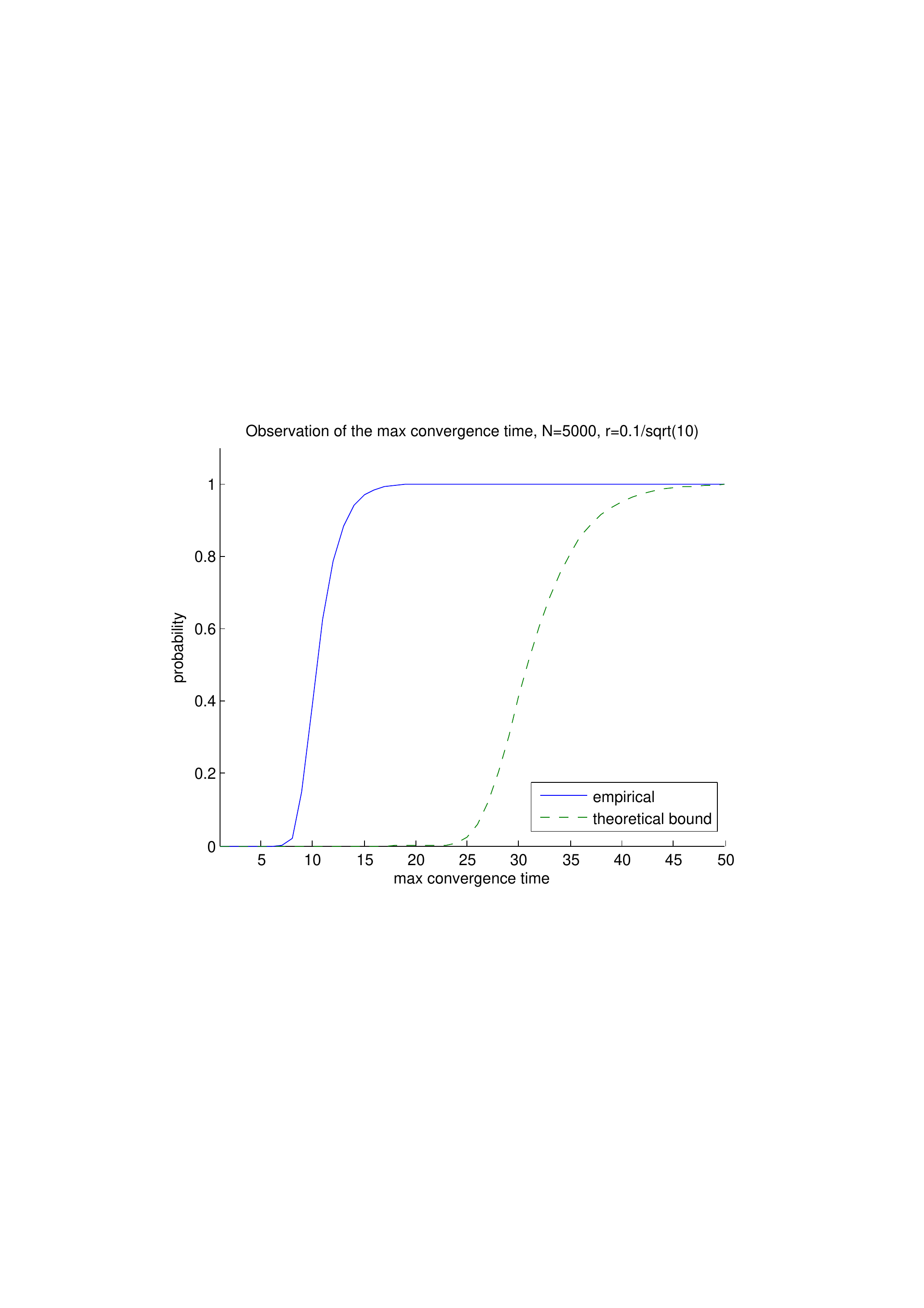}}
\label{fig:boundssubfig3}
}
\end{minipage}\\
\end{tabular}
%
\caption{Numerical validation of Theorem~\ref{th:thbounds}'s bound on the network-wise convergence time. We compare the bound, $P(t_{\max}<k)=(1-(1-q)^k)^N$, with the numerical results, which consider random geometric graphs in which the nodes are randomly placed on the unit square. The charts considers $N \in \{ 500, 2500, 5000 \}$ nodes (from left to right). All experiments considered $2$ listening/signaling periods, interference range of $0.1/\sqrt(\frac{N}{500})$, which result in an average extended degree of $15$, $d_i/T=1$ on average, and $q_i=1/4$.
\label{fig:bounds}}
\end{figure*}

\Section{Convergence within a Neighborhood}
\label{s:markov}
%
%
Theorem~\ref{th:expS} bounds the expected time, ${\cal S}_i$, for a node to reach the relative state \RSTHREE, and follows from Proposition~\ref{prop:2boundq} and equation~$(\ref{eq:Sihiwi})$. Note that ${\cal S}_i \leq 4$ when the number of listening/signaling periods is $n \geq 2$, and considering saturated situations in which the extended node degree $d_i <T$ is smaller than the TDMA frame size. Namely, the proposed algorithm convergence with a neighborhood is brief.

\begin{theorem}[Local Convergence]
\label{th:expS}
The expected time, ${\cal S}_i$, for node $p_i \in P$ to reach the relative state \RSTHREE  satisfies equation~$(\ref{boundht})$, where $n$ is the number of listening/signaling periods, $T$ the TDMA frame size, and $d_i$ is $p_i$'s extended degree.
\B
\begin{equation}
\label{boundht}
\boxed{{\cal S}_i \le \min\{\left(\frac{2n}{n-1}\right)^{\frac{d_i}{T}},\frac{\frac{d_i}{T}+1}{n}\left(\frac{n}{n-1}\right)^{\frac{d_i}{T}+1}\}}
\end{equation}
\end{theorem}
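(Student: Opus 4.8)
The plan is to read $\mathcal{S}_i$ as the expected first hitting time of the relative state \RSTHREE in the three-state chain of Figure~\ref{fig:markov}, started in \RSONE, and to compute it by first-step analysis over one broadcasting round. Writing $a=E_1(T_3)$ and $b=E_2(T_3)$ for the expected number of rounds to reach \RSTHREE from \RSONE and from \RSTWO respectively, conditioning on the outcome of a single round gives a $2\times 2$ linear system in $a,b$ whose coefficients are the transition probabilities $q_i,w_i,f_i,h_i$ of the chain. Solving this system yields the closed form for $\mathcal{S}_i=a$ recorded in equation~$(\ref{eq:Sihiwi})$; the key structural feature I would extract is that this expression is a rational function of the four probabilities that is monotone decreasing in the two ``success'' probabilities, i.e.\ the arcs that terminate in \RSTHREE.

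Next I would argue that every transition into \RSTHREE is produced by one and the same event: node $p_i$ emitting the unique earliest beacon, in one of the $n$ listening/signaling periods, among its matching neighbors $M_i(x)$ that contend in $p_i$'s \timeslot. Consequently each of these probabilities is bounded below by $OnlyOne_i(x)$ of equation~$(\ref{equ:qi})$. Since the hitting-time formula decreases in the success probabilities, replacing them by this common lower bound can only inflate the estimate; this is precisely what makes it sufficient to track the single quantity $q_i:=OnlyOne_i(x)$, and it collapses the bound on $\mathcal{S}_i$ into a function of $q_i$ alone.

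The final step is to feed in the two lower bounds on $q_i$ supplied by Proposition~\ref{prop:2boundq}. Reading the right-hand side of $(\ref{equ:qi})$ with $\rho_k\equiv 1/n$ as the expectation of $((n-K)/n)^{d_i/T}$ over a uniform index $K$, one lower bound follows from a convexity (Jensen) comparison of this expectation against $((n-1)/(2n))^{d_i/T}$, producing the first term of the $\min$ in $(\ref{boundht})$; the other follows from a Riemann-sum/integral comparison of $\sum_j (j/n)^{d_i/T}$ with $\int x^{d_i/T}\,dx$, producing the second term. Propagating each lower bound on $q_i$ through the monotone hitting-time expression and retaining the smaller result gives the stated $\min\{\cdot,\cdot\}$; evaluating the first term at $n=2$ with $d_i<T$ then recovers the advertised bound $\mathcal{S}_i\le 4$.

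I expect the main obstacle to be conceptual rather than computational, and it lies in the second step: the decoupling of $p_i$'s chain from its random environment. The neighbors in $M_i(x)$ are themselves migrating among relative states, so $q_i$ and $w_i$ are random and correlated with $p_i$'s own trajectory, and one must pass to the averaged chain and verify that substituting the worst-case value $OnlyOne_i(x)$ is legitimate for the expected hitting time. A closely related technical difficulty is that the deterministic exponent $d_i/T$ in $(\ref{equ:qi})$ stands in for the random matching-set size $|M_i(x)|$, so replacing the latter by its mean again rests on a convexity argument over the environment. Throughout, the delicate bookkeeping is to check the direction of every relaxation, so that each approximation enlarges rather than shrinks $\mathcal{S}_i$ and the resulting inequality in $(\ref{boundht})$ points the correct way.
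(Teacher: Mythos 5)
Your proposal is correct and takes essentially the same route as the paper: the same three-state chain and hitting-time expression (the paper derives it from the invariant measure, $E_3(T_3^+)=1/\pi_3$, rather than by first-step analysis, but the resulting formula~(\ref{boundht1}) is identical), the same collapse to $1/q_i$ resting on the fact that the \RSTWO-state success probability is at least the bound on $q_i$ (the paper's Lemma~\ref{lemma:wq}, $w_i \ge q_i$, proved by exactly the stochastic-domination argument you flag as the main conceptual difficulty), and the same two lower bounds on $q_i$ via Jensen's inequality with $E(m_i)\le d_i/T$ followed by the symmetrization-convexity and integral comparisons (Propositions~\ref{th:evidit}, \ref{th:ndoslo} and~\ref{prop:2boundq}). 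The only cosmetic difference is that the paper substitutes $w_i\ge q_i$ directly into the denominator of~(\ref{boundht1}) instead of invoking joint monotonicity of the hitting-time expression in $(q_i,w_i)$.
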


We look into the transition probability among relative states by depicting the diagram of~\Figure~\ref{fig:markov} as an homogeneous Markov chain. We estimate the diagram transition probabilities in a way that maximizes the expected time for reaching the diagram's final state, \RSTHREE.
It is known that the first hitting time is given by ${E_i} \left( T_i^+ \right) = \frac{1}{\pi_i}$, where $\pi = \left( \pi_1, \pi_2, \pi_3 \right)$ is the invariant probability vector~\cite{aldous}. Let ${\cal S}_i$  be the expected time it takes node $p_i$ that starts at the relative state \RSONE to reach \RSTHREE. It is clear that ${\cal S}_i = T_{3}^+ - 1$, because $T_{3}^+ - 1$ is the return time of the relative state \RSTHREE. In our case, the transition matrix $P$ is given by equation~$(\ref{Pxqi})$.
\begin{equation}
\label{Pxqi}
P=\begin{pmatrix} 1-f_i-q_i & f_i & q_i \\ {h_i} & 1-{h_i}-w_i & w_i \\ 1 & 0 & 0\\ \end{pmatrix}
\end{equation}

The invariant probability vector $\pi$ satisfying $\pi P = \pi$ is given by equation~$(\ref{pify})$.

\begin{equation}
\label{pify}
\pi=\frac{\big({h_i}+w_i, f_i, q_i {h_i}+ q_iw_i + f_iw_i\big)}{{h_i}+w_i+ f_i +{h_i} q_i + q_iw_i + f_i w_i}
\end{equation}

The estimation of the maximal expected time necessary to assign the node $p_i$ to a \timeslot requires to compute bounds on the probabilities $f_i$, ${h_i}$, $q_i$ and $w_i$ that maximize equation~$(\ref{hittingtime})$.

\begin{equation}
\label{hittingtime}
{E_{3}}\left(T_{3}^+\right)=\frac{1}{\pi_3}= \frac{{h_i}+w_i+ f_i +{h_i} q_i + q_iw_i + f_i w_i}{q_i {h_i}+ q_iw_i + f_iw_i }
\end{equation}

The expected time for $p_i$ to reach the relative state \RSTHREE is bounded in equation~$(\ref{boundht1})$.

\begin{equation}\label{boundht1}
\boxed{{\cal S}_i={E_{3}}\left(T_{3}^+\right)-1=\frac{{h_i}+w_i+f_i}{q_i{h_i}+q_iw_i+f_iw_i}}
\end{equation}

Equation~$(\ref{boundht})$ has a compact and meaningful bound for equation~$(\ref{boundht1})$. We achieve that by studying the impact of the parameters $T$ and $n$ on the \dstclr. Lemma~\ref{lemma:wq} and equation $(\ref{boundht1})$ imply equation~$(\ref{eq:Sihiwi})$.

\begin{equation}
\label{eq:Sihiwi}
\boxed{{\cal S}_i \le \frac{h_i+w_i+f_i}{q_ih_i+q_iw_i+f_iq_i}=\frac{1}{q_i}}
\end{equation}

\begin{lemma}
\label{lemma:wq}
Suppose that $ n \ge 2$ is the number of listening/signaling periods, see line~\ref{l:max} of the code in \Figure~\ref{f:smp}. Then $w_i\ge q_i$.
\end{lemma}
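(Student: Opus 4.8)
The lemma is exactly the ingredient that upgrades the hitting-time formula $(\ref{boundht1})$ to the clean bound $(\ref{eq:Sihiwi})$: replacing the denominator $q_ih_i+q_iw_i+f_iw_i$ by $q_i(h_i+w_i+f_i)$ is legitimate precisely because $f_iw_i\ge f_iq_i$, i.e. because $w_i\ge q_i$. So the plan is to prove the inequality by reducing both transition probabilities to the same elementary ``win-the-competition'' quantity and comparing them by monotonicity. Recall that $q_i$ (the transition from \RSONE to \RSTHREE) and $w_i$ (the transition from \RSTWO to \RSTHREE) both describe the event that, within one broadcasting round, $p_i$ becomes the \emph{unique} user of its broadcasting \timeslot in ${\cal N}_i$. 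In either case this occurs exactly when $p_i$ selects some listening/signaling period $k$ (with probability $\rho_k=1/n$) and every neighbour competing with $p_i$ for the same \timeslot selects a strictly later period. Writing $m$ for the number of such matching competitors, both probabilities are instances of
\[
g(m)=\sum_{k=1}^{n}\rho_k\Bigl(1-\sum_{\ell=1}^{k}\rho_\ell\Bigr)^{m},
\]
the same function that lower-bounds $OnlyOne_i$ in equation~$(\ref{equ:qi})$; indeed $q_i=g(d_i/T)$, which gives the value $1/4$ reported in \Figure~\ref{fig:bounds} for $n=2$ and $d_i/T=1$.

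First I would make explicit, from the competition analysis in sections~\ref{s:invarII} and~\ref{s:compact} of the Appendix, the competitor set that governs each probability, and write $q_i=g(m_1)$ and $w_i=g(m_2)$. In the \RSTWO state $p_i$ already holds $s_i$ and competes only against the matching neighbours ${M_i}(x)$ that currently share $s_i$; in the \RSONE state $p_i$ has $s_i=\bot$, re-draws a \timeslot, and must additionally survive the neighbours that re-draw the same \timeslot. The second step is to record that each base $1-\sum_{\ell=1}^{k}\rho_\ell=1-k/n$ lies in $[0,1]$, so $(1-k/n)^m$ is non-increasing in $m$ and hence $g$ is non-increasing in the number of competitors (this holds for the real exponent $d_i/T$, not only integers). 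It therefore suffices to show $m_2\le m_1$, equivalently that re-drawing a \timeslot from \RSONE cannot expose $p_i$ to fewer competitors than it faces as the incumbent of $s_i$ in \RSTWO; monotonicity of $g$ then yields $w_i=g(m_2)\ge g(m_1)=q_i$.

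The hypothesis $n\ge 2$ enters precisely here: it guarantees $\rho_k=1/n\le\tfrac12<1$, so the bases $1-k/n$ are genuinely in $[0,1)$ and the listening/signaling competition separates contenders with positive probability. For $n=1$ every base vanishes, $g(m)=0$ for $m\ge 1$, and the competition never resolves a conflict, so the comparison degenerates to the vacuous $0\ge 0$; requiring $n\ge 2$ rules this out and keeps both the monotonicity argument and the resulting bound $\mathcal{S}_i\le 1/q_i$ meaningful.

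I expect the main obstacle to be the bookkeeping of the competitor sets rather than the final inequality. One must argue carefully that the fresh draw from \RSONE cannot reduce the effective number of matching competitors below the incumbent count in \RSTWO, while accounting for the fact that matching neighbours may themselves change relative state during the round, and one must check that the averaging over the random environment of Section~\ref{s:markov} preserves the ordering $m_2\le m_1$. Once these counts are correctly compared, $w_i\ge q_i$ follows immediately from the monotonicity of $g$.
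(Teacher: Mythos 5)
Your reduction is the same one the paper uses—both $q_i$ and $w_i$ are values of the same non-increasing function $g$ of the number of competitors sharing $p_i$'s \timeslot, so everything hinges on comparing the two competitor counts—but your proof stops exactly where the lemma's actual content lies. You reduce to ``$m_2\le m_1$'', declare it the main obstacle, and leave it unproved, even raising (without resolving) the worry that neighbours changing relative state during the round could upset the ordering. That comparison is not bookkeeping; it is the key step, and the paper settles it with a structural observation your proposal never makes. If $p_i$ is in \RSTWO, it transmitted a beacon in the preceding broadcasting round; every neighbour that was not itself signaling in the very same listening/signaling period sensed the carrier and marked $s_i$ as used, so it cannot select $s_i$ in the current round. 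Hence the competitors $p_i$ faces from \RSTWO are exactly those neighbours that, in the preceding round, chose the same \timeslot \emph{and} the same listening/signaling period as $p_i$; call their number $\ell_i$. The \RSONE competitor count $v_i$ counts neighbours that chose the same \timeslot. Per neighbour, the first event is contained in the second, so $\ell_i\le v_i$ holds pointwise, hence stochastically, and in particular $E(\ell_i)\le E(v_i)$. Monotonicity of $g$ then gives $w_i=E\left(g(\ell_i)\right)\ge E\left(g(v_i)\right)=q_i$, and the averaging over the random environment that worried you is exactly what stochastic domination is designed to survive.

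Two secondary inaccuracies are worth flagging. First, your explanation of the hypothesis $n\ge 2$ (that $g$ degenerates to $0$ when $n=1$) is not the role it plays in the paper's argument: there, $n\ge 2$ is what makes the listening/signaling lottery a genuine filter, so that tying in both \timeslot and period is strictly rarer than tying in \timeslot alone; for $n=1$ the two counts coincide and the inequality collapses to equality. Second, $q_i$ is not $g(d_i/T)$; it is $E\left(g(v_i)\right)$, and $g(d_i/T)$ is only a lower bound obtained later via Jensen's inequality and Proposition~\ref{th:evidit}. Neither point changes the verdict: your framework is the right one, but the domination argument—the heart of the lemma—is missing.
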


\begin{proof} Let us consider node $p_i \in P$ that is in relative state \RSONE. Given that $p_i$ has $v_i$ neighbors that compete for the same timeslot, the probability that $p_i$ gets allocated, $q_i\mid_{v_i}$, is given by equation~$(\ref{eq11})$.
\begin{equation}\label{eq11}
q_i\mid_{v_i}=\sum_{k=1}^{n-1} {\rho_k}\left(1-{\rho_1}-\ldots-{\rho_k}\right)^{v_i}
\end{equation}
Consider next that $p_i$ is in relative state \RSTWO, and thus we know that $p_i$ transmitted during the preceding broadcasting round and transited  from relative state \RSONE to \RSTWO. Moreover, $p_i$ is using the same timeslot for the current broadcasting round. The only neighbors of $p_i$ that are using the same timeslot are the neighbors that are also in relative state \RSTWO and, have chosen the same listening/signaling period as $p_i$ during the preceding broadcasting round. Let us denote by $\ell_i$ the number of such neighbors. Given $\ell_i$ the probability $w_i\mid_{\ell_i}$ that $p_i$ is allocated to the timeslot is given by equation~$(\ref{eq22})$.
\begin{equation}\label{eq22}
w_i\mid_{\ell_i}=\sum_{k=1}^{n-1}{\rho_k}\left(1-{\rho_1}-\ldots\ldots-{\rho_k}\right)^{\ell_i}
\end{equation}
We have that $\ell_i$ is stochastically dominated by $v_i$ \cite{ross}, i.e.,  $E(\ell_i)\le E(v_i)$. Indeed, $v_i$ is a random variable that counts the number of neighbors that choose the same timeslot as $p_i$ while $\ell_i$ counts the number of neighbors that choose the same timeslot {\em and} listening/signaling period as $p_i$. For $n\ge 2$, $\ell_i$'s expected value is smaller than $v_i$'s expected value. To conclude, we remark that expressions~$(\ref{eq11})$ and~$(\ref{eq22})$ are the same decreasing function, $f_i \to \sum_ {k=1}^{n-1}{\rho_k}\left(1-{\rho_1}-\ldots -{\rho_k}\right)^{f_i}$, that is evaluated at two different point, $v_i$ and $\ell_i$ respectively. Moreover, since $\ell_i$ is stochastically dominated by $v_i$, equation~($\ref{eq:conslidq}$) holds.
\begin{equation}
\label{eq:conslidq}
w_i=E\left(w_i\mid_{\ell_i}\right)\ge E\left(q_i\mid_{v_i}\right)=q_i
\end{equation}

\end{proof}

Proposition~\ref{prop:2boundq} demonstrates equation~$(\ref{boundq})$ and leads us toward the proof of Theorem~\ref{th:expS}.

\begin{proposition}
\label{prop:2boundq}
Let ${{\rho}_i}=1/MaxRnd$. Equation~$(\ref{boundq})$ bounds from below the probability $q_i$, see Section~\ref{s:compact} of the Appendix.
\end{proposition}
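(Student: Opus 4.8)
The plan is to obtain (\ref{boundq}) by de-conditioning the exact per-competition probability (\ref{eq11}) on the random number $v_i$ of matching neighbours, and then transporting the randomness from the exponent out to the base $d_i/T$ by a convexity argument. Concretely, write $q_i = E\left[q_i\mid_{v_i}\right]$, where by (\ref{eq11}) the conditional value is $q_i\mid_{v_i}=\sum_{k=1}^{n-1}\rho_k\,a_k^{\,v_i}$ with $a_k := 1-\sum_{\ell=1}^{k}\rho_\ell\in[0,1)$ and $\rho_k=1/n$. The key structural observation is that, viewed as a function of the nonnegative real variable $v$, the map $g(v):=\sum_{k=1}^{n-1}\rho_k\,a_k^{\,v}$ is a nonnegative combination of the exponentials $a_k^{\,v}=e^{v\ln a_k}$; since each $\ln a_k<0$, every summand is decreasing and convex in $v$, hence $g$ is itself convex and decreasing.

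First I would apply Jensen's inequality to the convex function $g$ to peel the expectation off the exponent: $q_i=E\left[g(v_i)\right]\ge g\left(E\left[v_i\right]\right)$. It then remains to replace $E\left[v_i\right]$ by $d_i/T$, which is legitimate because $g$ is decreasing: if one shows $E\left[v_i\right]\le d_i/T$, then $g\left(E\left[v_i\right]\right)\ge g\left(d_i/T\right)=\sum_{k=1}^{n-1}\rho_k\left(1-\sum_{\ell=1}^{k}\rho_\ell\right)^{d_i/T}$, which is exactly the right-hand side of (\ref{boundq}) (the $k=n$ term vanishes). The bound $E\left[v_i\right]\le d_i/T$ is the counting heart of the argument: $v_i=\sum_{p_j\in{\cal N}_i}\mathbb{1}[s_j=s_i]$ sums $d_i$ indicators, and in the averaged (random-environment) model of \Figure~\ref{fig:markov} each neighbour occupies $p_i$'s \timeslot with probability at most $1/T$, so $E\left[v_i\right]\le d_i\cdot(1/T)$.

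An alternative route, which avoids Jensen and is fully explicit, is to model $v_i$ as $\mathrm{Bin}(d_i,1/T)$ and evaluate the expectation term by term: $E\left[a_k^{\,v_i}\right]=\left(1-\tfrac1T(1-a_k)\right)^{d_i}=\left(1-\tfrac1T\sum_{\ell=1}^{k}\rho_\ell\right)^{d_i}$. Writing $S_k:=\sum_{\ell=1}^{k}\rho_\ell\le 1\le T$ and using Bernoulli's inequality in the form $(1-S_k/T)^{T}\ge 1-S_k$, one gets $\left(1-S_k/T\right)^{d_i}=\left[(1-S_k/T)^{T}\right]^{d_i/T}\ge (1-S_k)^{d_i/T}$ for each $k$; summing against the weights $\rho_k$ reproduces (\ref{boundq}). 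Either way the conclusion also yields the companion bound (\ref{equ:qi}) on $OnlyOne_i(x)$, since that probability dominates $q_i$.

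The step I expect to be the main obstacle is justifying the $d_i/T$ input rigorously, i.e.\ controlling the distribution of $v_i$. The neighbours' \timeslot choices are not plainly i.i.d.\ uniform: in the relative state \RSONE a node draws among the \timeslots it sees as unused (line~\ref{l:slcemp}), so the per-slot occupancy probability can deviate from $1/T$, and the choices of distinct neighbours are correlated through their overlapping views of the neighbourhood. The care needed is to argue, within the averaged random-environment model, that the occupancy of $p_i$'s \timeslot is stochastically no larger than a $\mathrm{Bin}(d_i,1/T)$ load (equivalently $E\left[v_i\right]\le d_i/T$); this is precisely the place where the saturated assumption $d_i<T$ and the independence afforded by passing to the random environment are used. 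Once that domination is in hand, the convexity and Bernoulli manipulations above are routine.
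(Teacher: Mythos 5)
There is a genuine gap here: what you prove is not equation~$(\ref{boundq})$ but the intermediate bound, equation~$(\ref{equ:qi})$. Your Jensen argument (convexity of $v \mapsto \sum_k \rho_k a_k^{\,v}$) combined with $E[v_i]\le d_i/T$ yields $q_i \ge \sum_{k}\rho_k\bigl(1-\sum_{\ell=1}^k \rho_\ell\bigr)^{d_i/T}$, and you then assert that this ``is exactly the right-hand side of $(\ref{boundq})$.'' It is not. The right-hand side of $(\ref{boundq})$ is $\max\bigl\{\bigl(\frac{n-1}{2n}\bigr)^{d_i/T},\,\frac{1}{d_i/T+1}\bigl(1-\frac{1}{n}\bigr)^{d_i/T+1}\bigr\}$, a pair of closed-form expressions, whereas what you derived is the sum-form bound that the paper establishes separately as Proposition~\ref{th:ndoslo} (via Proposition~\ref{th:evidit} and Jensen's inequality --- the same route you take, including your worry about the distribution of $v_i$, which is exactly what Proposition~\ref{th:evidit} resolves). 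In other words, your argument duplicates the prerequisites of Proposition~\ref{prop:2boundq} rather than proving the proposition itself. Your ``alternative route'' via the binomial model and Bernoulli's inequality lands on the same sum and has the same problem.

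The missing content --- which is the actual body of the paper's proof --- is the conversion of that sum, after substituting $\rho_k = 1/n$, into the two closed forms. From $q_i \ge \sum_{k=1}^n \frac{1}{n}\bigl(1-\frac{k}{n}\bigr)^{d_i/T}$ the paper obtains: (i) the first expression by symmetrizing the sum, pairing the $k$-th and $(n-k+1)$-th terms and using convexity of $x \mapsto (1-x)^{d_i/T}$, so that each pair averages to at least $\bigl(1-\frac{n+1}{2n}\bigr)^{d_i/T} = \bigl(\frac{n-1}{2n}\bigr)^{d_i/T}$; and (ii) the second expression by viewing the sum as a Riemann sum of the decreasing function $y \mapsto (1-y)^{d_i/T}$ and bounding it below by $\int_{1/n}^1 (1-y)^{d_i/T}\,dy = \frac{1}{d_i/T+1}\bigl(1-\frac{1}{n}\bigr)^{d_i/T+1}$. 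Taking the maximum of (i) and (ii) gives $(\ref{boundq})$. Neither the symmetrization/convexity step nor the integral comparison appears in your proposal, so the proposition as stated remains unproved; to complete your write-up you would need to append exactly these two estimates to the bound you reached.
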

\B
\begin{equation}
\label{boundq}
q_i\ge \max\{\left(\frac{n-1}{2n}\right)^\frac{d_i}{T},\frac{1}{\frac{d_i}{T}+1}\left(1-\frac{1}{n}\right)^{\frac{d_i}{T}+1}\}
\end{equation}

The first bound, $\frac{1}{q_i}\le \left(\frac{2n}{n-1}\right)^{\frac{d_i}{T}}$ (equation~$(\ref{boundht})$), has a simple intuitive interpretation. Let us consider first that two nodes compete for a same timeslot. The two nodes choose independently any of the $n$ listening/signaling periods and there are $n^2$ different possible outcomes. Among these outcomes $n$ correspond to the situation where the two nodes choose the same listening/signaling period and there is no winner. We then have $n^2-n=n(n-1)$ outcomes that lead to a winner. There is then a probability of $n(n-1)/n^2=(n-1)/n$ that one of the node wins the (listening/signaling) competition. Since the game is symmetric, the probability that $p_i$ wins is $(n-1)/(2n)$. The fact that we have $T$ timeslots divides the number of competing nodes, $d_i$, and imply that there are $d_i/T$ competing nodes to the same timeslot. If we interpret the game as a collection of $d_i/T$ independent games, where for each game $p_i$ wins with probability $(n-1)/(2n)$. Thus, the probability $q_i$ that $p_i$ wins is $\left(\frac{n-1}{2n}\right)^\frac{d_i}{T}$. The inverse of this expression gives the average time for the event to occur and is the bound by equation~$(\ref{boundht})$.


\Section{Network Convergence}
\label{s:global}
We estimate the expected time for the entire network to reach a safe configuration in which all nodes are allocated with \timeslots. The estimation is based on the number of nodes that are the earliest to signal in their broadcasting \timeslot. These nodes are winners of the (listening/signaling) competition and are allocated to their chosen \timeslots. However, counting only these nodes leads to under-estimate the number of allocated nodes, which then results in an over-estimation of the convergence time. Indeed, node $p_i \in P$ might have a neighbor $p_j \in {\cal N}_i$ that selects the earliest listening/signaling period in ${\cal N}_i$, but $p_j$ does not transmit because one of its neighbors, $p_k \in {\cal N}_j \setminus {\cal N}_i$, had transmitted in an earlier listening/signaling period. Our bound consider only $p_k$ while both $p_i$ and $p_k$ transmit, became $p_j$ is inhibited by $p_k$'s {beacon}.

Lemma~\ref{th:ewnsigma} shows that the assumption that the nodes are allocated independently of each other's is suitable for bounding the network convergence time, ${\cal S}$. Theorem~\ref{th:thbounds} uses Lemma~\ref{th:ewnsigma} for bounding the network convergence time, ${\cal S}$.

In Section~\ref{s:markov}, we prove a bound on the expected time, ${\cal S}_i$, for a {\em single} node to be allocated to a timeslot. We observe that the bound depends uniquely on the number of listening/signaling periods, $n$, as well as the ratio between the extended degree and the frame size, $d_i/T$. In order to obtain a bound valid for all nodes, we bound this ratio with $x/T$ where $x$ is as defined in Lemma \ref{th:ewnsigma}.
We note that the time needed for the allocation of timeslots to {\em all} the nodes depends on $N$, the total number of nodes.

In detail, the convergence time estimation considers the (fixed and independent) bound, $q_i$, for the probability that a node reach the relative state \RSTHREE within a broadcasting round. Then, the convergence time, $t$, is a random variable with geometric probability, i.e., $P(t$ $=$ $k)$ $=$ $(1$ $-$ $q)^{k-1}q$. Let us denote $t_1, \ldots t_N$ the time it takes for the nodes $p_1,\ldots p_N$ to respectively reach the relative state \RSTHREE. The convergence time, ${\cal S}$, for all the nodes is given by $\max(\{ t_1, \ldots t_N \})$, which depends on $N$. %

\begin{lemma}
\label{th:ewnsigma}
The expected number of nodes, $E(W)$, that win the (listening/signaling) competition after one broadcasting round satisfies equation~$(\ref{eq:ssdkjqwq})$, where $x=\frac{2A}{N}$, $T$ is the number of timeslots, $A$ the number of edges in the interference graph, $G$, and $N= \mid P\mid$ the number of nodes that attempt to access the communication media.
\begin{equation}
\label{eq:ssdkjqwq}
E(W)\ge N\sum_{j=1}^n  {\rho}_j\left(1-\left({\rho}_1+\ldots+{\rho}_j\right)\right)^\frac{x}{T}
\end{equation}
\end{lemma}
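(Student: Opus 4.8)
The plan is to prove equation~$(\ref{eq:ssdkjqwq})$ by combining linearity of expectation with a term-by-term application of the AM-GM inequality, exactly as the overview of Section~\ref{s:global} anticipates. First I would write $W=\sum_{i=1}^N \mathbf{1}_i$, where $\mathbf{1}_i$ is the indicator of the event that node $p_i$ is the earliest to signal in its broadcasting \timeslot and hence wins the listening/signaling competition. Linearity of expectation then gives $E(W)=\sum_{i=1}^N P(p_i \text{ wins})$, with no need to control the dependencies among the indicators. For each individual node I would invoke the per-node competition analysis already captured in equation~$(\ref{equ:qi})$: conditioning on $p_i$ choosing the $j$-th listening/signaling period (probability ${\rho}_j$) and on all of its $d_i/T$ same-\timeslot competitors choosing a strictly later period, one obtains $P(p_i \text{ wins}) \ge \sum_{j=1}^n {\rho}_j \left(1-({\rho}_1+\ldots+{\rho}_j)\right)^{d_i/T}$. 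This is the conservative count described in the text, since it charges $p_i$ a loss whenever a neighbor picks an earlier period even if that neighbor is itself inhibited from outside ${\cal N}_i$.

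The second step converts this node-wise sum into the claimed expression involving the mean degree. By the handshake lemma, $\sum_{i=1}^N d_i = 2A$, so $x=2A/N$ is precisely the average extended degree $\frac{1}{N}\sum_i d_i$. Summing the per-node bound over $i$ and exchanging the order of summation, it remains to show, for each fixed $j$, that $\sum_{i=1}^N a_j^{d_i/T} \ge N\, a_j^{x/T}$, where $a_j := 1-({\rho}_1+\ldots+{\rho}_j) \in [0,1)$. This is exactly AM-GM applied to the $N$ nonnegative quantities $\{a_j^{d_i/T}\}_{i=1}^N$: their arithmetic mean dominates their geometric mean $\left(\prod_{i=1}^N a_j^{d_i/T}\right)^{1/N} = a_j^{(\sum_i d_i)/(NT)} = a_j^{x/T}$ (for the boundary case $a_n=0$ both sides vanish). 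Multiplying by ${\rho}_j$ and summing over $j$ then yields $E(W) \ge N\sum_{j=1}^n {\rho}_j\, a_j^{x/T}$, which is equation~$(\ref{eq:ssdkjqwq})$.

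The crux of the argument is the convexity that powers the AM-GM step: for each $a_j>0$ the map $d \mapsto a_j^{d/T} = e^{(d/T)\ln a_j}$ is convex, since its second derivative $(\ln a_j / T)^2 a_j^{d/T}$ is nonnegative, so a Jensen-type inequality can push the average inside the function. AM-GM is simply the specialized form of this convexity adapted to the exponential, and it is what makes the mean neighborhood size the dominant parameter in the bound. The main obstacle I would watch is not the inequality itself but the faithful justification of the per-node bound: one must verify that the event ``$p_i$ is the earliest signaler'' is genuinely lower-bounded by the independent-competitors expression of equation~$(\ref{equ:qi})$, including the conservative treatment of the cross-neighborhood inhibition highlighted in the paragraph preceding the lemma. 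Since equation~$(\ref{equ:qi})$ is already established, the remainder is the clean two-line AM-GM computation above.
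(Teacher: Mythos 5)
Your proposal is correct and takes essentially the same route as the paper's own proof: both decompose $W$ into per-node indicators of being the earliest signaler, invoke the per-node bound with exponent $d_i/T$ from equation~$(\ref{equ:qi})$, and then apply the AM-GM inequality together with $\sum_i d_i = 2A$ to replace the individual exponents by the mean-degree exponent $x/T$. The only cosmetic differences are that you exchange the order of summation and apply AM-GM per fixed $j$ explicitly, whereas the paper performs the identical step inside its displayed chain of inequalities and additionally cites Lemma~\ref{lemma:wq} to note that nodes in state \RSTWO win with probability at least $q_i$.
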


\begin{proof}
The nodes that are allocated to a timeslot can previously being on relative state \RSONE or \RSTWO. The probability of a transition from relative state \RSTWO to \RSTHREE is $w_i$, and, a transition from relative state \RSONE to \RSTHREE is $q_i$. As proved in Lemma~\ref{lemma:wq}, we always have $w_i \ge q_i$. To bound the number of nodes that get allocated during a broadcasting round, we use the lower bound on the probability $q_i$ that a node gets allocated to a timeslot. Moreover, in the computations, we use the AM-GM bound~\cite{steele}, which says that if $\sum b_k=1$ then $\prod a_k^{b_k} \le \sum b_k a_k$ and, denote  $d_i$ the number of neighbors of node $p_i$. As proved in Proposition \ref{th:evidit}, since there are $T$ timeslots the number of neighbors of $i$ that choose the same timeslot  as $i$ and compete for it  is bounded by $d_i/T$ . This lemma is proved by equation~$(\ref{eq:sadadahjhkasd})$, where the last line of the expression holds because $\sum_i {d_i}=2 A$.

\begin{eqnarray}
\label{eq:longsadadahjhkasd}
\label{eq:sadadahjhkasd}
E\left(W\right)\ge & \\ E\left(\sum_{i=1}^N 1_{\mid\text{$p_i$ selects the earliest signaling period}}\right)&= \nonumber \\ \nonumber
\sum_{i=1}^N \left({\rho_1}\left(1-{\rho_1}\right)^\frac{d_i}{T}+\ldots {\rho_{n-1}}\left(1-\sum^{n-1}_{k = 1} {\rho}_{k}\right)^\frac{d_i}{T}\right) & = \\\nonumber
\sum_{j=1}^{n}N\sum_{i=1}^N \frac{1}{N}{\rho}_j\left(1- \sum^{j}_{k = 1} {\rho}_{k} \right)^{\frac{d_i}{T}} &\ge \\\nonumber
\end{eqnarray}

\begin{eqnarray}
N\sum_{j=1}^{n} \prod_{i=1}^N {\rho}_j^\frac{1}{N}\left(1 - \sum^{j}_{k = 1} {\rho}_{k} \right)^\frac{d_i}{NT}&=\\\nonumber
N\sum_{j=1}^{n} {\rho}_j\left(1- \sum^{j}_{k = 1} {\rho}_{k} \right)^{\frac{1}{TN}\sum {d_i}}&=\\\nonumber
N\sum_{j=1}^{n}{\rho}_j\left(1-\sum^{j}_{k = 1} {\rho}_{k} \right)^\frac{x}{T}\nonumber
\end{eqnarray}

We note that we use the AM-GM bound to reach the $4$-th row of equation~$(\ref{eq:sadadahjhkasd})$.

\end{proof}

By arguments similar to the ones used in the proof of Proposition~\ref{prop:2boundq}, we deduce that if $N$ nodes compete, the expected number $E(W)$ of nodes that get allocated to a timeslot is lower bounded in equation~$(\ref{eq:sfowmvlkhg})$.
\begin{equation}
\label{eq:sfowmvlkhg}
E(W)\ge N \max\left\{\left(\frac{n-1}{2n}\right)^\frac{x}{T},\frac{\left(\frac{n-1}{n}\right)^{\frac{x}{T}+1}}{\frac{x}{T}+1}\right\}
\end{equation}



Theorem~\ref{th:thbounds} bounds the system convergence time. We numerically validate Theorem~\ref{th:thbounds}, see \Figure~\ref{fig:bounds}. Moreover, our experiments showed that the average convergence time of the network is below the upper bound of equation~$(\ref{boundconv})$.

\begin{theorem}[Global Convergence]
\label{th:thbounds}
%
The expected number of retransmissions is smaller than
$\left(\frac{2n}{n-1}\right)^{d/T}-1$,
where $d= \max(\{ d_i : p_i \in P \})$. Hence, we have that the expected number of broadcasting rounds, ${\cal S}$, that guarantee that all nodes to reach the relative state \RSTHREE satisfies equation~$(\ref{boundconv})$.
\begin{equation}
\label{boundconv}
{\cal S} \le \left(\frac{2n}{n-1}\right)^{d/T}
\end{equation}

Moreover, given that there are $N$ nodes in the network and ${\alpha} \in (0, 1)$, the network convergence time is bounded by equation~$(\ref{boundk})$ with probability $1-{\alpha}$.
\begin{equation}
\label{boundk}
k=1+\frac{log\left(1-\sqrt[N]{1-{\alpha}}\right)}{log\left(1-\left(\frac{n-1}{2n}\right)^\frac{d}{T}\right)}
\end{equation}
This means that with probability ${\alpha}$ all nodes are allocated with timeslots in maximum $k$ broadcasting rounds, see \Figure~$(\ref{fig:boundk})$.
\end{theorem}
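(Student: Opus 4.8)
The plan is to reduce the network-wide statement to the single-node bound of Theorem~\ref{th:expS} by (i) making the per-node success probability uniform across nodes via the maximum degree, (ii) modelling each node's convergence as a geometric random variable, and (iii) aggregating over the $N$ nodes under the independence licensed by Lemma~\ref{th:ewnsigma}. First I would fix the uniform parameter $q$. The lower bound on $q_i$ supplied by Proposition~\ref{prop:2boundq}, namely $\left(\frac{n-1}{2n}\right)^{d_i/T}$, is decreasing in $d_i$, so substituting $d=\max\{d_i:p_i\in P\}$ yields a single value $q=\left(\frac{n-1}{2n}\right)^{d/T}$ that satisfies $q\le q_i$ for every node simultaneously. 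By equation~$(\ref{eq:Sihiwi})$ we have ${\cal S}_i=1/q_i\le 1/q=\left(\frac{2n}{n-1}\right)^{d/T}$, i.e. every node reaches \RSTHREE in at most $\left(\frac{2n}{n-1}\right)^{d/T}$ expected broadcasting rounds. Since the initial attempt is not itself a retransmission, the expected number of retransmissions is one less, which gives the first claim and, after adding back the initial attempt, equation~$(\ref{boundconv})$.

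Next I would treat the convergence time $t_i$ of node $p_i$ as geometric with parameter $q_i\ge q$, so that $P(t_i>k)=(1-q_i)^k\le (1-q)^k$ and hence $P(t_i\le k)\ge 1-(1-q)^k$. The whole network is converged once all nodes are in \RSONE-free state \RSTHREE, i.e. at time ${\cal S}=\max\{t_1,\ldots,t_N\}$. Assuming the $t_i$ are independent, the distribution of the maximum factorizes and I obtain
\[
P({\cal S}\le k)=\prod_{i=1}^N P(t_i\le k)\ge \left(1-(1-q)^k\right)^N ,
\]
which is exactly the expression $P(t_{\max}<k)=(1-(1-q)^k)^N$ validated numerically in \Figure~\ref{fig:bounds}. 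To extract equation~$(\ref{boundk})$ I would then require this lower bound to be at least $1-{\alpha}$ and invert it: setting $\left(1-(1-q)^k\right)^N=1-{\alpha}$ gives $(1-q)^k=1-\sqrt[N]{1-{\alpha}}$, and taking logarithms (both numerator and denominator are negative, so the quotient is positive) yields $k=\log\!\left(1-\sqrt[N]{1-{\alpha}}\right)/\log\!\left(1-\left(\frac{n-1}{2n}\right)^{d/T}\right)$. Since $k$ must be an integer number of rounds, rounding up contributes the additive $1$ appearing in equation~$(\ref{boundk})$, making the stated $k$ a conservative (safe) choice.

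The main obstacle, as the paper itself flags, is the independence step: the times $t_i$ are genuinely \emph{dependent}, because whether $p_i$ wins a listening/signaling competition depends on the relative states of its neighbors. The crux is therefore to justify that the product form is a legitimate lower bound on $P({\cal S}\le k)$ rather than an equality. Here I would lean on the random-environment Markov-chain framework of \Figure~\ref{fig:markov}, which decouples $p_i$'s analysis from its neighbors by averaging over a random environment, together with Lemma~\ref{th:ewnsigma}: counting only the nodes that signal earliest in their neighborhood \emph{under}-counts the allocations (a neighbor inhibited by an out-of-neighborhood beacon is ignored), so the independence-based figure $\left(1-(1-q)^k\right)^N$ is a valid pessimistic estimate. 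Combining this with the uniform worst-case $q$ obtained from $d=\max_i d_i$ closes the argument for both equation~$(\ref{boundconv})$ and equation~$(\ref{boundk})$.
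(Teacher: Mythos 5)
Your proposal is sound and, for the high-probability bound (equation~(\ref{boundk})), it is essentially the paper's own argument: a worst-case uniform parameter $q=\left(\frac{n-1}{2n}\right)^{d/T}$ via monotonicity in $d_i$, i.i.d.\ geometric convergence times, the product form for the maximum, and inversion of $\left(1-(1-q)^{k}\right)^{N}\ge 1-\alpha$. Where you genuinely diverge is the retransmission bound and equation~(\ref{boundconv}): you obtain it per node from ${\cal S}_i\le 1/q_i$ (equation~(\ref{eq:Sihiwi})) plus Proposition~\ref{prop:2boundq} and linearity of expectation, whereas the paper runs a fair-game (Wald/optional-stopping flavored) accounting --- the bank pays $1$ per allocated node and collects $x/(1-x)$ per failure, so fairness and $\sum_i W_i=N$ give $N=\frac{x}{1-x}E\left(\sum_i L_i\right)$, i.e.\ $(1-x)/x$ expected retransmissions per node. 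Your route is shorter, but the paper's detour is precisely where two things you gloss over get addressed: (i) it argues by stochastic domination that the per-round allocation count dominates an i.i.d.\ Bernoulli sum with parameter $q$ --- this domination claim, not Lemma~\ref{th:ewnsigma} by itself, is what licenses treating nodes as independent, both in your product step and in the paper's equation~(\ref{eq:rviid}); Lemma~\ref{th:ewnsigma} only bounds an \emph{expectation}, which cannot by itself justify a distributional statement about $\Pr(t_{\max}\le k)$; and (ii) it observes that as nodes become allocated the effective parameters change ($T$ and the $d_i$ shrink) and verifies $\frac{\max_i d_i-h_i}{T-f_i}\le\frac{\max_i d_i}{T}$, so the worst-case $q$ stays valid throughout the execution --- your per-node shortcut silently delegates this to Theorem~\ref{th:expS}, which was derived for a fixed environment. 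Finally, a small discrepancy: the additive $1$ in equation~(\ref{boundk}) is not integer rounding; in the paper it falls out of the geometric convention $\Pr(t\ge k')=(1-q)^{k'-1}$ together with the strict inequality in $\Pr(t_{\max}<k')$. With your parametrization $\Pr(t_i>k)=(1-q)^{k}$ the inversion already yields the bound without the $+1$, so your stated $k$ carries one extra round of slack --- still a valid conservative choice, just differently motivated than in the paper.
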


\begin{proof}
Theorem~\ref{th:expS} bounds the convergence time of a particular processor, see equation~$(\ref{boundht})$. Lemma~\ref{th:ewnsigma}, see equation ~($\ref{eq:sfowmvlkhg}$) $E(W)\ge N (\frac{n-1}{2n})^{x/T}$, proves that this bound is still valid if we replace the term $d_i/T$ with $x/T$, i.e., we consider the average degree instead of the particular degree of a node. If we replace $x/T$ by $\text{max}\{d_i\}/T$ in expression ~($\ref{eq:sfowmvlkhg}$) we obtain a larger bound because $x/T \le \text{max}\{d_i\}/T$, i.e. $E(W)\ge N (\frac{n-1}{2n})^{x/T}\ge N (\frac{n-1}{2n})^{\text{max}\{d_i\}/T}$.

The bound $E(W)\ge N (\frac{n-1}{2n})^{\text{max}\{d_i\}/T}$ and the discussion in the $1$st paragraph of section \ref{s:global}, show that the number of processors that are allocated during a broadcasting round is bounded by the random variable $\sum_{i=1}^N z_i$, where $z_i$ are identically and independently distributed random variables that are $1$ with probability $(\frac{n-1}{2n})^{\text{max}\{d_i\}/T}$ and $0$ with probability $1-(\frac{n-1}{2n})^{\text{max}\{d_i\}/T}$ (the second random variable dominate the first one, see \cite{lindvall}). This means that we lower bound the number of processors that are allocated if we consider that they are allocated independently with probability $(\frac{n-1}{2n})^{\text{max}\{d_i\}/T}$.

While the processors get allocated to a timeslot, the parameters $d_i$ and $T$ change because some timeslots are no longer available ($T$ decreases and some nodes are allocated $d_i$ decreases). Actually the ratio becomes $\frac{\text{max}\{d_i\} -{h_i}}{T-f_i}$, where ${h_i} \ge f_i$ because if a timeslot is allocated or sensed used by processor $p_i$ then $T$, the number of available timeslots decreases by $1$ and $d_i$, the number of competing nodes, must decrease at least by one since there must be at least one processor that uses the busy timeslot (there may be multiple that are in state \RSTWO). Under these circumstances we always have $\frac{\text{max}\{d_i\} }{T}\ge \frac{\text{max}\{d_i\} -{h_i}}{T- f_i }$. Thus, we can obtain a lower bound for the expected time to reach the relative state \RSTHREE by assuming that all nodes are allocated independently with probability $x=(\frac{n-1}{2n})^{\text{max}\{d_i\}/T}$. We simplify the following arguments by using this define of $x$.

To bound the number of broadcasting rounds we consider the following game. The bank pays $1$ unit to the nodes that get in state \RSTHREE (get allocated to a timeslot), and receives  $x/(1-x)$ units per nodes that fails to get in state \RSTHREE. The game is fair because in each round the expected gain is $1\times x - x/(1-x)\times (1-x)=0$. If we denote by $W_i$ the number of processors that get in state \RSTHREE during the $i$-th broadcasting round and by $L_i$ the number of processors that fail we have that the gain is given by equation~\ref{eq:interdeb}, where $t$ denotes the total number of rounds.
\begin{equation}\label{eq:interdeb}
\text{gain}=\sum_{i=1}^t\bigl(\frac{x}{1-x} L_i-W_i\bigr)
\end{equation}
The expected gain is $0$ because the game is fair ($E(gain)=0$) and $\sum_{i=1}^t W_i=N$ because eventually all the nodes get in state \RSTHREE and the bank pays $1$ unit for each such processors. If we compute the expectation on both sides of equation~$(\ref{eq:interdeb})$, we then obtain equation~\ref{eq:sprjkbd}.
\begin{equation}\label{eq:sprjkbd}
N=\frac{x}{1-x} E\bigl(\sum_{i=1}^t L_i\bigr)
\end{equation}
We observe that $E(\sum_{i=1}^t L_i)$ is the expected total number of retransmissions and $E(\sum_{i=1}^t L_i)/N$ is the average expected number of retransmissions whose value is $(1-x)/x$. replacing $x$ with its expression, we obtain that the average number of retransmission is bounded by $(2n/(n-1))^{\text{max}\{d_i\}/T}-1$ and, this leads to the bound equation~$(\ref{boundconv})$.

To prove the second assertion, let $t_1,\ldots, t_N$ be the convergence time of nodes $1, \ldots N$, respectively. The random variables, $t_i$, are bound by random variables with geometric random distribution with expectation of $(2n/(n-1))^{d/T}$, with $d=max\{ d_i : d_i\in P\}$. We require that $t_{\max}=\max\{t_1,\ldots,t_N\}$ in order to ensure that all nodes are allocated with timeslots. The fact that the random variables, $t_i$, are independent and identically distributed, implies equation~$(\ref{eq:rviid})$, where $t$ is a random geometrical random variable, i.e., $\Pr(t={k{^\prime}})=(1-q)^{{k{^\prime}}-1} q$ and $\Pr(t \ge {k{^\prime}})=(1-q)^{{k{^\prime}}-1}$.

\begin{eqnarray}
\label{eq:rviid}
\Pr\left(t_{\max}\le {k{^\prime}}\right)=P\left(t_1\le {k{^\prime}},\ldots, t_N \le {k{^\prime}} \right)=\\\nonumber
\Pr\left(t_1\le {k{^\prime}} \right)\cdot \ldots \cdot P\left(t_N\le {k{^\prime}} \right)=P\left(t \le {k{^\prime}} \right)^N
\end{eqnarray}
Which $t_{\max}\le {k{^\prime}}$ satisfies equation~$(\ref{eq:rvdloe})$ with probability ${\alpha}$?
\begin{eqnarray}
\label{eq:rvdloe}
\Pr(t_{\max}< {k{^\prime}})=\Pr\left(t< {k{^\prime}}\right)^N=\\\nonumber
\left(1-\left(1-q\right)^{{k{^\prime}}-1}\right)^N\ge 1-{\alpha}
\end{eqnarray}
By solving equation~$(\ref{eq:rvdloe})$, we observe that equation~$(\ref{eq:rvdloe})$ is satisfied for any ${k{^\prime}} \geq k$, where $k$ satisfies equation~$(\ref{boundk})$. This proves that, with probability $1-{\alpha}$, the network convergence time is bounded by equation~$(\ref{boundk})$.
\end{proof}

\Section{Implementation}
\label{s:imp}
Existing MAC protocols offer mechanisms for dealing with contention (timeslot exhaustion) via policies for administering message priority, such as~\cite{DBLP:journals/cn/RomT81}. In particular, the IEEE {\tt 802.11p} standard considers four priorities and techniques for facilitating their policy implementation. We explain similar techniques that can facilitate the needed mechanisms.

\Subsection{Prioritized listening/signaling periods}
We partition the sequence of listening periods, $[0, MaxRnd)$, into $MaxPrt$ subsequences,  $[0, MaxRnd_{0}), \ldots [MaxRnd_{MaxPrt-2}, MaxRnd_{MaxPrt-1})$, where $[MaxRnd_{k-1}, MaxRnd_{k})$ is used only for the $k$-th priority. E.g., suppose that there are six listening/signaling periods, and that nodes with the highest priority may use the first three listening/signaling periods, $[0, 2]$, and nodes with the lowest priority may use the last three, $[3, 5]$. In the case of two neighbors with different listening period parameters, say $[0, 2]$ and $[3, 5]$, that attempt to acquire the same broadcasting timeslot, the highest priority node always attempts to broadcast before the lowest priority one.

\Subsection{TDMA-based back-off}
Let us consider two back-off parameters, $CW_{start}$ and $CW_{end}$, that refer to the maximal and minimal values of the contention window. Before selecting an unused timeslot, the procedure counts a random number of unused ones. \Figure~\ref{f:slct} presents an implementation of the \op{select$\_$unused}$()$ function that facilitates back-off strategies as an alternative to the implementation presented in line~\ref{l:uniformselect} of \Figure~\ref{f:smp}.

\begin{figure}[t]
\begin{flushleft}
\begin{minipage}[b]{0.99\linewidth}
\begin{center}
\fbox{\hspace*{+10pt}{\lstinputlisting[language=ioaNums]{select.ioa}}}
\end{center}
\end{minipage}
\end{flushleft}
\caption[]{\op{select$\_$unused}$()$ with TDMA-based back-off}\label{f:slct}
\end{figure}

The statically allocated variable $count$ records the number of backoff steps that node $p_i$ takes until it reaches the zero value. Whenever the function \op{select$\_$unused}$()$ is invoked with $count_i=0$, node $p_i$ assigns to $count_i$  a random integer from $[CW_{start}, CW_{end}]$ (cf. line~\ref{l:zero}). Whenever the value of $count_i$ is not greater than the number of unused timeslots, the returned timeslot is selected uniformly at random (cf. lines~\ref{l:wtrtrn} to~\ref{l:smpl}). Otherwise, a $\bot$-value is returned after deducting the number of unused timeslots during the previous broadcasting round (cf. lines~\ref{l:default} and~\ref{l:rtnbot}).


\begin{figure}[t]
\begin{center}
\begin{minipage}{0.9\textwidth}
{\includegraphics[scale=0.80,viewport=100 270 470 572]{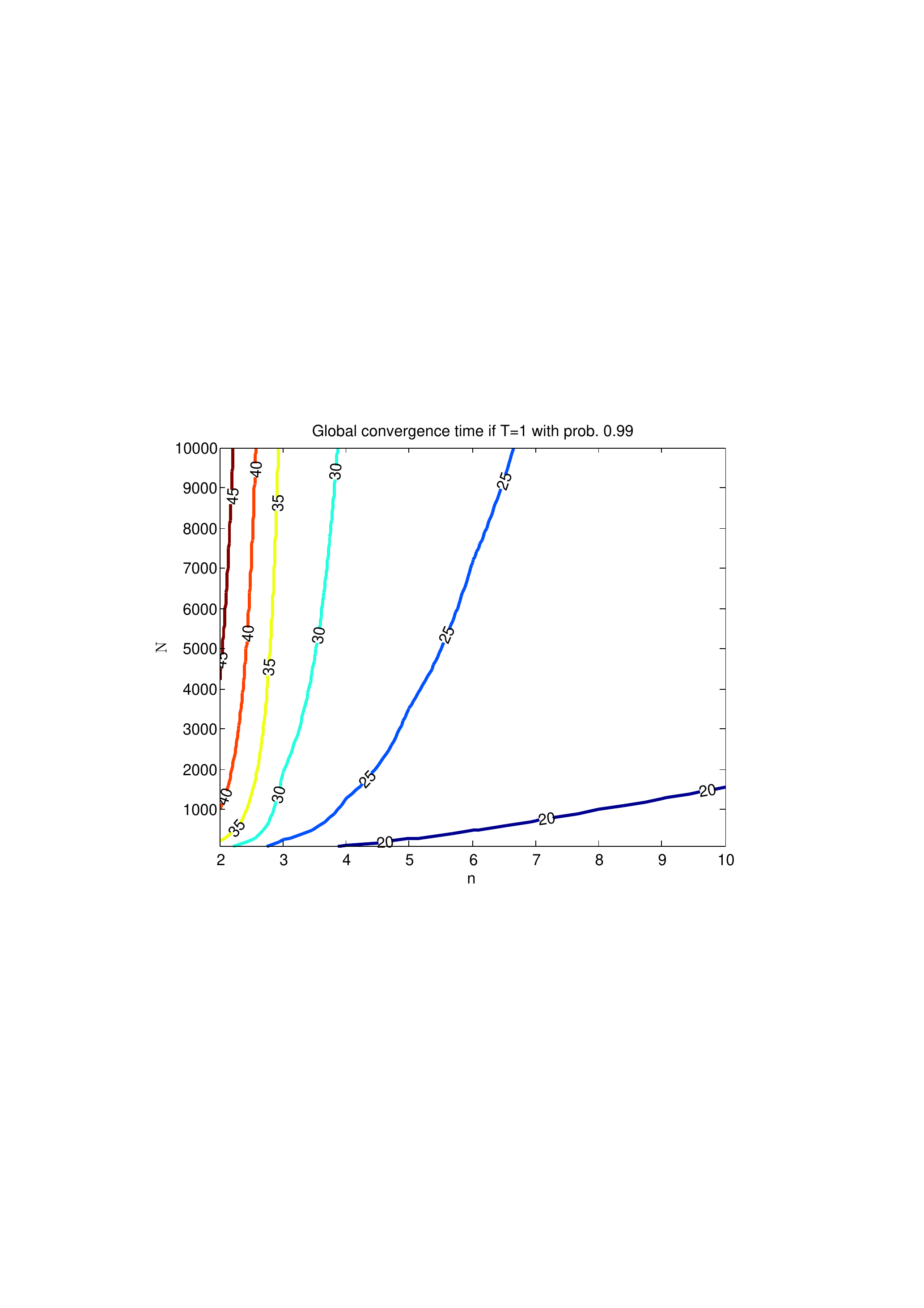}}
\end{minipage}
\end{center}
\begin{center}
\caption{Contour plot of equation~$(\ref{boundk})$ for $s=d/T=1$. {\em Contour charts}~\cite{crocker1947frontal} present two parameter functions, e.g., the convergence time function, $k(n,N)$ presented in equation~$(\ref{boundk})$.
Contour lines in~\Figure~\ref{fig:boundk} connect values of $k(n,N)$ that are the same (see the text tags along the line). When $N$ nodes attempt to access the medium, the conversance time, ${\cal S}$ (cf. the contour lines), is stable in the presence of a growing number, $n$, of listening/signaling periods.}\label{fig:boundk}
\end{center}

\BB\B\B
\end{figure}

\Section{Discussion}
\label{s:dis}

Thus far, both schedule-based and non-schedule-based MAC algorithms could not consider timing requirements within a provably short recovery period that follows (arbitrary) transient faults and network topology changes. This work proposes the first self-stabilizing TDMA algorithm for \manet that has a provably short convergence period. Thus, the proposed algorithm possesses a greater predictability degree, whilst maintaining low communication delays and high throughput.

In this discussion, we would like to point out the algorithm's ability to facilitate the satisfaction of severe timing requirements for \manet by numerically validating Theorem~\ref{th:thbounds}.
As a case study, we show that, for the considered settings of \Figure~\ref{fig:bounds}, the global convergence time is brief and definitive.
\Figure~\ref{fig:boundk} shows that when allowing merely a small fraction of the bandwidth to be spent on frame control information, say three listening/signaling periods, and when considering $99\%$ probability to convergence within a couple of dozen TDMA frames, the proposed algorithm demonstrates a low dependency degree on the number of nodes in the network even when considering $10,000$ nodes.
We have implemented the proposed algorithm, extensively validated our analysis via computer simulation, and tested it on a platform with more than two dozen nodes~\cite{MPSTT12}. These results indeed validate that the proposed algorithm can indeed facilitate the implementation of MAC protocols that guarantee satisfying these severe timing requirements.

The costs associated with predictable communications, say, using cellular base-stations, motivate the adoption of new networking technologies, such as MANETs and VANETs. In the context of these technologies, we expect that the proposed algorithm will contribute to the development of MAC protocols with a higher predictability degree.

\smallskip
\noindent{\bf Acknowledgments}
We thank Thomas Petig for improving the presentation.


\bibliographystyle{plain}
\bibliography{rwork}

\begin{thebibliography}{10}

\bibitem{abramson1985da}
Norman Abramson.
\newblock {Development of the ALOHANET}.
\newblock {\em Info. Theory, IEEE Trans. on}, 31(2):119--123, 1985.

\bibitem{DBLP:conf/vtc/AbrateVS11}
Fabrizio Abrate, Andrea Vesco, and Riccardo Scopigno.
\newblock An analytical packet error rate model for wave receivers.
\newblock In {\em VTC Fall}, pages 1--5. IEEE, 2011.

\bibitem{aldous}
David Aldous and Jim Fill.
\newblock Reversible markov chain and random walks on graph.
\newblock unpublished, available at
  \url{http://www.stat.berkeley.edu/~aldous/RWG/book.html}, 1999.

\bibitem{arumugam2006self}
M.~Arumugam and S.S. Kulkarni.
\newblock {Self-stabilizing deterministic time division multiple access for
  sensor networks}.
\newblock {\em AIAA Journal of Aerospace Computing, Info., and Comm. (JACIC)},
  3:403--419, 2006.

\bibitem{DBLP:conf/icdcit/ArumugamK05}
Mahesh Arumugam and Sandeep~S. Kulkarni.
\newblock Self-stabilizing deterministic \textmd{TDMA} for sensor networks.
\newblock In Goutam Chakraborty, editor, {\em 2nd Inter. Conf. Distributed
  Computing and Internet Technology (ICDCIT)}, volume 3816 of {\em LNCS}, pages
  69--81. Springer, 2005.

\bibitem{DBLP:conf/vtc/BilstrupUSB08}
Katrin Bilstrup, Elisabeth Uhlemann, Erik~G. Str{\"o}m, and Urban Bilstrup.
\newblock Evaluation of the \textmd{IEEE} 802.11p \textmd{MAC} method for
  vehicle-to-vehicle communication.
\newblock In {\em VTC Fall}, pages 1--5. \textmd{IEEE}, 2008.

\bibitem{katrin2009ability}
Katrin Bilstrup, Elisabeth Uhlemann, Erik~G. Str\"{o}m, and Urban Bilstrup.
\newblock On the ability of the 802.11p \textmd{MAC} method and
  \textmd{S\textmd{TDMA}} to support real-time vehicle-to-vehicle
  communication.
\newblock {\em EURASIP Journal on Wireless Comm. \& Net.}, 2009:1--13, 2009.

\bibitem{DBLP:conf/sirocco/BlairM09}
Jean R.~S. Blair and Fredrik Manne.
\newblock An efficient self-stabilizing distance-2 coloring algorithm.
\newblock In Shay Kutten and Janez Zerovnik, editors, {\em SIROCCO}, volume
  5869 of {\em Lecture Notes in Computer Science}, pages 237--251. Springer,
  2009.

\bibitem{cogburn84}
Robert Cogburn.
\newblock The ergodic theory of markov chains in random environments.
\newblock {\em Probability Theory and Related Fields (Zeitschrift f\"{u}r
  Wahrscheinlichkeitstheorie und Verwandte Gebiete)}, 66(1):109--128, 1984.

\bibitem{DBLP:conf/wdag/CornejoK10}
Alejandro Cornejo and Fabian Kuhn.
\newblock Deploying wireless networks with beeps.
\newblock In Nancy~A. Lynch and Alexander~A. Shvartsman, editors, {\em 24th
  Inter. Symp. on Distributed Computing (DISC'10)}, volume 6343 of {\em LNCS},
  pages 148--162. Springer, 2010.

\bibitem{CS09}
Hector~Agustin Cozzetti and Riccardo Scopigno.
\newblock Rr-aloha+: A slotted and distributed mac protocol for vehicular
  communications.
\newblock In {\em Vehicular Networking Conference (VNC), 2009 IEEE}, pages 1
  --8, Oct. 2009.

\bibitem{DBLP:conf/apscc/CozzettiSCB09}
Hector~Agustin Cozzetti, Riccardo Scopigno, Luca Casone, and Giuseppe Barba.
\newblock Comparative analysis of ieee 802.11p and ms-aloha in vanet scenarios.
\newblock In Markus Kirchberg, Patrick C.~K. Hung, Barbara Carminati, Chi-Hung
  Chi, Rajaraman Kanagasabai, Emanuele~Della Valle, Kun-Chan Lan, and Ling-Jyh
  Chen, editors, {\em APSCC}, pages 64--69. IEEE, 2009.

\bibitem{crocker1947frontal}
AM~Crocker, WL~Godson, and CM~Penner.
\newblock {Frontal contour charts}.
\newblock {\em Journal of the Atmospheric Sciences}, 4(3), 1947.

\bibitem{DBLP:conf/sss/DaliotDP03}
Ariel Daliot, Danny Dolev, and Hanna Parnas.
\newblock Self-stabilizing pulse synchronization inspired by biological
  pacemaker networks.
\newblock In Shing-Tsaan Huang and Ted Herman, editors, {\em 6th Inter. Symp.
  on Self-Stabilizing Systems (SSS)}, volume 2704 of {\em LNCS}, pages 32--48.
  Springer, 2003.

\bibitem{DBLP:conf/broadnets/DemirbasH06}
Murat Demirbas and Muzammil Hussain.
\newblock A \textmd{MAC} layer protocol for priority-based reliable multicast
  in wireless ad hoc networks.
\newblock In {\em 3rd Inter. Conf. on Broadband Comm., Net., and Systems
  (BROADNETS)}. IEEE, 2006.

\bibitem{D00}
Shlomi Dolev.
\newblock {\em {Self-Stabilization}}.
\newblock MIT Press, 2000.

\bibitem{DBLP:journals/cjtcs/DolevH97}
Shlomi Dolev and Ted Herman.
\newblock Superstabilizing protocols for dynamic distributed systems.
\newblock {\em Chicago J. Theor. Comput. Sci.}, 1997.

\bibitem{DBLP:journals/tse/DolevIM95}
Shlomi Dolev, Amos Israeli, and Shlomo Moran.
\newblock Analyzing expected time by scheduler-luck games.
\newblock {\em IEEE Trans. Software Eng.}, 21(5):429--439, 1995.

\bibitem{DBLP:conf/sigcomm/GollakotaK08}
Shyamnath Gollakota and Dina Katabi.
\newblock Zigzag decoding: combating hidden terminals in wireless networks.
\newblock In Victor Bahl, David Wetherall, Stefan Savage, and Ion Stoica,
  editors, {\em SIGCOMM}, pages 159--170. ACM, 2008.

\bibitem{DBLP:conf/infocom/GoussevskaiaWHW09}
Olga Goussevskaia, Roger Wattenhofer, Magn{\'u}s~M. Halld{\'o}rsson, and Emo
  Welzl.
\newblock Capacity of arbitrary wireless networks.
\newblock In {\em INFOCOM}, pages 1872--1880. IEEE, 2009.

\bibitem{Haenggi09}
Martin Haenggi.
\newblock Outage, local throughput, and capacity of random wireless networks.
\newblock {\em Trans. Wireless. Comm.}, 8(8):4350--4359, 2009.

\bibitem{DBLP:conf/icalp/HalldorssonW09}
Magn{\'u}s~M. Halld{\'o}rsson and Roger Wattenhofer.
\newblock Wireless comm. is in apx.
\newblock In Susanne Albers, Alberto Marchetti-Spaccamela, Yossi Matias,
  Sotiris~E. Nikoletseas, and Wolfgang Thomas, editors, {\em ICALP (1)}, volume
  5555 of {\em LNCS}, pages 525--536. Springer, 2009.

\bibitem{hartenstein2010vanet}
H.~Hartenstein and K.~Laberteaux.
\newblock {\em {VANET: Vehicular Applications and Inter-Networking
  Technologies}}.
\newblock Wiley, 2010.

\bibitem{DBLP:conf/algosensors/HermanT04}
Ted Herman and S{\'e}bastien Tixeuil.
\newblock A distributed \textmd{TDMA} slot assignment algorithm for wireless
  sensor networks.
\newblock In {\em 5th Inter. Workshop on Algo. Aspects of Wireless Sensor Net.
  (ALGOSENSORS)}, volume 3121 of {\em LNCS}, pages 45--58. Springer, 2004.

\bibitem{hofmann1993global}
B.~Hofmann-Wellenhof, H.~Lichtenegger, and J.~Collins.
\newblock {\em {Global positioning System. Theory and Practice}}.
\newblock Springer-Verlag, 1993.

\bibitem{jamieson2005understanding}
Kyle Jamieson, Bret Hull, Allen~K. Miu, and Hari Balakrishnan.
\newblock Understanding the real-world performance of carrier sense.
\newblock In {\em ACM SIGCOMM Workshop on Experimental Approaches to Wireless
  Network Design and Analysis (E-WIND)}, Philadelphia, PA, August 2005.

\bibitem{jensen1906fonctions}
Johan Ludwig William~Valdemar Jensen.
\newblock {Sur les fonctions convexes et les in{\'e}galit{\'e}s entre les
  valeurs moyennes}.
\newblock {\em Acta Mathematica}, 30(1):175--193, 1906.

\bibitem{DBLP:conf/icdcit/JhumkaK07}
Arshad Jhumka and Sandeep~S. Kulkarni.
\newblock On the design of mobility-tolerant \textmd{TDMA}-based media access
  control (\textmd{MAC}) protocol for mobile sensor networks.
\newblock In Tomasz Janowski and Hrushikesha Mohanty, editors, {\em ICDCIT},
  volume 4882 of {\em LNCS}, pages 42--53. Springer, 2007.

\bibitem{DBLP:conf/wdag/KuhnLN09}
Fabian Kuhn, Nancy~A. Lynch, and Calvin~C. Newport.
\newblock The abstract \textmd{MAC} layer.
\newblock In Idit Keidar, editor, {\em 23rd Inter. Symp. on Distributed
  Computing (DISC)}, volume 5805 of {\em LNCS}, pages 48--62. Springer, 2009.

\bibitem{kulkarni-ss}
Sandeep~S. Kulkarni and Mahesh~Umamaheswaran Arumugam.
\newblock {\em Sensor Network Operations}, chapter \textmd{SS-\textmd{TDMA}}: A
  self-stabilizing \textmd{MAC} for sensor networks.
\newblock \textmd{IEEE} Press, 2006.

\bibitem{Telematik_LNWT_2009_SelfWISE}
Andreas Lagemann, J\"{o}rg Nolte, Christoph Weyer, and Volker Turau.
\newblock Mission statement: Applying self-stabilization to wireless sensor
  networks.
\newblock In {\em 8th GI/ITG KuVS Fachgespr\"{a}ch ``Drahtlose Sensornetze''
  (FGSN)}, pages 47--49, August 2009.

\bibitem{lenoble2009header}
M.~Lenoble, K.~Ito, Y.~Tadokoro, M.~Takanashi, and K.~Sanda.
\newblock Header reduction to increase the throughput in decentralized
  \textmd{TDMA}-based vehicular networks.
\newblock In {\em Vehicular Networking Conference (VNC), 2009 IEEE}, pages
  1--4. IEEE, 2009.

\bibitem{Lenzen2010Clock}
Christoph Lenzen, Thomas Locher, Philipp Sommer, and Roger Wattenhofer.
\newblock {Clock Synchronization: Open Problems in Theory and Practice}.
\newblock In {\em {36th Inter. Conf. on Current Trends in Theory and Practice
  of Computer Science (SOFSEM), \v{O}pindleruv Ml\'{y}n, Czech Republic}},
  January 2010.

\bibitem{DBLP:conf/sss/LenzenSW09}
Christoph Lenzen, Jukka Suomela, and Roger Wattenhofer.
\newblock Local algorithms: Self-stabilization on speed.
\newblock In Rachid Guerraoui and Franck Petit, editors, {\em 11th Inter. Symp.
  on Stabilization, Safety, and Security of Distributed Systems (SSS)}, volume
  5873 of {\em LNCS}, pages 17--34. Springer, 2009.

\bibitem{DBLP:conf/algosensors/LeonePS09}
Pierre Leone, Marina Papatriantafilou, and Elad~Michael Schiller.
\newblock Relocation analysis of stabilizing \textmd{MAC} algorithms for
  large-scale mobile ad hoc networks.
\newblock In {\em 5th Inter. Workshop on Algo. Aspects of Wireless Sensor Net.
  (ALGOSENSORS)}, pages 203--217, 2009.

\bibitem{LPS09APMAC}
Pierre Leone, Marina Papatriantafilou, Elad~Michael Schiller, and Gongxi Zhu.
\newblock Analyzing protocols for media access control in large-scale mobile ad
  hoc networks.
\newblock In {\em Workshop on Self-Organising Wireless Sensor and Comm. Net.
  (Somsed)}, October 2009.
\newblock Hamburg, Germany.

\bibitem{DBLP:conf/sss/LeonePSZ10}
Pierre Leone, Marina Papatriantafilou, Elad~Michael Schiller, and Gongxi Zhu.
\newblock Chameleon-mac: Adaptive and self-$\star$ algorithms for media access
  control in mobile ad hoc networks.
\newblock In Shlomi Dolev, Jorge~Arturo Cobb, Michael~J. Fischer, and Moti
  Yung, editors, {\em 12th Inter. Symp. on Stabilization, Safety, and Security
  of Distributed Systems (SSS'10)}, volume 6366 of {\em LNCS}, pages 468--488.
  Springer, 2010.

\bibitem{lindvall}
Torgny Lindvall.
\newblock {\em Lectures on the Coupling Method}.
\newblock Dover Publications, Inc., 1992.

\bibitem{MPSTT12}
Mohamed Mustafa, Marina Papatriantafilou, Elad~Michael Schiller, Amir Tohidi,
  and Philippas Tsigas.
\newblock Autonomous \textmd{TDMA} alignment for \textmd{VANETs}.
\newblock In {\em IEEE 76th Vehicular Technology Conference (VTC'12-Fall)},
  2012.

\bibitem{DBLP:journals/cn/RomT81}
Raphael Rom and Fouad~A. Tobagi.
\newblock Message-based priority functions in local multiaccess communication
  systems.
\newblock {\em Computer Networks}, 5:273--286, 1981.

\bibitem{ross}
Sheldon~M. Ross.
\newblock {\em Stochastic Processes}.
\newblock John Wiley \& Sons, Inc., 1996.

\bibitem{schmidt1974std}
William~G. Schmidt.
\newblock Satellite time-division multiple access systems: Past, present and
  future.
\newblock {\em TeleComm.}, 7:21--24, 1974.

\bibitem{DBLP:conf/podc/SchneiderW09}
Johannes Schneider and Roger Wattenhofer.
\newblock Coloring unstructured wireless multi-hop networks.
\newblock In Srikanta Tirthapura and Lorenzo Alvisi, editors, {\em 28th Annual
  ACM Symp. on Principles of Distributed Computing (PODC)}, pages 210--219.
  ACM, 2009.

\bibitem{DBLP:conf/vtc/ScopignoC09}
Riccardo Scopigno and Hector~Agustin Cozzetti.
\newblock Mobile slotted aloha for \textmd{VANETs}.
\newblock In {\em VTC Fall}, pages 1 -- 5. IEEE, 2009.

\bibitem{DBLP:conf/ntms/ScopignoC09}
Riccardo Scopigno and Hector~Agustin Cozzetti.
\newblock \textmd{GNSS} synchronization in \textmd{VANETs}.
\newblock In Khaldoun~Al Agha, Mohamad Badra, and Gregory~B. Newby, editors,
  {\em NTMS}, pages 1--5. IEEE, 2009.

\bibitem{DBLP:conf/vtc/ScopignoC10}
Riccardo Scopigno and Hector~Agustin Cozzetti.
\newblock Evaluation of time-space efficiency in \textmd{CSMA/CA} and slotted
  \textmd{VANETs}.
\newblock In {\em VTC Fall}, pages 1--5. IEEE, 2010.

\bibitem{steele}
J.M. Steele.
\newblock {\em The Cauchy-Schwartz Master Class}.
\newblock Cambridge University Press, 2004.

\bibitem{TII08}
Yukihiro Tadokoro, Kenji Ito, Junji Imai, Noriyoshi Suzuki, and Nobuo Itoh.
\newblock Advanced transmission cycle control scheme for autonomous
  decentralized \textmd{TDMA} protocol in safe driving support systems.
\newblock In {\em Intelligent Vehicles Symposium, 2008 IEEE}, pages 1062
  --1067, june 2008.

\bibitem{DBLP:conf/algosensors/ViqarW09}
Saira Viqar and Jennifer~L. Welch.
\newblock Deterministic collision free communication despite continuous motion.
\newblock In {\em 5th Inter. Workshop on Algo. Aspects of Wireless Sensor Net.
  (ALGOSENSORS)}, pages 218--229, 2009.

\bibitem{Wattenhofer2010Theory}
Roger Wattenhofer.
\newblock {Theory Meets Practice, It's about Time!}
\newblock In {\em {36th Inter. Conf. on Current Trends in Theory and Practice
  of Computer Science (SOFSEM), \v{S}pindleruv Ml\'{y}n, Czech Republic}},
  January 2010.

\bibitem{DBLP:conf/infocom/YeHE02}
Wei Ye, John~S. Heidemann, and Deborah Estrin.
\newblock An energy-efficient \textmd{MAC} protocol for wireless sensor
  networks.
\newblock In {\em INFOCOM}, 2002.

\bibitem{YB07}
Fan Yu and Subir Biswas.
\newblock Self-configuring \textmd{TDMA} protocols for enhancing vehicle safety
  with dsrc based vehicle-to-vehicle communications.
\newblock {\em Selected Areas in Communications, IEEE Journal on}, 25(8):1526
  --1537, oct. 2007.

\end{thebibliography}
\pagebreak~\\
\noindent {\Large {\bf Appendix}}\\

\Section{Properties~$(\ref{eq:invarIa})$ to~$(\ref{eq:invarIIai})$}
\label{s:invarI}
Propositions~\ref{l:invarI},~\ref{l:invarI1} and~\ref{l:invarIIb} imply that properties~$(\ref{eq:invarIa})$,~$(\ref{eq:invarIb})$, and respectively,~$(\ref{eq:invarIIai})$ hold within two broadcasting rounds (or one complete broadcasting round). Let $R$ be an execution of the \dstclr, $x>0$ an integer, and  $c{^{_{\op{timeslot}}}_{^{0}}}(x)$ the first configuration in a complete broadcasting round $R(x)=(c{^{_{\op{timeslot}}}_{^{0}}}(x), \ldots$ $c{^{_{\op{carrier\_sense/receive}}}_{^{T-1}}}(x))$. We note that  $c{^{_{\op{timeslot}}}_{^{0}}}(x)$ follows an arbitrary starting configuration.

Proposition~\ref{l:invarI} shows that, within a broadcasting round from $c{^{_{\op{timeslot}}}_{^{0}}}(x)$, Property~$(\ref{eq:invarIa})$ holds.

\begin{proposition}
\label{l:invarI}
In $c{^{_{\op{\timeslot}}}_{^{0}}}(x+1)$, it holds that $signal_i=\op{false}$.
\end{proposition}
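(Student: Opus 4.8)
The plan is to track the single Boolean variable $signal_i$ across the whole broadcasting round $R(x)$ and argue that, whatever its (arbitrary) value in $c{^{_{\op{timeslot}}}_{^{0}}}(x)$, it is necessarily $\op{false}$ once the round boundary $c{^{_{\op{timeslot}}}_{^{0}}}(x+1)$ is reached. First I would locate in the code of \Figure~\ref{f:smp} the only two places that write to $signal_i$: the unconditional reset $signal_i:=\op{false}$ performed as part of the stale-information cleanup at the head of every \op{timeslot} handler (line~\ref{l:outdated}), and the assignment $signal_i:=\op{true}$ that accompanies a beacon transmission inside the listening/signaling competition (around line~\ref{l:transmit}). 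Because the starting configuration is arbitrary, the argument cannot assume anything about $signal_i$ in $c{^{_{\op{timeslot}}}_{^{0}}}(x)$; it must rely entirely on the guaranteed writes that occur during the processing of round $R(x)$.

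The key structural fact I would establish is that the competition, and hence the only assignment $signal_i:=\op{true}$, is confined to a single timeslot: node $p_i$ enters the competition (lines~\ref{l:for} to~\ref{l:transmit}) only when the current clock value equals its broadcasting timeslot $s_i$ (reached via line~\ref{l:fetch}), and the loop terminates within that timeslot, either because $p_i$ emits its own beacon or because it senses a neighbor's earlier beacon. Since round $R(x)$ executes the \op{timeslot} handler for every $t\in[0,T-1]$, and that handler opens with the reset on line~\ref{l:outdated}, in every timeslot that is \emph{not} $p_i$'s broadcasting timeslot the variable is driven to $\op{false}$ and left there. Thus the only way $signal_i$ could still be $\op{true}$ at the round boundary is if $p_i$'s broadcasting timeslot is the last one of the round and $p_i$ signals in it.

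I would close this remaining case by arguing that $signal_i$ is cleared again before the timeslot completes, so that no broadcast attempt straddles the boundary between $R(x)$ and $R(x+1)$ --- this is exactly the informal reading of Property~$(\ref{eq:invarIa})$, namely that $p_i$ finishes any broadcast attempt within a \timeslot. Concretely, I would verify from the pseudocode that after $p_i$ wins the competition and performs its \op{send}$()$/\op{transmit}$()$ step, the processing of that timeslot returns with $signal_i$ restored to $\op{false}$, and symmetrically on the abort branch (an early \op{carrier$\_$sense}$()$) the variable is never raised at all, so that the value carried into $c{^{_{\op{timeslot}}}_{^{0}}}(x+1)$ is $\op{false}$ in all cases.

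The main obstacle I anticipate is precisely this last step: pinning down \emph{where} in the control flow $signal_i$ is returned to $\op{false}$ and checking that \emph{every} execution path through the competition routine --- in particular the boundary path $s_i=T-1$ and the branch where $p_i$ aborts on an early \op{carrier$\_$sense}$()$ --- leaves $signal_i=\op{false}$ at the close of the timeslot. This is a finite but fiddly case analysis over the code of \Figure~\ref{f:smp} rather than a conceptual difficulty; once it is dispatched the conclusion holds for an arbitrary node $p_i$ and hence establishes Property~$(\ref{eq:invarIa})$ in $c{^{_{\op{timeslot}}}_{^{0}}}(x+1)$.
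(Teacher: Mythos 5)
Your plan is correct and takes essentially the same approach as the paper: identify the only writes to $signal_i$ (the reset at line~\ref{l:outdated} at the head of every \timeslot, and the raise inside \op{send}$()$), then check that every exit path of the listening/signaling competition restores $signal_i=\op{false}$ before the \timeslot closes. The finite case analysis you defer is settled in the paper by two code facts your sketch slightly misstates: $signal_i$ is raised unconditionally at line~\ref{l:for} when \op{send}$()$ begins (so on the abort branch it \emph{is} raised and then cleared by the carrier-sense handler at line~\ref{l:emptr}, rather than never raised at all), and when no carrier sense occurs the competition loop necessarily ends with a beacon transmission because the condition at line~\ref{l:p} holds with certainty when $k=MaxRnd$, whereupon line~\ref{l:quit} clears $signal_i$.
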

\begin{proof}
The value of $signal_i$ is updated in line~\ref{l:for} (assigned to \op{true}) and in lines~\ref{l:outdated}, \ref{l:quit}, and~\ref{l:emptr} (assigned to \op{false}). Let us look into these assignments.

In every \timeslot, the value \op{false} is assigned to $signal_i$ (cf. line~\ref{l:outdated}). Suppose that the function \op{send}$()$ is called, and thus, \op{true} is assigned to $signal_i$ (line~\ref{l:for}). We proposition that before returning from the function \op{send}$()$ and after \op{true} is assigned to $signal_i$ (line~\ref{l:for}), node $p_i$ must assign  \op{false} to $signal_i$ either in line~\ref{l:quit} or~\ref{l:emptr}. To see that, let us look at lines~\ref{l:for} and~\ref{l:p}. Eventually either $signal_i = \op{false}$ (because of an assignment in line~\ref{l:emptr}) or ${\rho(k)}=\op{true}$ (line~\ref{l:p}) holds (note the condition when $k=MaxRnd$). The latter case implies the execution of line~\ref{l:quit}.
\end{proof}

Proposition~\ref{l:invarI1} shows that, within a broadcasting round from $c{^{_{\op{timeslot}}}_{^{0}}}(x)$, Property~$(\ref{eq:invarIb})$ holds.

\begin{proposition}
\label{l:invarI1}
$(\exists t \in unused\_set_i \setminus \{ s_i \}) \leftrightarrow (\nexists p_k \in {\cal N}_i : s_k = t)$ in $c{^{_{\op{timeslot}}}_{^{0}}}(x+1)$.
\end{proposition}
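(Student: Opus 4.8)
The plan is to establish Proposition~\ref{l:invarI1} by unfolding the semantics of the variable $unused\_set_i$ across a single complete broadcasting round and matching its contents against the carrier-sense events triggered by neighbors. First I would fix the execution $R$, the integer $x>0$, and trace what happens to $unused\_set_i$ (and the derived predicate ``$t \in unused_i$'') from configuration $c{^{_{\op{timeslot}}}_{^{0}}}(x)$ onward. The key observation is that line~\ref{l:outdated} resets the unused-indication for each timeslot $t$ at the start of that timeslot, purging any stale information that may have been present in the arbitrary starting configuration; this is precisely the mechanism that makes the claim hold regardless of the corrupt initial state. Thus the value of $unused_i[t]$ that is visible in $c{^{_{\op{timeslot}}}_{^{0}}}(x+1)$ reflects only information gathered during the round $R(x)$ itself, not before it.

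Next I would argue both directions of the biconditional. For the forward direction (if $t \in unused\_set_i \setminus \{s_i\}$ then no neighbor uses $t$): since the indication for $t$ was freshly reset at the start of timeslot $t$ (line~\ref{l:outdated}) and remains marked ``unused'' in $c{^{_{\op{timeslot}}}_{^{0}}}(x+1)$, the $\op{carrier\_sense}(t)$ event was never raised during timeslot $t$ (otherwise line~\ref{l:emptr} would have flipped the indication to ``used''). By the carrier-sensing assumption from the Communications section, every neighbor $p_k \in \mathcal{N}_i$ with $s_k = t$ that actually transmits during $t$ causes $p_i$ to raise $\op{carrier\_sense}(t)$ within the exposure time $\epsilon$; the contrapositive gives that no neighbor transmitted in $t$, hence no neighbor has $s_k = t$. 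For the reverse direction (if no neighbor uses $t$ then $p_i$ sees $t$ as unused), I would note that with no neighbor transmitting in timeslot $t$, the event $\op{carrier\_sense}(t)$ is not raised, line~\ref{l:emptr} never executes for $t$, and the fresh ``unused'' marking set in line~\ref{l:outdated} survives untouched into $c{^{_{\op{timeslot}}}_{^{0}}}(x+1)$, so $t$ remains in $unused\_set_i$.

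I expect the main obstacle to be the careful bookkeeping of \emph{timing}: the claim speaks about what a neighbor does ``during'' the round, but a neighbor's decision to transmit in timeslot $t$ depends on its own listening/signaling competition, and I must be sure that the relevant transmissions (or their absence) are exactly the ones reflected in $unused_i[t]$ at the moment $c{^{_{\op{timeslot}}}_{^{0}}}(x+1)$ is observed. The subtlety is that $s_k = t$ for a neighbor $p_k$ does not by itself guarantee $p_k$ transmits in $t$ --- $p_k$ may be inhibited by a carrier-sense from one of \emph{its own} neighbors and therefore stay silent. I would address this by appealing to the ``allocated''/competition structure: among the nodes competing for timeslot $t$, at least one wins the listening/signaling competition and transmits its beacon, so whenever some neighbor has $s_k = t$, \emph{some} transmission reaches $p_i$'s carrier sense, which is all the reverse-implication contrapositive requires. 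The exclusion ``$t \neq s_i$'' handles the symmetric case in which $p_i$ itself is the transmitter, so that $p_i$'s own use of $s_i$ is not conflated with a neighbor's use. Once these timing relationships are pinned down, the proposition follows by a direct case analysis on whether $\op{carrier\_sense}(t)$ fires during timeslot $t$.
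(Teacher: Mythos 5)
Your skeleton---$unused_i[t]$ is written only at line~\ref{l:outdated} (set to \op{true} at the start of \timeslot $t$) and at line~\ref{l:emptr} (set to \op{false} exactly when \op{carrier\_sense}$(t)$ fires), and \op{carrier\_sense}$(t)$ fires exactly when some neighbor transmits during $t$---is the same as the paper's. The problem is how you close the gap that you yourself identified (a neighbor with $s_k=t$ may stay silent). Your repair, ``among the nodes competing for \timeslot $t$, at least one wins the listening/signaling competition and transmits its beacon, so whenever some neighbor has $s_k=t$, some transmission reaches $p_i$'s carrier sense,'' is false in this model: the node that inhibits $p_k \in {\cal N}_i$ can be some $p_m \in {\cal N}_k \setminus {\cal N}_i$, two hops away from $p_i$, in which case no transmission at all reaches $p_i$ during $t$ and $t$ stays in $unused\_set_i$. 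What actually rescues the forward direction is a mechanism you never invoke: when $p_k$ receives a carrier sense during its own broadcasting \timeslot, line~\ref{l:state} assigns $\bot$ to $s_k$. Hence an inhibited neighbor does not satisfy $s_k=t$ in $c_0^{\op{timeslot}}(x+1)$ at all; the only neighbors that still hold $s_k=t$ in that configuration are ones that transmitted during $t$, and those do trigger $p_i$'s carrier sense. This per-round bookkeeping (allocate at most once, transmit at most once, deallocate at most once, via lines~\ref{l:slcemp}, \ref{l:fetch} and~\ref{l:state}) is exactly what the paper's proof enumerates, and your argument replaces it with an invalid claim.

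The same oversight breaks your reverse direction, where you silently substitute ``no neighbor transmitted during \timeslot $t$ of $R(x)$'' for the actual hypothesis ``no neighbor has $s_k=t$ in $c_0^{\op{timeslot}}(x+1)$.'' These differ: if two neighbors $p_j, p_k \in {\cal N}_i$ with $p_j \in {\cal N}_k$ hold $s_j=s_k=t$ and choose the same listening/signaling period, both transmit beacons, both then raise carrier sense, and both deallocate by line~\ref{l:state}. In $c_0^{\op{timeslot}}(x+1)$ no neighbor has $s_k=t$, yet \op{carrier\_sense}$(t)$ did fire at $p_i$, so $t \notin unused\_set_i$ and the unrestricted biconditional fails. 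The paper does not prove your unrestricted statement either; its proof explicitly scopes this case out at the outset (the caveat that the claim ``does not consider'' neighbors whose $s_k \neq \bot$ in some configuration of $R(x)$ but $s_k=\bot$ in $c_0^{\op{timeslot}}(x+1)$). A complete proof must either carry that restriction or treat deallocating neighbors explicitly; your write-up does neither.
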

\begin{proof}
Recall that $unused\_set_i = \{ k : unused_i[k] = \op{true} \}$ (see line~\ref{l:emptyset}) and that the proposition statement does not consider the cases in which: (1) $s_i = s_k$ (because $t \neq s_i$) in $c{^{_{\op{timeslot}}}_{^{0}}}(x+1)$, or (2) There exists a configuration $c \in R(x)$, such that $s_k \neq \bot$ in $c$ and $s_k = \bot$ in $c{^{_{\op{timeslot}}}_{^{0}}}(x+1)$ (because by $unused$\_$set$'s definition, $\bot$ is never in $unused\_set_i$).

We note that in every broadcasting round, node $p_k \in P$ {\em at most once}: (1) Allocates the broadcasting \timeslot $s_k$ (when $t_k = 0$, see line~\ref{l:slcemp}), (2) Transmits a packet (when $t_k = s_k$, see line~\ref{l:fetch}), and (3) Deallocates the broadcasting \timeslot $s_k$ (by assigning $\bot$ to $s_k$ when $t_k = s_k$ and the $\op{carrier$\_$sense}(t)$ event is raised, see line~\ref{l:state}). Moreover, node $p_i$ updates $unused_i[t]$ only in lines~\ref{l:outdated} (\op{true}) and~\ref{l:emptr} (\op{false}), when $p_i$ removes stale information just before \timeslot $t$, and respectively, when the event $\op{carrier$\_$sense}(t)$ is raised.

Line~\ref{l:outdated} is executed at the start of every \timeslot, whereas line~\ref{l:emptr} is executed after, and only when the event $\op{carrier$\_$sense}(t)$ is raised. The event $\op{carrier$\_$sense}(t)$ is raised after, and only when the node $p_k \in {\cal N}_i$ transmits in \timeslot $t$. In other words, {\em none} of $p_i$'s neighbors, $p_k \in {\cal N}_i$, that transmits in \timeslot $s_k = t$, can avoid causing the event $\op{carrier$\_$sense}(t)$ to be raised, and \timeslot $t$ to be included in $unused\_set_i \setminus \{ s_i \}$.
\end{proof}


%
%

Proposition~\ref{l:invarIIb} shows that, within a broadcasting round from $c{^{_{\op{timeslot}}}_{^{0}}}(x)$, Property~$(\ref{eq:invarIIai})$ holds.

\begin{proposition}
\label{l:invarIIb}
$(s_i \neq \bot) \vee (unused\_set_i \setminus \{ s_i \} \neq \emptyset)$ holds in $c{^{_{\op{timeslot}}}_{^{0}}}(x+1)$.
\end{proposition}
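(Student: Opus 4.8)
The plan is to prove Property~$(\ref{eq:invarIIai})$ by a disjunction-elimination argument: I want to show that in $c{^{_{\op{timeslot}}}_{^{0}}}(x+1)$ either node $p_i$ holds a genuine broadcasting \timeslot ($s_i \neq \bot$), or it has at least one unused \timeslot at its disposal ($unused\_set_i \setminus \{s_i\} \neq \emptyset$). The natural approach is to assume the first disjunct fails, i.e.\ $s_i = \bot$ in $c{^{_{\op{timeslot}}}_{^{0}}}(x+1)$, and then derive the second disjunct. Under this assumption I would invoke the already-established Property~$(\ref{eq:invarIb})$ (Proposition~\ref{l:invarI1}), which tells me that in $c{^{_{\op{timeslot}}}_{^{0}}}(x+1)$ a \timeslot $t \neq s_i$ lies in $unused\_set_i$ if and only if no neighbor $p_k \in {\cal N}_i$ is using $t$. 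So it suffices to exhibit a single \timeslot $t \in [0, T-1]$ that no neighbor occupies.

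First I would reduce the claim to a counting argument. There are $T$ \timeslots in total, and by the saturation assumption stated in the preliminaries, $d_i < T$, so the number of $p_i$'s neighbors is strictly smaller than the number of \timeslots. Since each neighbor $p_k \in {\cal N}_i$ occupies at most one broadcasting \timeslot $s_k$ during the round (this is exactly the ``at most once'' allocation observation already made in the proof of Proposition~\ref{l:invarI1}, where line~\ref{l:slcemp} is executed once per round), the neighbors collectively occupy at most $d_i < T$ distinct \timeslots. By pigeonhole there remains at least one \timeslot $t$ with $\forall p_k \in {\cal N}_i : s_k \neq t$, and because $s_i = \bot$ under our assumption, such a $t$ automatically satisfies $t \neq s_i$. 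Property~$(\ref{eq:invarIb})$ then upgrades ``no neighbor uses $t$'' to ``$t \in unused_i$'', so $t \in unused\_set_i \setminus \{s_i\}$ and the set is nonempty, establishing the second disjunct.

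I anticipate that the main obstacle is handling the arbitrary starting configuration cleanly, since $unused\_set_i$ may contain garbage at $c{^{_{\op{timeslot}}}_{^{0}}}(x)$. This is why the proof must lean on Property~$(\ref{eq:invarIb})$ holding at $c{^{_{\op{timeslot}}}_{^{0}}}(x+1)$ rather than at $c{^{_{\op{timeslot}}}_{^{0}}}(x)$: one full broadcasting round is needed for $unused\_set_i$ to faithfully reflect the neighbors' \timeslot usage, via the stale-information reset in line~\ref{l:outdated} and the carrier-sense-driven updates in line~\ref{l:emptr}. A secondary subtlety is that the saturation hypothesis $d_i < T$ is what guarantees a free \timeslot exists; without it the disjunction could genuinely collapse to the $\bot$-case (\timeslot exhaustion), which is precisely the scenario excluded in these sections of the analysis. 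I would therefore state the dependence on $d_i < T$ explicitly, and remark that the allocation logic in line~\ref{l:slcemp} ensures $p_i$ would in fact grab such a free \timeslot on the next round, which is consistent with $s_i = \bot$ being only transiently possible.
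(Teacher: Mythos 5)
Your proposal is correct and takes essentially the same approach as the paper: both proofs case-split on $s_i = \bot$, apply Proposition~\ref{l:invarI1} (Property~(\ref{eq:invarIb})) in configuration $c{^{_{\op{timeslot}}}_{^{0}}}(x+1)$, and conclude with a counting argument from the saturation assumption $d_i/T \lneq 1$. The only difference is presentational: the paper runs the pigeonhole step as a proof by contradiction (assuming $unused\_set_i = \emptyset$ and deriving $T \le d_i$), whereas you argue it directly.
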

\begin{proof}
If $s_i \neq \bot$ in $c{^{_{\op{\timeslot}}}_{^{0}}}(x+1)$, we are done. Let us suppose that $s_i = \bot$ in $c{^{_{\op{\timeslot}}}_{^{0}}}(x+1)$ and show that $unused\_set_i \setminus \{ s_i \} \neq \emptyset$ in $c{^{_{\op{\timeslot}}}_{^{0}}}(x+1)$.

Let us assume, in the way of proof by contradiction that, $unused\_set_i \setminus \{ s_i \} = \emptyset$ and show that ${d_i}/{T} > 1$, i.e., a contradiction with the assumption that $\forall p_i \in P : {d_i}/{T} \lneq 1$.

Recall that $unused\_set_i = \{ k : unused_i[k] = \op{true} \} \subseteq [0, T-1]$ (see line~\ref{l:emptyset}). Therefore, the assumption that $s_i = \bot$ implies that $unused\_set_i = unused\_set_i  \setminus \{ s_i \} \subseteq [0, T-1]$, because by $unused$\_$set$'s definition, $\bot$ is never in $unused\_set_i$.

By Proposition~\ref{l:invarI1}, we can say that $\forall t \in [0, T-1] : (\nexists t \in unused\_set_i) \leftrightarrow (\exists p_k \in {\cal N}_i : s_k = t)$. Since $unused\_set_i \subseteq [0, T-1]$, we can write $[0, T-1] \setminus unused\_set_i \subseteq \{ s_k \in [0, T-1] : p_k \in {\cal N}_i \}$. By the fact that $unused\_set_i = \emptyset$, we have that $T \leq \mid \{ s_k \in [0, T-1] : p_k \in {\cal N}_i \} \mid$. Since $d_i = \mid {\cal N}_i \mid$ (by definition), we have that $\mid \{ s_k \in [0, T-1] : p_k \in {\cal N}_i \} \mid \leq d_i$, which implies $T \leq d_i$: a contradiction with the assumption that ${d_i}/{T} \lneq 1$.
\end{proof}

\Section{Properties~$(\ref{eq:invarIIa})$ to~$(\ref{eq:invarIIb})$}
\label{s:invarII}
Section~\ref{s:invarI} of this Appendix shows that, starting from an arbitrary configuration, node $p_i \in P$ enters the relative state \RSONE within two broadcasting rounds. This section shows considers the probability for $p_i$ to enter the relative states \RSTWO and \RSTHREE.

Let $x>0$ and $R$ be an execution of the \dstclr. Suppose that $c{^{_{\op{timeslot}}}_{^{0}}}(x)$ is the first configuration in a complete broadcasting round $R(x)$ for which properties~$(\ref{eq:invarIa})$ to~$(\ref{eq:invarIIai})$ hold in configuration $c{^{_{\op{timeslot}}}_{^{0}}}(x)$ with respect to node $p_i \in P$, i.e., $p_i$ is in relative state \RSONE, \RSTWO or \RSTHREE. Propositions~\ref{l:invarI2},~\ref{l:invarII1} and~\ref{l:invarII2} show that there is a nonzero probability that node $p_i$ enters the relative state \RSTHREE from either \RSONE or \RSTWO in configuration $c{^{_{\op{timeslot}}}_{^{0}}}(x+1)$.

Proposition~\ref{l:invarI2} shows that $p_i$ attempts to broadcast once in every round.

\begin{proposition}
\label{l:invarI2}
During broadcasting round $R(x)$, $p_i$ executes line~\ref{l:fetch} and calls the function $\op{send}()$.
\end{proposition}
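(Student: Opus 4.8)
The plan is to trace the control flow of node $p_i$ through the $T$ activations of the broadcasting round $R(x)$ and show that the guard of line~\ref{l:fetch} is necessarily met. Recall that the event \op{\timeslot}$(t)$ is raised by the common pulse at the start of every \timeslot, so during $R(x)$ node $p_i$ is activated exactly once for each $t \in [0,T-1]$. The two facts I need are: (i) $p_i$ holds a genuine broadcasting \timeslot $s_i \in [0,T-1]$, i.e.\ $s_i \neq \bot$, from the first activation onward; and (ii) $s_i$ is not discarded before the activation in which $t = s_i$. Together these force the clock to eventually equal $s_i$ while $s_i$ is still a valid \timeslot, at which point line~\ref{l:fetch} fires and \op{send}$()$ is called.

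For (i), I would invoke the standing hypothesis of this section, namely that Property~$(\ref{eq:invarIIai})$ holds in $c{^{_{\op{timeslot}}}_{^{0}}}(x)$. At the first \timeslot ($t=0$), line~\ref{l:slcemp} is executed. If $s_i \neq \bot$ already, the allocated slot is kept; if instead $s_i = \bot$, then Property~$(\ref{eq:invarIIai})$ guarantees $unused\_set_i \setminus \{ s_i \} \neq \emptyset$, so \op{select$\_$unused}$()$ has an unused \timeslot at its disposal and returns one. In either branch $s_i \in [0,T-1]$ immediately after line~\ref{l:slcemp}; this is precisely the place where Property~$(\ref{eq:invarIIai})$ is indispensable, since it rules out the only way the selection could leave $s_i = \bot$.

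For (ii), I would enumerate the statements that can write $s_i$: line~\ref{l:slcemp}, which fires only at $t=0$, and the deallocation in line~\ref{l:state}, whose guard is the conjunction $t = s_i \wedge \op{carrier$\_$sense}(t)$. Hence at every activation with $t < s_i$ the value $s_i$ is untouched; in particular a \op{carrier$\_$sense}$(t)$ event at such a $t \neq s_i$ only updates $unused_i[t]$ in line~\ref{l:emptr}, and \op{send}$()$ itself is never entered before $t = s_i$ because its call is guarded by $t = s_i$. Consequently $s_i$ still equals its selected value when the activation with clock $t = s_i$ arrives, so the guard of line~\ref{l:fetch} holds and \op{send}$()$ is invoked. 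The degenerate case $s_i = 0$ is subsumed, since then the selection (line~\ref{l:slcemp}) and the call to \op{send}$()$ (line~\ref{l:fetch}) both occur within the single activation for $t=0$.

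The main obstacle is the bookkeeping of (ii): one must be certain that no statement reachable at a \timeslot $t \neq s_i$ can set $s_i$ to $\bot$ (or to another value), and this reduces to reading off the guard of line~\ref{l:state} in the pseudo-code of \Figure~\ref{f:smp}. Once that is confirmed, the remainder is a routine walk through the code, so I expect the proof to be short.
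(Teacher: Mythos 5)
Your proof is correct and follows essentially the same route as the paper's: a case split on whether $s_i \neq \bot$ or $s_i = \bot$ in $c{^{_{\op{timeslot}}}_{^{0}}}(x)$, with Property~$(\ref{eq:invarIIai})$ guaranteeing that \op{select$\_$unused}$()$ returns a non-$\bot$ \timeslot in the latter case, so that line~\ref{l:fetch} fires when $t = s_i$. Your part~(ii) — verifying that line~\ref{l:state} is the only deallocation point and is guarded by $t = s_i$ — is bookkeeping the paper leaves implicit (it simply asserts $s_i \neq \bot$ ``when executing line~\ref{l:fetch}''), and you also correctly read the hypothesis as Property~$(\ref{eq:invarIIai})$ where the paper's text cites Property~$(\ref{eq:invarIIa})$, apparently a typo.
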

\begin{proof}
If $s_i \neq \bot$ in $c{^{_{\op{\timeslot}}}_{^{0}}}(x)$, we are done by lines~\ref{l:slcemp} and~\ref{l:fetch}.
Let us consider the case of $s_i = \bot$ in $c{^{_{\op{\timeslot}}}_{^{0}}}(x)$. By Property~$(\ref{eq:invarIIa})$, $unused\_set_i \neq  \emptyset$ and thus when line~\ref{l:slcemp} is executed, the function $\op{select$\_$unused}()$ returns a non-$\bot$ element from $unused\_set_i$ and $s_i \neq \bot$ when executing line~\ref{l:fetch}.
\end{proof}

Propositions~\ref{l:invarII1} and~\ref{l:invarII2} consider the set $M_i(x+1) = \{ p_k \in {\cal N}_i : s_k = t^{\prime} \}$ and the number $m_i = | M_i(x+1) |$ of $p_i$'s neighbors that attempt to broadcast during $p_i$'s \timeslot, $t^{\prime}$, of broadcasting round $R(x)$.

Let ${\rho}_j$ be the probability for $p_i$ to transmit in the $j$-th listening/signaling period of \timeslot ${t^{\prime}}$ (cf. line~\ref{l:p}). This paper considers the concrete transmission probability $\rho_i = 1/MaxRnd$. We motivate our implementation choice of the transmission probability, ${{\rho}_i}$, in \Figure~\ref{f:rho}. Note that the {\it sequential} selection of the broadcasting rounds with probability $1/(MaxRnd-k+1)$ leads to the uniform selection $\rho_k=1/MaxRnd$.

\begin{figure*}[t!]
\begin{center}
\fbox{
\begin{minipage}[b]{0.9\linewidth}
Defining optimal transmission probabilities for any choices of $T, n, d_i$ is not possible. We choose to consider and look for optimal choices when $d_i \simeq T$ (the 'hard' case) and make a case for a uniform probability ${{\rho}_i} = \frac{1}{n} : i \in [1, n]$.

Let us consider node $p_i \in P$ that competes, together with $k-1$ other neighbors, for the same unique \timeslot. The probability that node $p_i$ wins the (listening/signaling) competition is ${\rho}_1 (1-{\rho}_1)^{k-1}$, where ${\rho}_1$ is the probability of choosing the first listening/signaling period. The value ${\rho}_1=\frac{1}{k}$ maximizes this probability.
In the more general case where there is more than one \timeslot, we consider a strategy that aims at guessing the number, $k$, of competing neighbors, which the optimal probability of transmission depends on. During the first listening/signaling period, the strategy considers the case in which there are $n=MaxRnd$ signaling nodes, and thus, the transmission probability is $1/MaxRnd$, where $MaxRnd \simeq T$. During the second listening/signaling period, the strategy considers the case in which there are $MaxRnd-1$ neighbors, and thus, the transmission probability is $1/(MaxRnd-1)$, and so on. This {\it sequential} selection of the listening/signaling period leads to a uniform choice of a listening/signaling neighbor.
The above strategy is driven by a heuristic in which nodes signal with probability that is optimal for the case of $n \simeq T$, and thus, it depends on the number of competing neighbors.

\end{minipage}
}
\end{center}
\begin{center}
\BB\B
\caption{Transition probability, ${{\rho}_i}$, for listening/signaling periods (line~\ref{l:p} in \Figure~\ref{f:smp})
\label{f:rho}}
\end{center}
%
%
\BBB
\end{figure*}

Proposition~\ref{l:invarII1} considers $p_i$'s chances to be the only one to transmit in its neighborhood.

\begin{proposition}
\label{l:invarII1}
There is a nonzero probability, $OnlyOne_i(x)$ (cf. equation~$(\ref{eq:cfposur})$), that only node $p_i$ transmits in its broadcasting \timeslot, ${t^{\prime}}$, of broadcasting round $R(x)$.
\begin{normalfont}
\begin{eqnarray}
&OnlyOne_i(x)\mid_{m_i>0}={\rho_1}(1-{\rho_1})^{m_i}+{\rho_2}(1-{\rho_1}-{\rho_2})^{m_i} \nonumber \\
&+ \ldots +{\rho_{n-1}}(1-\sum^{n-1}_{\ell=1} {\rho}_k)^{m_i}
\label{eq:cfposur}
\end{eqnarray}
\end{normalfont}
\end{proposition}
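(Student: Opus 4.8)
The plan is to compute directly the probability that node $p_i$ is the unique transmitter among the $m_i+1$ nodes competing for \timeslot $t^{\prime}$ (node $p_i$ together with its $m_i$ matching neighbors in $M_i(x+1)$), and then show this quantity is strictly positive. By Proposition~\ref{l:invarI2}, $p_i$ does indeed invoke $\op{send}()$ during $R(x)$, so $p_i$ genuinely participates in the (listening/signaling) competition; this justifies conditioning on $m_i>0$ and treating $p_i$ as one of the competitors. I would model each competitor as independently selecting one of the $n=MaxRnd$ listening/signaling periods, where $p_i$ selects the $j$-th period with probability ${\rho}_j$ (cf.\ line~\ref{l:p} and \Figure~\ref{f:rho}), and each of the $m_i$ matching neighbors selects independently with the same distribution.

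First I would fix the event that $p_i$ wins by signalling first: decompose on the index $j$ of the period in which $p_i$ transmits its {beacon}. Node $p_i$ transmits in period $j$ with probability ${\rho}_j$, and $p_i$ is the \emph{only} transmitter in $t^{\prime}$ precisely when every one of its $m_i$ matching neighbors selects a \emph{strictly later} period, i.e.\ a period in $\{j+1,\ldots,n\}$. Since the neighbors choose independently, the probability that a single neighbor chooses a later period is $1-({\rho}_1+\cdots+{\rho}_j)=1-\sum_{\ell=1}^j{\rho}_\ell$, and by independence the probability that all $m_i$ neighbors do so is $\bigl(1-\sum_{\ell=1}^j{\rho}_\ell\bigr)^{m_i}$. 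Summing the disjoint events over the winning period $j$ ranging from $1$ to $n-1$ (if $p_i$ picks the last period $n$ it can no longer be strictly first against every neighbor, contributing nothing to a guaranteed win) yields exactly
\begin{equation*}
OnlyOne_i(x)\mid_{m_i>0}=\sum_{j=1}^{n-1}{\rho}_j\Bigl(1-\sum_{\ell=1}^{j}{\rho}_\ell\Bigr)^{m_i},
\end{equation*}
which is the claimed equation~$(\ref{eq:cfposur})$.

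To finish, I would argue strict positivity. Each summand is nonnegative, so it suffices to exhibit one strictly positive term. Take $j=1$: the term is ${\rho}_1(1-{\rho}_1)^{m_i}$. Under the uniform choice ${\rho}_\ell=1/MaxRnd=1/n$ with $n\ge 2$, we have ${\rho}_1=1/n>0$ and $1-{\rho}_1=(n-1)/n>0$, so ${\rho}_1(1-{\rho}_1)^{m_i}>0$ regardless of the (finite) value of $m_i$. Hence $OnlyOne_i(x)\mid_{m_i>0}>0$, establishing the nonzero probability asserted. The bulk of the work is bookkeeping: correctly identifying the ``strictly later'' condition (rather than ``later or equal'', which would allow ties and defeat uniqueness) and ensuring the decomposition over $j$ is over disjoint events so the probabilities add. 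The main obstacle I anticipate is the careful treatment of tie-breaking — making precise that a neighbor choosing the \emph{same} period as $p_i$ would itself transmit and thus destroy $p_i$'s uniqueness, which is exactly why the exponent base excludes period $j$ and the outer sum stops at $n-1$.
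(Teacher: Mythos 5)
Your proposal is correct and takes essentially the same approach as the paper: both decompose on the period $j$ in which $p_i$ signals, multiply by the probability that all $m_i$ matching neighbors independently choose strictly later periods, sum these disjoint events for $j \le n-1$, and obtain positivity from a single positive term. The one step the paper makes explicit that you largely take for granted is that signalling \emph{first} implies being the \emph{only} transmitter: once $p_i$ sends its beacon, every matching neighbor raises the carrier-sense event and withdraws from the competition (lines~\ref{l:state} and~\ref{l:emptr}), which is what licenses reading the ``all neighbors strictly later'' computation as the probability of unique transmission rather than merely of winning the signalling race.
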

\begin{proof}
We show that there is a nonzero probability that only node $p_i$ transmits in its broadcasting \timeslot, ${t^{\prime}}$, of broadcasting round $R(x)$. Let us look at $p_i$ and the nodes in $M_i(x)$ while they attempt to broadcast in the steps $a_i^{{_{\op{timeslot},}}_{t^{\prime}}}(x)$ and $a_k^{{_{\op{timeslot},}}_{t^{\prime}}}(x)_{|k \in M_i(x)}$. All of these steps include the execution of line~\ref{l:p}, viz., each node chooses to transmit in listening/signaling period $\ell \in [0, MaxRnd]$ with probability ${\rho_\ell}=1/(MaxRnd-\ell)$. Therefore, for any $MaxRnd > 0$, there is a nonzero probability, $OnlyOne_i(x)$ that, during \timeslot $t^{\prime}$, node $p_i$ transmits in the listening/signaling period $a \in MaxRnd$ and no node in $M_i(x)$ transmits in round $a$ (or in an earlier one).

We note that the fact that $p_i$ transmits first during \timeslot $t^{\prime}$ implies that it is the only to transmit during $t^{\prime}$. This is because once $p_i$ transmits a {beacon} in step $a_i^{{_{\op{timeslot},}}_{t^{\prime}}}(x)$ (which includes the execution of line~\ref{l:starta}) node $p_j \in {\cal N}_i \supseteq M_i(x)$ raises the event $\op{carrier$\_$sense}(t^{\prime})$ immediately after $a_i^{{_{\op{timeslot},}}_{t^{\prime}}}(x)$. Thus, $\forall p_j \in M_i(x)$ we have that immediately after step $a_i^{{_{\op{timeslot},}}_{t^{\prime}}}(x)$, node $p_j$ takes step $a_j^{{_{\op{carrier$\_$sense},}}_{^{t^{\prime}}}}(x)$, which includes the execution of lines~\ref{l:state} and~\ref{l:emptr} that assigns $\bot$ to $s_j$ and $false$ to $signal_j$. Thus, $p_j$ leaves the (listening/signaling) competition for \timeslot $t^{\prime}$ (see line~\ref{l:for}) and does not transmits its \op{DATA} packet (see line~\ref{l:transmit}).

We now turn to calculate $OnlyOne_i(x)$. Let the variable ${m_i} = \mid M_i(x) \mid$ denote the number of nodes that select the same \timeslot as $p_i$ in configuration $c{^{_{\op{\timeslot: s$\neq$$\bot$}}}_{^{0}}}(x)$. The value of $OnlyOne_i(x)$ depends on the value of $m_i$ and we denote this dependence with the notation $q(i)\mid_{m_i}$ (conditional probability). It means the value of $OnlyOne_i(x)$ depends on the value of ${m_i}$. The value of $OnlyOne_i(x)$ for ${m_i}=0$ is $OnlyOne_i(x)\mid_{{m_i}=0}=1$. For the case of ${m_i} > 0$, $OnlyOne_i(x)$'s value is given by equation~$(\ref{eq:cfposur})$) (that appears again below), where ${\rho}_j$ is the probability for transmitting in the $j$-th listening/signaling period.

\begin{normalfont}
\begin{eqnarray}
&OnlyOne_i(x)\mid_{m_i>0}={\rho_1}(1-{\rho_1})^{m_i}+{\rho_2}(1-{\rho_1}-{\rho_2})^{m_i} \nonumber \\
&+ \ldots +{\rho_{n-1}}(1-\sum^{n-1}_{\ell=1} {\rho}_k)^{m_i} ~~[\mathrm{clone~of~equation}~(\ref{eq:cfposur})] \nonumber
\end{eqnarray}
\end{normalfont}

We note that the $j$-th term in equation~$(\ref{eq:cfposur})$, is the probability that node $p_i$ selects the $j$-th listening/signaling period and all its neighbors select a later listening/signaling period.
\end{proof}

Proposition~\ref{l:invarII2} shows that once a node is the only one in its neighborhood to transmit during its broadcasting \timeslot, it enters the relative state \RSTHREE.

\begin{proposition}
\label{l:invarII2}
$M_i(x) = \emptyset$ (or having none of the nodes in $M_i(x)$ transmitting during \timeslot ${t^{\prime}}$)
implies that node $p_i$ is in the relative state \RSTHREE in $c{^{_{\op{timeslot}}}_{^{0}}}(x+1)$.
\end{proposition}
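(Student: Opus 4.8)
The plan is to reduce the goal to the two properties that distinguish \RSTHREE from \RSONE and \RSTWO, namely~$(\ref{eq:invarIIa})$ and~$(\ref{eq:invarIIb})$, and to inherit the rest. Since Propositions~\ref{l:invarI},~\ref{l:invarI1} and~\ref{l:invarIIb} already guarantee properties~$(\ref{eq:invarIa})$,~$(\ref{eq:invarIb})$ and~$(\ref{eq:invarIIai})$ in $c{^{_{\op{timeslot}}}_{^{0}}}(x+1)$ from \emph{any} starting configuration, I would invoke them verbatim and then concentrate the whole argument on the single event that, under the hypothesis, node $p_i$ is the unique node of ${\cal N}_i\cup\{p_i\}$ to emit a {beacon} during its broadcasting \timeslot $t^{\prime}=s_i$ of round $R(x)$.

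First I would pin down $t^{\prime}$ and prove~$(\ref{eq:invarIIa})$. By Proposition~\ref{l:invarI2}, $p_i$ reaches its broadcasting \timeslot with $s_i=t^{\prime}\neq\bot$ and calls \op{send}$()$; the case $s_i=\bot$ at the start of $R(x)$ is absorbed by observing that property~$(\ref{eq:invarIIai})$ supplies an unused \timeslot for \op{select$\_$unused}$()$ (line~\ref{l:slcemp}), which by~$(\ref{eq:invarIb})$ carries no matching neighbor, so the hypothesis holds for the freshly chosen $t^{\prime}$ as well. Because no node of $M_i(x)$ transmits during $t^{\prime}$, node $p_i$ never observes \op{carrier$\_$sense}$(t^{\prime})$ before its own signaling period, wins the (listening/signaling) competition, and emits its {beacon}; hence it never executes the \timeslot de-allocation (line~\ref{l:state}), and $s_i=t^{\prime}\neq\bot$ survives to $c{^{_{\op{timeslot}}}_{^{0}}}(x+1)$, which is exactly~$(\ref{eq:invarIIa})$.

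Next I would propagate this {beacon} to the neighbors, reusing the mechanism already established in Proposition~\ref{l:invarII1}. By the carrier-sensing model, every $p_j\in{\cal N}_i$ raises \op{carrier$\_$sense}$(t^{\prime})$ right after $p_i$'s signaling step, so each sets $unused_j[t^{\prime}]=\op{false}$ (line~\ref{l:emptr}); in addition, each matching neighbor $p_j\in M_i(x)$, for which $t^{\prime}=s_j$ is its \emph{own} broadcasting \timeslot, also de-allocates and sets $s_j=\bot$ (line~\ref{l:state}), while every non-matching neighbor already satisfies $s_j\neq t^{\prime}$ by the definition of $M_i(x)$. Read off at $c{^{_{\op{timeslot}}}_{^{0}}}(x+1)$, these facts give $\forall p_j\in{\cal N}_i:(s_i\neq s_j)\wedge(unused_j[s_i]=\op{false})$, i.e.~property~$(\ref{eq:invarIIb})$.

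The step I expect to be the real obstacle is \emph{persistence}: I must rule out that any of these updates is overwritten between $p_i$'s transmission and $c{^{_{\op{timeslot}}}_{^{0}}}(x+1)$. Here I would argue that a node reassigns $s_j$ only at the first \timeslot via \op{select$\_$unused}$()$ and re-initialises $unused_j[t^{\prime}]$ only at the start of \timeslot $t^{\prime}$, both of which belong to the \emph{next} broadcasting round and therefore occur strictly after $c{^{_{\op{timeslot}}}_{^{0}}}(x+1)$; consequently the de-allocation of each $s_j$ and the marking of $unused_j[t^{\prime}]$ as used are stable throughout the remainder of $R(x)$. Combining the inherited properties~$(\ref{eq:invarIa})$--$(\ref{eq:invarIIai})$ with the freshly proved~$(\ref{eq:invarIIa})$ and~$(\ref{eq:invarIIb})$ shows that $p_i$ satisfies all of~$(\ref{eq:invarIa})$--$(\ref{eq:invarIIb})$ and is thus in relative state \RSTHREE in $c{^{_{\op{timeslot}}}_{^{0}}}(x+1)$.
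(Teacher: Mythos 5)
Your core argument follows the paper's own proof almost exactly: you reduce the claim to properties~$(\ref{eq:invarIIa})$ and~$(\ref{eq:invarIIb})$, prove~$(\ref{eq:invarIIa})$ by noting that no neighbor transmits during $t^{\prime}$, so $p_i$ never raises \op{carrier$\_$sense}$(t^{\prime})$ and never executes the de-allocation in line~\ref{l:state}; you prove~$(\ref{eq:invarIIb})$ by propagating the {beacon} that Proposition~\ref{l:invarI2} guarantees $p_i$ emits, so every $p_j\in{\cal N}_i$ sets $unused_j[t^{\prime}]=\op{false}$ (line~\ref{l:emptr}) and every matching neighbor additionally sets $s_j=\bot$ (line~\ref{l:state}); and your persistence argument (reassignments of $s_j$ and re-initialization of $unused_j[t^{\prime}]$ belong to the next broadcasting round) is the same observation the paper makes about which steps can touch these variables.

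There is, however, one incorrect step, fortunately a dispensable one. To absorb the case $s_i=\bot$ at the start of $R(x)$, you claim that a \timeslot freshly drawn by \op{select$\_$unused}$()$ in line~\ref{l:slcemp} ``carries no matching neighbor'' by property~$(\ref{eq:invarIb})$, so that the hypothesis holds for the new $t^{\prime}$. This inference fails: property~$(\ref{eq:invarIb})$ reflects the neighbors' slots as of $c{^{_{\op{timeslot}}}_{^{0}}}(x)$, but every neighbor $p_j$ in relative state \RSONE (i.e., with $s_j=\bot$) performs its own fresh selection in the very same \op{\timeslot}$(0)$ event, and two such simultaneous selections can collide. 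Indeed, were your inference valid, every \RSONE node would reach \RSTHREE deterministically in one round, the listening/signaling competition would be pointless, and $q_i$ would equal $1$ --- contradicting the whole probabilistic analysis of Section~\ref{s:markov}. The proof survives because nothing needs to be derived here: the proposition's hypothesis ($M_i(x)=\emptyset$, or none of $M_i(x)$ transmitting during $t^{\prime}$) is an assumption about whichever \timeslot $p_i$ competes for during $R(x)$, and the paper itself simply takes $t^{\prime}\neq\bot$ as given rather than arguing that fresh selections are conflict-free. Deleting that sentence from your write-up leaves a proof that is correct and essentially identical to the paper's.
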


\begin{proof}
We need to show that, in $c{^{_{\op{timeslot}}}_{^{0}}}(x+1)$, we have that $s_i = t^{\prime} \neq \bot$ and $\forall p_j \in {\cal N}_i : s_i \neq s_j$.

\noindent {\bf Showing that $s_i = t^{\prime} \neq \bot$ in $c{^{_{\op{timeslot}}}_{^{0}}}(x+1)$~~~~~~} The proposition assumes that $t^{\prime} \neq \bot$ in $c{^{_{\op{timeslot}}}_{^{0}}}(x)$. We wish to show that $s_i = t^{\prime}$ in $c{^{_{\op{timeslot}}}_{^{0}}}(x+1)$, which implies that $s_i \neq \bot$ holds in $c{^{_{\op{timeslot}}}_{^{0}}}(x+1)$ and throughout $R(x+1)$.

Since the variable $s_i$ is assigned only in lines~\ref{l:slcemp} (when $t_i =0$) and~\ref{l:state} (when $t_i = t^{\prime}$), it is sufficient to show that line~\ref{l:state} is not executed by any step during \timeslot $t^{\prime}$ of broadcasting round $R(x)$, i.e., $a_i^{{_{\op{carrier$\_$sense},}}_{^{t^{\prime}}}}(x) \not \in R(x)$.

Node $p_i$ raises the event \op{carrier$\_$sense} only during \timeslots in which $p_i$'s neighbor, $p_j$, transmits. By the proposition assumptions that, during \timeslot $t^{\prime}$ of broadcasting round $R(x)$, none of $p_i$'s neighbors transmits, we have $a_i^{{_{\op{carrier$\_$sense},}}_{^{t^{\prime}}}}(x) \not \in R(x)$. Moreover, $a_i^{{_{\op{timeslot},}}_{t^{\prime}}}(x+1)$ does not includes an execution of line~\ref{l:slcemp} that changes the value of $s_i$, because $s_i = t^{\prime} \neq \bot$ in $c{^{_{\op{timeslot}}}_{^{0}}}(x+1)$.

\noindent {\bf Showing that $\forall p_j \in {\cal N}_i : s_i \neq s_j$ in $c{^{_{\op{timeslot}}}_{^{0}}}(x+1)$~~~~~~} The proposition assumes that $\forall p_j \in {\cal N}_i : s_i \neq s_j$ in $c{^{_{\op{timeslot}}}_{^{0}}}(x)$. We wish to show that the same holds in $c{^{_{\op{timeslot}}}_{^{0}}}(x+1)$. Since the variable $s_j$ is assigned to a non-$\bot$ value only in line~\ref{l:slcemp} when $t_i =0$, it is sufficient to show that when line~\ref{l:slcemp} is executed in step $a_j^{{_{\op{timeslot},}}_{0}}(x+1)$ the function $\op{select$\_$unused}()$ considers a set that does not includes $p_i$'s \timeslot, $s_i$. This is implied by the facts that $\forall p_j \in {\cal N}_i : unused_j[t^{\prime}] = \op{false}$ (Claim~\ref{l:invarI2}) and $s_i = t^{\prime}$ (first item of (II) of this proof) in $c{^{_{\op{timeslot}}}_{^{0}}}(x+1)$.
\end{proof}

\Section{Bounding $OnlyOne_i(x)$}
\label{s:compact}
Propositions~\ref{prop:2boundq} and~\ref{th:ndoslo} bound $OnlyOne_i(x)$'s value, where $R(x)$ is the $x$-th broadcasting round in execution $R$ of the \dstclr. We assume that properties~$(\ref{eq:invarIa})$ to~$(\ref{eq:invarIIb})$ holds in the first configuration, $c{^{_{\op{\timeslot}}}_{^{0}}}(x)$, of $R(x)$.
These bounds are obtained by computing the expectation of $q_i \mid_{m_i}$ with respect to ${m_i}$, where $M_i(x) = \{ p_k \in {\cal N}_i : s_k = t^{\prime} \}$ in $c{^{_{\op{\timeslot}}}_{^{0}}}(x)$ and $m_i = | M_i(x) |$. The reason is that ${m_i}$ is a random variable, i.e., $q_i=E\left(OnlyOne_i({x})\mid_{{m_i}}\right)$, where the expectation is computed with respect to the random variable ${m_i}$.

We note that all the terms in equation~$(\ref{eq:cfposur})$ are convex functions of ${m_i}$. This means that by Jensen's inequality, we obtain a lower bound of $q_i$ in equation~$(\ref{eq:jensen})$ by evaluating the expression $q_i \mid_{m_i}$ at ${m_i}$'s expectation, $E({m_i})$~\cite{jensen1906fonctions}.

\begin{eqnarray}
\label{eq:jensen}
q_i&=&E \left( q_i \mid_{m_i} \right) \ge q_i \mid_{E({m_i})}
\end{eqnarray}

The expression on the right side of the inequality can be again lower bounded if we estimate an upper bound for $E(m_i)$. We proceed to the computations in the proof of the Proposition~\ref{th:ndoslo} after demonstrating Proposition~\ref{th:evidit} which shows that $E({m_i})$ is bounded by the ratio $d_i/T$, which is rather intuitive but, needs to be proved.

\begin{proposition}
\label{th:evidit}
In configuration $c{^{_{\op{\timeslot}}}_{^{0}}}(x)$ in holds that $E({m_i}) \le d_i / T$, where $m_i = | M_i(x) |$.
\end{proposition}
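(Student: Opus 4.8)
The plan is to express $m_i$ as a sum of indicator variables over the neighbors and then exploit the fact that the uniform random selection of broadcasting \timeslots (line~\ref{l:uniformselect}) treats all $T$ \timeslots symmetrically. Write $m_i = |M_i(x)| = \sum_{p_k \in {\cal N}_i} \mathbf{1}_{[s_k = s_i]}$, and for each \timeslot $t \in [0, T-1]$ let $N_t = |\{p_k \in {\cal N}_i : s_k = t\}|$ be the number of neighbors occupying $t$ in $c^{\op{timeslot}}_{0}(x)$. Since every neighbor holds at most one \timeslot value (or $\bot$), the occupancy counts obey the deterministic bound $\sum_{t=0}^{T-1} N_t \le |{\cal N}_i| = d_i$. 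With this notation $m_i = N_{s_i}$, so the claim reduces to showing $E(N_{s_i}) \le d_i/T$.

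Next I would establish the symmetry that drives the bound. Because \op{select$\_$unused}$()$ draws uniformly at random among \timeslots that appear unused, the selection process is invariant in distribution under an arbitrary relabeling (permutation) of the $T$ \timeslot indices applied simultaneously to all nodes. Consequently the marginal law of $p_i$'s \timeslot $s_i$ is uniform over $[0,T-1]$, and the joint law of $(s_i, (N_t)_t)$ is exchangeable in the \timeslot index. From exchangeability alone one gets $E(N_t) = \tfrac1T E(\sum_{t'} N_{t'}) \le d_i/T$ for every fixed $t$; the remaining point is to transfer this average bound to the particular (random) \timeslot $s_i$ actually chosen by $p_i$.

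The key step, and the main obstacle, is that $s_i$ and the occupancy vector $(N_t)_t$ are not independent: a collision $m_i > 0$ arises precisely when $p_i$ and a neighbor pick the same \timeslot in the same broadcasting round, so $p_i$'s choice is correlated with its neighbors'. I would resolve this by conditioning on $(N_t)_t$ and arguing that the dependence is favorable: since Property~$(\ref{eq:invarIb})$ guarantees $p_i$'s local view is accurate and \op{select$\_$unused}$()$ never prefers an occupied \timeslot, the conditional probability $P(s_i = t \mid (N_{t'})_{t'})$ is not increasing in $N_t$. Then $N_t$ and $P(s_i=t\mid(N_{t'})_{t'})$ are oppositely ordered, so Chebyshev's sum inequality gives $\sum_t N_t\,\bigl(P(s_i = t \mid (N_{t'})_{t'}) - \tfrac1T\bigr) \le 0$, hence $E(N_{s_i} \mid (N_{t'})_{t'}) \le \tfrac1T \sum_t N_t$; taking expectations and using $\sum_t N_t \le d_i$ yields $E(m_i) = E(N_{s_i}) \le d_i/T$.

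I expect the delicate part to be making the \emph{favorable dependence} precise, because the \timeslots that collide at $c^{\op{timeslot}}_{0}(x)$ were not necessarily selected in a single round: they may result from simultaneous fresh selections in an earlier round together with the carrier-sense interaction that inhibits later selectors. A fully rigorous treatment must therefore condition on the last round in which the matching nodes (re)selected their \timeslots and apply the exchangeability argument at that point, rather than naively assuming the neighbors' choices are independent of $p_i$'s. The occupancy bound $\sum_t N_t \le d_i$ and the label symmetry are robust to these details, so the structure of the argument survives; only the justification of the conditional monotonicity requires this extra care.
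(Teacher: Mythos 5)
Your target inequality is right, but both probabilistic mechanisms you use to reach it fail in this setting, and the failure is not repairable along the route you sketch. First, the exchangeability claim is false. A broadcasting round starts from a fixed, arbitrary configuration in which some of $p_i$'s neighbors are already allocated to specific \timeslots; relabeling \timeslot indices does not preserve that configuration, so nothing is exchangeable. Consequently $s_i$ is not uniform on $[0,T-1]$: by Property~$(\ref{eq:invarIb})$, node $p_i$ selects uniformly from its unused set $E_i=\{t : unused_i[t]=\op{true}\}$, which excludes every \timeslot held by an allocated neighbor (the paper's proof works with $\Pr(s_i=t)=1/e_i$ on $E_i$, where $e_i=|E_i|$, for exactly this reason). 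Second, the anti-monotonicity-plus-Chebyshev bridge breaks. Given the round's starting configuration, the fresh selections of the neighbors are independent of $p_i$'s own selection, so conditioning on the occupancy vector is vacuous: $P(s_i=t \mid (N_{t'})_{t'})=\mathbf{1}_{\{t\in E_i\}}/e_i$. This sequence need not be oppositely ordered to $(N_t)_t$, because collisions occur precisely inside $E_i$: a \timeslot that $p_i$'s (accurate) view shows as unused can attract several freshly selecting neighbors, each forced there by its own unused set, and end up more occupied than a \timeslot outside $E_i$, which has selection probability $0$ but occupancy at least $1$. Concretely, with $T=4$, one neighbor allocated at \timeslot $0$, and two fresh neighbors whose unused sets are both $\{1\}$, the occupancies are $(1,2,0,0)$ while $p_i$'s selection probabilities are $(0,\tfrac13,\tfrac13,\tfrac13)$; occupancy and selection probability are similarly ordered on the pair $\{0,1\}$, so Chebyshev's sum inequality for oppositely ordered sequences does not apply.

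The decisive symptom is that your argument never uses the standing assumption $d_i/T<1$, yet the proposition is false without it: with $T=2$, a neighbor allocated at \timeslot $0$, and two fresh neighbors whose unused sets are $\{1\}$ (so that $E_i=\{1\}$ as well), we get $m_i=2>d_i/T=3/2$ with probability one; this example also violates your intermediate claim, since there $E(N_{s_i})=2>\tfrac{1}{T}\sum_t N_t = \tfrac32$. Any correct proof must therefore invoke $d_i\le T$, and that is exactly where the paper's proof differs from yours: it conditions on $p_i$'s choice $\{s_i=t\}$ (not on the occupancies), uses independence and Property~$(\ref{eq:invarIb})$ to write $E(m_i\mid s_i=t)=\sum_{p_j\in{\cal N}_i}\mathbf{1}_{\{s_j=\bot\}}\mathbf{1}_{\{t\in E_j\}}/|E_j|$, sums over $t\in E_i$ so that each neighbor contributes $|E_i\cap E_j|/|E_j|\le 1$, and combines this with the counting bound $e_i\ge T-d_i+b$, where $b=\sum_{p_j\in{\cal N}_i}\mathbf{1}_{\{s_j=\bot\}}$, to conclude $E(m_i)\le b/(T-d_i+b)\le d_i/T$; the last inequality is equivalent to $(d_i-b)(T-d_i)\ge 0$ and is precisely where $d_i\le T$ enters. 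Your decomposition $m_i=N_{s_i}$ and the bound $\sum_t N_t\le d_i$ are sound, but the exchangeability and favorable-dependence steps connecting them must be replaced by this kind of conditioning-and-counting argument.
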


\begin{proof}
We show that $E({m_i})=d_i/T$ by considering configuration $c{^{_{\op{timeslot}}}_{^{0}}}(x)$.
%
%
%
%
The maximal number of $p_i$'s neighbors that might choose the same \timeslot as $p_i$ in configuration $c{^{_{\op{timeslot}}}_{^{0}}}(x)$ is $\sum_{{p_j \in {\cal N}_i}}1_{\{s_j=\bot\}}$, because any node, $p_j \in {\cal N}_i$, that chooses a new broadcasting \timeslot immediately before  $c{^{_{\op{\timeslot}}}_{^{0}}}(x)$ must have $s_j = \bot$ in configuration $c{^{_{\op{timeslot}}}_{^{0}}}(x)$. We compute the expected value of $m_i$ in equation~$(\ref{eq:dfjpvjaurekg})$ as a function of the number of empty \timeslots, $e_i$, that $p_i$ selects from when choosing a new broadcasting \timeslot, where $e_i$ $=$ $\mid{unused\_set_i}\mid$ in configuration $c{^{_{\op{timeslot}}}_{^{0}}}(x)$.

\begin{align}
\label{eq:dfjpvjaurekg}
E\left( m_i \right) = &\\\nonumber
\sum_{t \in E_i} E\left( m_i \mid {s_i=t} \right) \Pr({s_i=t}) = &\\\nonumber
\sum_{t \in E_i}\frac{1}{ e_i } E\left( m_i \mid {s_i=t} \right) =&\\\nonumber
\sum_{t \in E_i} \frac{1}{ e_i }E \left( \sum_{{p_j \in {\cal N}_i}} 1_{\{p_j \text{ chooses \timeslot~} {t} \}}\mid {s_i=t} \right) =&\\\nonumber
\sum_{t \in E_i}\frac{1}{ e_i }\sum_{{p_j \in {\cal N}_i}} \frac{1}{\mid E_j\mid} 1_{\{ {t} \in E_j \}}1_{\{s_j=\bot\}}\nonumber
\end{align}

Our assumption that $d_i \le T-1$ implies that $ e_i >0$. Using that
$d_i=\sum_{{p_j \in {\cal N}_i}}\left(1_{\{s_j \neq \bot\}} +1_{\{s_j=\bot\}}\right)$ and, $e_i\ge T-\sum_{p_j\in {\cal N}_i} 1_{\{s_j\not = \bot\}}$,
we obtain equation~$(\ref{eq:apvkjsitekjhgls})$.

\begin{small}
\begin{align}
\label{eq:apvkjsitekjhgls}
E(m_i)\le&\sum_{{t} \in E_i} \frac{1}{T-d_i+\sum_{{p_j \in {\cal N}_i}} 1_{\{s_j=\bot\}}}\sum_{{p_j \in {\cal N}_i}} \frac{1_{\{ {t} \in E_j\}} 1_{\{s_j=\bot\}}}{\mid E_j\mid}\nonumber\\
=&\frac{1}{T-d_i+\sum_{{p_j \in {\cal N}_i}} 1_{\{s_j=\bot\}}}\sum_{{p_j \in {\cal N}_i}} \frac{1_{\{s_j=\bot\}}}{\mid E_j\mid}\underbrace{\sum_{ {t} \in E_i} 1_{\{ {t} \in E_j\}}}_{\mid E_i\bigcap E_j\mid}\nonumber \\ \le &\frac{\sum_{{p_j \in {\cal N}_i}} 1_{\{s_j=\bot\}}}{T-d_i+\sum_{{p_j \in {\cal N}_i}} 1_{\{s_j=\bot\}}}\le \frac{d_i}{T}
\end{align}
\end{small}
%
%
\end{proof}


\begin{proposition}
\label{th:ndoslo}
%

\begin{eqnarray}
\nonumber
q_i\ge\sum_{k=1}^n {\rho}_k \left( 1- \sum^{k}_{\ell=1} {\rho}_k \right)^{\frac{d_i}{T}}\\\nonumber ~~[\mathrm{clone~of~equation}~(\ref{equ:qi})]
\end{eqnarray}
\end{proposition}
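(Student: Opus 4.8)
The plan is to lower-bound $q_i$ by chaining together two inequalities that both point in the favorable direction, reusing the material already assembled in Section~\ref{s:compact}. Recall that $q_i = E\left(OnlyOne_i(x)\mid_{m_i}\right)$, where the expectation is taken over the random variable $m_i = |M_i(x)|$, and that for $m_i>0$ the conditional value $OnlyOne_i(x)\mid_{m_i}$ is the finite sum in equation~$(\ref{eq:cfposur})$, namely $\sum_{k=1}^{n-1}\rho_k\bigl(1-\sum_{\ell=1}^{k}\rho_\ell\bigr)^{m_i}$. Since $\rho_k = 1/MaxRnd = 1/n$ and $n\ge 2$, every base $a_k := 1-\sum_{\ell=1}^{k}\rho_\ell$ appearing here lies in the open interval $(0,1)$, which is the single structural fact driving both steps below.

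First I would invoke the convexity observation recorded in equation~$(\ref{eq:jensen})$. Each summand has the form $\rho_k a_k^{\,m_i}$ with $a_k\in(0,1)$, hence is a convex function of $m_i$ (its second derivative in $m_i$ is $\rho_k a_k^{\,m_i}(\ln a_k)^2>0$), and so is their sum. Jensen's inequality therefore yields $q_i = E\left(q_i\mid_{m_i}\right)\ge q_i\mid_{E(m_i)}$, replacing the random $m_i$ by its expectation while only decreasing the expression.

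Second, I would combine monotonicity with Proposition~\ref{th:evidit}. The same condition $a_k\in(0,1)$ makes each summand, and hence the map $x\mapsto q_i\mid_{x}$ viewed as a function of a real argument, weakly decreasing (its derivative in $x$ is $\rho_k a_k^{\,x}\ln a_k<0$). Proposition~\ref{th:evidit} supplies the upper bound $E(m_i)\le d_i/T$, and since $q_i\mid_{x}$ is decreasing, feeding in the larger argument $d_i/T$ can only lower the value: $q_i\mid_{E(m_i)}\ge q_i\mid_{d_i/T}=\sum_{k=1}^{n-1}\rho_k\bigl(1-\sum_{\ell=1}^{k}\rho_\ell\bigr)^{d_i/T}$. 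Chaining the two steps gives $q_i\ge\sum_{k=1}^{n-1}\rho_k\bigl(1-\sum_{\ell=1}^{k}\rho_\ell\bigr)^{d_i/T}$. To reconcile this with the stated summation range, which runs to $n$, I would note that $\sum_{\ell=1}^{n}\rho_\ell=1$ forces the $k=n$ term to have base $a_n = 0$, so it vanishes (using $d_i/T>0$); extending the sum to $n$ thus changes nothing, and $(\ref{equ:qi})$ follows.

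The only genuine subtlety is ensuring the two inequalities compose in the correct direction: convexity makes Jensen lower-bound at the mean, and monotonicity makes the substitution of the larger value $d_i/T$ lower-bound again, so the two reductions reinforce rather than cancel. The substantive content, namely the bound $E(m_i)\le d_i/T$, is entirely carried by Proposition~\ref{th:evidit}; checking the simultaneous convexity and monotonicity of $x\mapsto a^x$ for $a\in(0,1)$ is the routine part, and the term-count adjustment from $n-1$ to $n$ is cosmetic.
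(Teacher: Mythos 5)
Your proof is correct and takes essentially the same route as the paper: Jensen's inequality (equation~(\ref{eq:jensen})) applied to the convex summands of~(\ref{eq:cfposur}), combined with Proposition~\ref{th:evidit}'s bound $E(m_i)\le d_i/T$. You are in fact more careful than the paper's two-line proof, which silently relies on the monotonicity in $m_i$ that you verify explicitly, and your observation that the $k=n$ term has base zero and therefore vanishes cleanly resolves the discrepancy between the summation ranges of~(\ref{eq:cfposur}) and~(\ref{equ:qi}).
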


\begin{proof}
Proposition~\ref{th:evidit} shows that $E({m_i}) \leq  d_i/T $.
The proposition is demonstrated by evaluating expression $(\ref{eq:cfposur})$ at $E({m_i}) = d_i/T$, see equation ~$(\ref{eq:jensen})$.
\end{proof}

Proposition~\ref{prop:2boundq} considers the concrete transmission probability $\rho_i = 1/MaxRnd$.

\noindent {\bf Proposition~\ref{prop:2boundq}}
{\em
Let ${{\rho}_i}=1/MaxRnd$. Equation~$(\ref{boundq})$ bounds from below the probability $q_i$.}

\begin{proof} In this proof, we use the letter $n$ instead of $MaxRnd$ for reason of space.
We replace ${{\rho}_i}$ with $1/n$ in equation $(\ref{equ:qi})$ to obtain equation~$(\ref{eq:fgklgfdkjl})$.
\begin{equation}
\label{eq:fgklgfdkjl}
q_i\ge \sum_{k=1}^n \frac{1}{n}\left(1-\frac{k}{n}\right)^{\frac{d_i}{T}}
\end{equation}

Equation~$(\ref{eq:psleuiruting})$ is more compact than equation~$(\ref{eq:fgklgfdkjl})$ and it is obtained by the fact that the function $(1-x)^s$ is convex.

\begin{eqnarray}
\label{eq:psleuiruting}
q_i \ge \\\nonumber \sum_{k=1}^n \frac{1}{n}\left(1-\frac{k}{n}\right)^{\frac{d_i}{T}}=\\\nonumber
\frac{1}{2n}\sum_{k=1}^n\left [\left(1-\frac{k}{n}\right)^{\frac{d_i}{T}}+\left(1-\frac{n-k+1}{n}\right)^{\frac{d_i}{T}}\right] \ge \\\nonumber
\text{ (convexity) } \frac{1}{n}\sum_{k=1}^n \left( 1-\frac{n+1}{2n}\right)^{\frac{d_i}{T}}= \\\nonumber
\left( 1-\frac{n+1}{2 n}\right)^{\frac{d_i}{T}}\nonumber
\end{eqnarray}

Another way to bound equation~$(\ref{eq:fgklgfdkjl})$ is by considering the decreasing function $y \to (1-y)^x$, as in equation~$(\ref{eq:xpaoiengnsktie})$.
\begin{align}
\label{eq:xpaoiengnsktie}
q_i&\ge\sum_{k=1}^n\frac{1}{n}\left(1-\frac{j}{n}\right)^\frac{d_i}{T}\ge\int_\frac{1}{n}^1\left(1-y\right)^\frac{d_i}{T}dy\\&=\frac{1}{\frac{d_i}{T}+1}\left(1-\frac{1}{n}\right)^{\frac{d_i}{T}+1}\nonumber
\end{align}
\end{proof}


%

\end{document}